\documentclass[journal]{IEEEtran}
\usepackage{textcomp}
\usepackage{xcolor}
\usepackage{amsmath,amsfonts}
\usepackage{algorithmic}
\usepackage{algorithm}
\usepackage{array}
\usepackage{amsthm}
\usepackage{amssymb}
\usepackage[caption=false,font=footnotesize,labelfont=rm,textfont=sf]{subfig}
\usepackage{colortab}
\usepackage{colortbl}
\usepackage{booktabs} 
\usepackage{threeparttable}
\usepackage{mathrsfs}
\usepackage{textcomp}
\usepackage{stfloats}
\usepackage{url}
\usepackage{verbatim}
\usepackage{graphicx}
\usepackage{setspace}
\usepackage{cite}
\usepackage{enumitem}
\usepackage{bm}
\begin{document}
\title{Unified framework for outage-constrained rate maximization in secure ISAC under various sensing metrics}
\author{Hancheng~Zhu, Zongze~Li, Yik-Chung~Wu
	\thanks{Hancheng Zhu and Yik-Chung Wu are with the Department of Electrical and Electronic Engineering, The University of Hong Kong, Hong Kong (email: u3006551@hku.hk, ycwu@eee.hku.hk). \emph{Corresponding author: Zongze Li, Yik-Chung~Wu}.}
	\thanks{Zongze Li is with the Peng Cheng Laboratory, Shenzhen 518038, China (e-mail: lizz@pcl.ac.cn).}}
\maketitle

\newtheorem{myLem}{\textbf{Lemma}}
\newtheorem{myPro}{\textbf{Proposition}}
\newtheorem{myCol}{\textbf{Corollary}}
\newtheorem{myExa}{\textbf{Example}}
\newtheorem{myRem}{\textbf{Remark}}
\newtheorem{myAss}{\textbf{Assumption}}
\newtheorem{myDef}{\textbf{Definition}}
\newtheorem{mySta}{\textbf{Statement}}
\newtheorem{myThe}{\textbf{Theorem}}
\newtheorem{myProp}{\textbf{Property}}

\begin{abstract}
	Integrated sensing and communication (ISAC) is poised to redefine the landscape of wireless networks by seamlessly combining data transmission and environmental sensing. However, ISAC systems remain susceptible to eavesdropping, especially under uncertainty in eavesdroppers' channel state information, which can lead to secrecy outages. On the other hand, diverse and complex sensing performance requirements further complicate resource optimization, often requiring custom solutions for each scenario. To this end, this paper introduces a unified optimization framework that holistically addresses both the worst-case user secrecy rate and the sum secrecy rate across multiple users. Besides putting the two commonly used objectives into a single but flexible objective function, the framework accurately controls secrecy outage probabilities while accommodating a broad spectrum of sensing constraints. To solve such a general problem, we integrate the sensing requirements into the objective function through an auxiliary variable. This enables efficient alternating optimization and the proposed approach is theoretically guaranteed to converge to at least a stationary point of the original problem. Extensive simulation results show that the proposed framework consistently achieves higher optimized secrecy rates under various sensing constraints compared to existing methods. These results underscore the proposed unified framework's superiority and versatility in secure ISAC systems.
\end{abstract}
\begin{IEEEkeywords}
	Integrated sensing and communication (ISAC), secure outage probability (SOP), max-min problem, secrecy rate optimization, sensing performance constraint.
\end{IEEEkeywords}

\section{Introduction}
In the development of 6G networks, emerging applications such as autonomous driving, smart cities and virtual reality require high-precision sensing as well as high-quality communications \cite{Tan:18}. As sensing systems and wireless communications share many similarities, the traditional way of separately designing these two systems may not be efficient enough in the future \cite{Bliss:14}. To address this challenge, integrated sensing and communication (ISAC) grows to be a promising technology due to its potential of saving resources from both systems without scarifying performance \cite{Shihang:24}.

In designing ISAC systems, a key is on balancing the performance of communications and sensing. In general, we can optimize the communication quality under sensing requirement constraints or vice versa. However, as in conventional communication only systems, due to the open nature of wireless channels, the transmitted information from ISAC systems could be easily exposed to eavesdroppers, making the system vulnerable to unauthorized access and data breaches \cite{Zhijun:24}. Therefore, secure ISAC has received increasing attention recently \cite{Nanchi:23,Wai-Yiu:23}. 

In particular, to mitigate the adverse effects of eavesdroppers in secure ISAC, the most commonly employed strategy is through beamforming design \cite{Hou:24,Yuemin:24,Zhutian:22}. Although these results are promising, the assumption of perfect channel state information (CSI) of the eavesdroppers' channels is often overly optimistic and impractical in real-world scenarios. Consequently, most existing studies focus on addressing the physical layer security (PLS) problem in ISAC systems with estimated eavesdropper channels. For example, \cite{Xingyu:24,Hanbo:24,Yuchen:25} primarily aim to optimize beamforming strategies to reduce the detection probability of one or multiple eavesdroppers, thereby enhancing the security and reliability of the communication system. As a step further, techniques leveraging artificial noise \cite{SuNanchi:21,Peng:22} and non-orthogonal multiple access (NOMA) \cite{Yinhong:24,Ziwei:25} have been proposed to improve secrecy performance, particularly in environments characterized by jamming or scattering.
	
Recently, emerging elements such as unmanned aerial vehicles (UAVs) \cite{Yue:25} and intelligent reflective surfaces (IRS) \cite{Chen:25,Meng:24} have also been integrated into secure ISAC systems to counteract threats posed by multiple eavesdroppers. In these works, the eavesdropper channels are modeled with bounded errors \cite{SuNanchi:21,Xingyu:24,Hanbo:24,Yuchen:25,Yinhong:24,Ziwei:25,Yue:25,Chen:25,Meng:24}, or Gaussian error distributions \cite{Peng:22,Yuchen:25} around the true channels. For the bounded error case, \cite{Hanbo:24} considers the average channel uncertainties through their sample mean. Moreover, robust system designs that account for the worst eavesdroppers' channel uncertainties are considered and solved either through linear matrix inequalities (LMIs) \cite{Xingyu:24,Yuchen:25}, S-procedure \cite{SuNanchi:21,Yinhong:24,Ziwei:25,Chen:25,Meng:24} or quadratic reformulation \cite{Yue:25}. When Gaussian errors are assumed, conservative methods such as Bernstein-type inequalities (BTI) \cite{Peng:22,Yuchen:25} are employed to ensure security constraints are satisfied. Although these estimated CSI assumptions are more realistic than the idealized perfect CSI, obtaining even the estimated channels of passive and covert eavesdroppers remains infeasible in many practical situations \cite{Wei:17}. This challenge has been recognized in PLS in other wireless systems \cite{Xi:15,Qi:14,Hui-Ming:15,Haitao:23}, yet corresponding research in the context of secure ISAC systems remains limited.

On the other hand, in a multi-user setting, balancing performance among users is a critical challenge. Common objectives include worst-user rate maximization \cite{Wenyi:24,Salem:2025,Xipeng:24,Xuehua:25,Hou:24} and weighted sum rate maximization \cite{Zechen:24,Ruiwei:23,Tejaswini:24,Yuemin:24}. For example, \cite{Wenyi:24,Zechen:24} leverage the mobility of UAVs to mitigate eavesdroppers' risk via strategically planning UAV trajectories and beamforming at the base station (BS), thereby enhancing user transmission rates. Additionally, \cite{Xuehua:25} employs a two-stage transmission protocol: initial environmental sensing followed by data transmission, to reduce the impact of eavesdroppers. Although both worst-user rate maximization and weighted sum rate maximization aim to optimize data transmission, existing studies typically treat these as independent problems. Notably, worst-user rate maximization is generally considered to be more challenging than sum rate maximization. This complexity arises from the discrete nature of selection variables in worst-user rate maximization problems. Current methods address this by employing deep reinforcement learning \cite{Xipeng:24}, or replacing the discrete selection with auxiliary variables \cite{Wenyi:24,Salem:2025,Xuehua:25,Hou:24} and then convert it into a maximization framework. In contrast, sum rate maximization does not have such discrete handling steps, resulting in markedly different algorithmic approaches.

Apart from the differences in the objective function, sensing constraints in ISAC systems can also be markedly distinct. For instance, to ensure the desired performance of the sensing beampattern in radar-based ISAC systems under uncertain target location, \cite{Fan:18,Xiang:20,Xiang:22b} employ beampattern matching as a sensing metric. On the other hand, the signal-to-interference-plus-noise ratio (SINR) has been predominantly used as a measure of sensing signal quality and target tracking performance in various scenarios, including multiple-target single-user \cite{Chen:22a}, single-target multi-user \cite{Yuanhan:22}, and multiple-target multi-user systems \cite{pritzker2022transmit}, as well as cooperative ISAC networks \cite{Na:22} and two-cell interfering ISAC networks \cite{Yiqing:22}. Due to the significant mathematical disparities among these sensing constraints, existing ways to manage these sensing requirements also vary substantially, ranging from successive convex approximation (SCA) \cite{Fan:18,Xiang:22b,Yiqing:22}, semidefinite relaxation (SDR) \cite{Xiang:20,Yuanhan:22,Na:22} to Taylor expansion \cite{Chen:22a}.

The diversity in the objective functions and sensing constraints results in numerous combinations of different scenarios. Each of them typically requires separate investigation in the current practice. Not only does this demand considerable efforts, but the variety of ideas and techniques used across different studies also makes direct comparison between scenarios challenging. This paper aims to develop a unified optimization framework for these different secure ISAC scenarios. Such a framework not only enhances the adaptability of ISAC solutions across diverse sensing scenarios, but also enables fair and straightforward comparisons among different scenarios with minimal additional effort.

More specifically, to combine the worst case rate maximization and weighted sum rate maximization into a unified formulation, this paper for the first time proposes a general max-min problem which reduces to worst case rate maximization if the constraint of the inner minimization is a simplex. On the other hand, if the constraint of the inner minimization variables is a singleton, the corresponding problem becomes the weighted sum rate maximization. Furthermore, it is found that under such general secrecy rate objective function, the secrecy outage constraint can be tackled without approximation. This results in tightly realized target probability, which in turns leave more resource for further optimizing the secrecy rate. 

To accommodate various sensing performance constraints in a unified way, we introduce a penalty variable and move the sensing constraint to the objective function. In addition to facilitating subsequent algorithm derivation, with the sensing constraint moved to the objective function, the feasible region is enlarged and better iteration trajectories are expected to be found. This is verified theoretically that the transformed problem share the same optimal solution as in the original problem with explicit sensing constraint, and a stationary point of the former must be at least a stationary point of the latter. 

After the above transformation, since the remaining constraints are decoupled in the optimization variables, we further propose a projected gradient based alternative optimization (AO) with its solution quality guarantee theoretically proved. The advantage of the proposed unified framework is that when the sensing metric or objective function changes, only the gradient of the sensing metric or the projection of the auxiliary variables needs to be modified, rather than redesigning the whole algorithm. Leveraging the highly efficient gradient descent under simple constraints, the proposed method maintains a low computational complexity in each iteration. Moreover, since the proposed algorithm is non-monotonic, it can traverse the valleys of the objective function’s rugged surface to discover better solutions. Simulation results show that the proposed framework outperforms other existing algorithms in both running time and optimized secrecy rate under various sensing metric settings.

The contributions of this paper are summarized as follows.

\begin{enumerate}[leftmargin=0.5cm]
	\item We propose a framework incorporating the redundant information technique and artificial noise to enhance security of ISAC under covert eavesdroppers. We only assume the knowledge of distribution of eavesdroppers' channels, but not their realizations. This leads to an intricate trade-off between secrecy outage probability (SOP), secrecy rate optimization and sensing performance.
	
	\item The proposed framework holistically maximizes the worst-case user secrecy rate and the sum secrecy rate across multiple users, with both cases subsumed within a unified objective function. Moreover, the framework accurately controls secrecy outage probabilities while accommodating a broad spectrum of sensing constraints.
	
	\item Unlike traditional approaches that rely on convex relaxation to handle sensing constraints, we propose a gradient-based iterative algorithm. This method leverages gradient information, combined with bisection methods and projection operations, to achieve fast convergence. Additionally, we theoretically establish the solution quality of the proposed algorithm.
	
	\item We demonstrate the proposed unified framework with three different ISAC constraints as examples. Extensive simulation results show that the proposed framework consistently achieves higher optimized secrecy rates under various sensing constraints compared to existing methods. These results underscore its superiority and versatility in secure ISAC systems.
\end{enumerate} 

The rest of the paper is organized as follows. The secure ISAC system model is introduced and the worst user secrecy rate maximization is formulated in Section II. The problem formulation is generalized to unify worst user rate maximization and sum rate maximization under various sensing metrics in Section III. The proposed optimization framework and its solution quality analysis are presented in Section IV. The specific computation steps of the proposed framework under different sensing metrics are derived in Section V. Simulation results are provided in Section VI and conclusions are drawn in Section VII.

Notation: Symbols for vectors (lower case) and matrices (upper case) are presented in boldface. The transpose and conjugate transpose of a matrix are denoted by ${\left(  \cdot  \right)^T}$ and ${\left(  \cdot  \right)^H}$, respectively. The notation $\left\|  \cdot  \right\|_2$ represents the L2 norm of a vector, while ${{\bm I}_N}$ is the identity matrix with dimension $N \times N$. A complex-valued vector ${\bm x}\sim{\rm{{\cal C}{\cal N}}}\left( {{\bm 0},{\bm C}} \right)$ means ${\bm x}$ is complex Gaussian distributed with zero mean and covariance matrix $\bm C$.

\section{System Model and worst user secrecy rate maximization}

\begin{figure}[t]
	\centering
	\includegraphics[width=0.5\linewidth]{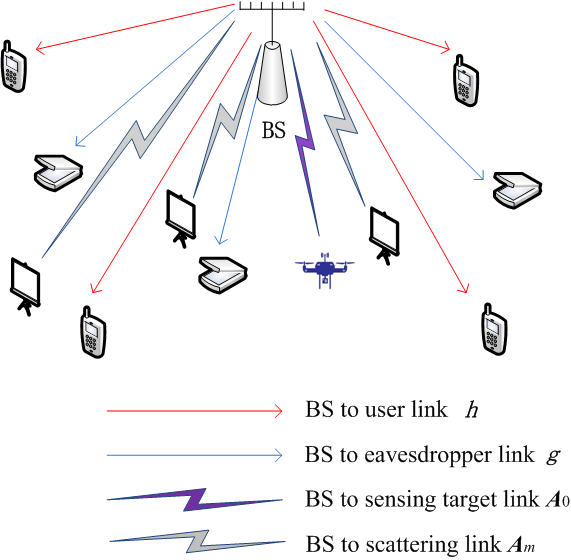}
	\caption{Secure ISAC network model}
	\label{fig1}
\end{figure}
We consider a ISAC network with a BS equipped with $N$ antennas as shown in Fig. \ref{fig1}. The system includes $I$ communication users, $J$ eavesdroppers, each with a single antenna, and the channels from the BS to the $i^{th}$ user and the $j^{th}$ eavesdropper are denoted by ${{\bm h}_i} \in {{\mathbb C}^{N \times 1}}$ and ${{\bm g}_j} \in {{\mathbb C}^{N \times 1}}$, respectively. Besides, this network has a sensing target and $M$ scatterings. In practice, the BS only has knowledge ${{\bm g}_j} \sim {\cal CN}\big( {{\bm 0},\rho _j^2{{\bm I}_N}} \big)$, but not the exact value of ${{\bm g}_j}$. The $\rho _j$ here represents the large-scale fading coefficient of the $j^{th}$ eavesdropper and it could be known if we know its approximate location. This assumption is commonly used in PLS to model the passive eavesdropping channels \cite{Wei:17,Xi:15,Qi:14,Hui-Ming:15}. Furthermore, to avoid electromagnetic coupling, the transmit antennas are typically spaced by at least half a wavelength. Therefore, the signal paths arriving at or leaving from each antenna have negligible correlation, which leads to 0 in the off-diagonal elements of the channel covariance matrix. On the other hand, the assumption of the same diagonal elements of the covariance matrix is often valid when the BS is equipped with a uniform linear array (ULA) or uniform planar array (UPA), where all antennas have the same characteristics and are spaced symmetrically \cite{Giovanni:25}. The above deployment ensures that no single antenna consistently experiences a stronger average signal than the others.

Let ${s_i} \in {\mathbb C}$ be the information symbol intended for the $i^{th}$ user, with ${s_i} \sim {\cal CN}\big( {0,1} \big)$ and is independent among users. With the beamforming vectors $\left\{ {{{\bm w}_i}} \right\}_{i = 1}^I$ for communication, artificial beamformer $\bm w_0$ and corresponding AN ${s_0} \sim {\cal CN}\big( {0,1} \big)$ for sensing and countering eavesdroppers, the transmitted signal of the BS is given by $\sum\nolimits_{l = 1}^I {{{\bm w}_l}{s_l}}  + {{\bm w}_0}s_0$. Correspondingly, the received signals at the $i^{th}$ user and the $j^{th}$ Eve are respectively given by
\begin{equation}\label{eq:1}
		{y_i} = {\bm h}_i^T\big( {\textstyle\sum\nolimits_{l = 1}^I {{{\bm w}_l}{s_l}}  + {{\bm w}_0}s_0} \big) + {n_i},
	\end{equation}
	\begin{equation}\label{eq:2}
		{e_j} = {\bm g}_j^T\big( {\textstyle\sum\nolimits_{l = 1}^I {{{\bm w}_l}{s_l} + {{\bm w}_0}s_0} } \big) + {\tilde n_j},
\end{equation}

\noindent where ${n_i}\sim{\cal CN}\big( {0,\sigma _i^2} \big)$ and ${\tilde n_j}\sim{\cal CN}\big( {0,\varsigma _j^2} \big)$ represent the Gaussian noises at the $i^{th}$ user and the $j^{th}$ Eve with variance $\sigma _i^2$ and $\varsigma _j^2$, respectively. 

Moreover, the echo signal received at the BS from the sensing target and scatterings is given by
\begin{equation}\label{eq:3}
		{\bm r} = \textstyle\sum\nolimits_{m = 0}^M {{{\bm A}_m}\big( {\sum\nolimits_{l = 1}^I {{{\bm w}_l}{s_l} + {{\bm w}_0}s_0} } \big)}  + {{\bm n}_{BS}},
\end{equation}

\noindent where ${{\bm n}_{BS}}\sim{\cal CN}\big( {0,\sigma _{BS}^2{\bm I}_N} \big)$ represents the Gaussian noises at the BS with variance $\sigma _{BS}^2$, $\left\{ {{{\bm A}_m}} \right\}_{m = 0}^M$ are the sensing channels of the target ($\bm A_0$) and scatterings ${\left\{ {{{\bm A}_m}} \right\}_{m \ne 0}}$. In particular, ${{\bm A}_m} = {\beta _m}{\bm a}\left( {{\theta _m}} \right){{\bm a}^H}\left( {{\theta _m}} \right)$, where $\beta_m$ is the round-trip channel coefficient including the path loss and the radar cross-section (RCS) for the target ($m=0$) or scatterings ($m \neq 0$), and
\begin{equation}\label{eq:4}
	{\bm a}\left( \theta_m  \right) = {\big[ {1,{e^{j2\pi \zeta \sin \theta_m }}, \ldots ,{e^{j2\pi \left( {N - 1} \right)\zeta \sin \theta_m }}} \big]^T}
\end{equation}
is the array steering vector of the BS with the angle from the BS to the sensing target denoted as ${\theta _0}$ and that to the $m^{th}$ scattering given by ${{\theta _m}}$. The symbol $\zeta$ is the adjacent antennas spacing at the BS normalized by the wavelength~\cite{Boxiang:24}. Note that the channels ${\bm h}_i$, $\bm A_0$ are estimated in the pilot transmission phase \cite{Zhenyao:23,Ziang:23} and the scattering channels ${\left\{ {{{\bm A}_m}} \right\}_{m \ne 0}}$ in the environment can be estimated based on field measurements or radio maps \cite{Suzhi:19,Yang:25}.

With a receive filter $\bm u \in {\mathbb C}^{N \times 1}$ adopted at the BS, the sensing SINR at the BS is
\begin{equation}\label{eq:5}
	{\gamma _{Rad}} = \frac{{\sum\nolimits_{l = 0}^I {\left| {{{\bm u}^T}{{\bm A}_0}{{\bm w}_l}} \right|_2^2} }}{{\sum\nolimits_{l = 0}^I {\sum\nolimits_{m = 1}^M {{{\left| {{{\bm u}^T}{{\bm A}_m}{{\bm w}_l}} \right|}^2}} }  + \sigma _{BS}^2\left\| {\bm u} \right\|_2^2}}.
\end{equation}
On the other hand, the achievable rate of the $j^{th}$ Eve for monitoring the $i^{th}$ user (also known as the eavesdropping rate) is given by 
\begin{equation}\label{eq:6}
	{C_{ji}} = {\log _2}\left( {1 + \frac{{{{\left| {{\bm g}_j^T{{\bm w}_i}} \right|}^2}}}{{\sum\nolimits_{l = 0,l \ne i}^I {{{\left| {{\bm g}_j^T{{\bm w}_l}} \right|}^2}}  + \varsigma _j^2}}} \right).
\end{equation}
Since the BS does not have perfect knowledge of the eavesdroppers' channels, the value of $C_{ji}$ is uncertain from the perspective of the BS. Consequently, a secrecy outage event occurs at the BSs when $C_{ji}$ exceeds the redundancy rate of the $i^{th}$ user, denoted by $D_{ji}$, and SOP of the user $i$ due to Eve $j$ is expressed as
\begin{equation}\label{eq:7}
	{\rm SOP}\;:\;p_{ji}^{so} = \Pr \left\{ {{D_{ji}} \le {C_{ji}}} \right\}.
\end{equation}

Our objective is to maximize the secrecy rate of the worst communication user while maintaining the secrecy outage probability and the sensing performance of the target. Correspondingly, the optimization problem is formulated as
\begin{subequations}\label{eq:8}
	\begin{align}
		\mathop {\max }\limits_{\left\{ {{{\bm w}_l}} \right\}_{l = 0}^I,\left\{ {{D_{ji}}} \right\},{\bm u}} \mathop {\min }\limits_{j,i}\;\;& \big\{ {{{ {{{\rm{{\cal R}}}_i}\big( \big\{ {{{\bm w}_l}} \big\}_{l = 0}^I \big) - {D_{ji}}} } }} \big\}\label{eq:8a}\\
		s.t.\;\;&p_{ji}^{so} \le {\eta _i},\;\;\forall j,i\label{eq:8b}\\
		&\sum\nolimits_{l = 0}^I {{{\left\| {{{\bm w}_l}} \right\|}_2^2}}  \le {P_{BS}},\label{eq:8c}\\
		&{\gamma _{Rad}} \ge {\Gamma },\label{eq:8d}
	\end{align}
\end{subequations}
where ${{\rm{{\cal R}}}_i}\big( \left\{ {{{\bm w}_l}} \right\}_{l = 0}^I \big) = {\log _2}\left( {1 + \frac{{{{\left| {{\bm h}_i^T{{\bm w}_i}} \right|}^2}}}{{\sum\nolimits_{l = 0,l \ne i}^I {{{\left| {{\bm h}_i^T{{\bm w}_l}} \right|}^2}}  + \sigma _i^2}}} \right)$. Constraint \eqref{eq:8b} states that the secrecy outage probability of the $i^{th}$ user monitored by the $j^{th}$ eavesdropper should be smaller than a target threshold ${\eta _i} \in \big( {0,1} \big)$. Constraint \eqref{eq:8c} is the total power budget of the BS and \eqref{eq:8d} requires the sensing SINR to be at least $\Gamma$ to guarantee sensing performance. Notice that the SOP constraint \eqref{eq:8b} is related to the beamformers inside the probability function. But the following lemma could be adopted to convert it into a deterministic form.
\begin{myLem} \cite[Lem 2]{Yang:20}
	Suppose ${{\bm g}_{j}} \sim {\rm{{\cal C}{\cal N}}} \big( {{\bm 0},\rho _{j}^2{{\bm I}_N}} \big), \forall j$, the SOP constraint \eqref{eq:8b} can be equivalently transformed into
	\begin{equation}\label{eq:9}
		\begin{split}
			&\ln {\eta _i} + \frac{{\left( {{2^{{D_{ji}}}} - 1} \right)}}{{\left( {{{\rho _j^2} \mathord{\left/
								{\vphantom {{\rho _j^2} {\varsigma _j^2}}} \right.
								\kern-\nulldelimiterspace} {\varsigma _j^2}}} \right)\left\| {{{\bm w}_i}} \right\|_2^2}} + \sum\limits_{l = 0,l \ne i}^I {\ln \left( {1 + \left( {{2^{{D_{ji}}}} - 1} \right)\frac{{\left\| {{{\bm w}_l}} \right\|_2^2}}{{\left\| {{{\bm w}_i}} \right\|_2^2}}} \right)} \\
			&\ge 0,
		\end{split}
	\end{equation}
	where ${{\rho _j^2} \mathord{\left/
			{\vphantom {{\rho _j^2} {\varsigma _j^2}}} \right.
			\kern-\nulldelimiterspace} {\varsigma _j^2}}$ is interpreted as the statistical signal-to-noise ratio (SNR) for the $j^{th}$ eavesdropper.
\end{myLem}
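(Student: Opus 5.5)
Since ${\bm g}_j\sim{\cal CN}({\bm 0},\rho_j^2{\bm I}_N)$ is circularly symmetric, the scalars $x_l \triangleq {\bm g}_j^T{\bm w}_l$, $l=0,\dots,I$, are jointly complex Gaussian with zero mean and covariance $\mathbb{E}[x_l x_k^*] = \rho_j^2 {\bm w}_l^T{\bm w}_k^*$. For generic beamformers the $x_l$ are therefore correlated, and I do not see a clean way around this, so I will proceed under the working assumption, implicit in the closed form \eqref{eq:9}, that the relevant interference terms may be treated as independent. Then $|x_l|^2$ is exponentially distributed with mean $\rho_j^2\|{\bm w}_l\|_2^2$. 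This assumption must be flagged in the write-up, and checking it against the setting of \cite{Yang:20} is the main obstacle I expect.

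Next I rewrite the outage event. Set $\tau = 2^{D_{ji}}-1$. Since $\log_2(1+\cdot)$ is increasing, $D_{ji}\le C_{ji}$ is equivalent to
\begin{equation*}
|x_i|^2 \ge \tau\Big(\textstyle\sum_{l\ne i}|x_l|^2+\varsigma_j^2\Big).
\end{equation*}
I condition on the interference $Z=\sum_{l\ne i}|x_l|^2$ and use the exponential tail of $|x_i|^2$:
\begin{equation*}
\Pr\{\cdot\mid Z\}=\exp\!\Big(-\tfrac{\tau(Z+\varsigma_j^2)}{\rho_j^2\|{\bm w}_i\|_2^2}\Big).
\end{equation*}
Averaging over $Z$ factorizes into the product of moment generating functions of the independent exponentials $|x_l|^2$, each evaluated at $-\tau/(\rho_j^2\|{\bm w}_i\|_2^2)$. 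Each factor equals $\big(1+\tau\|{\bm w}_l\|_2^2/\|{\bm w}_i\|_2^2\big)^{-1}$. This gives
\begin{equation*}
p^{so}_{ji}=\exp\!\Big(-\tfrac{\tau}{(\rho_j^2/\varsigma_j^2)\|{\bm w}_i\|_2^2}\Big)\prod_{l\ne i}\Big(1+\tau\tfrac{\|{\bm w}_l\|_2^2}{\|{\bm w}_i\|_2^2}\Big)^{-1}.
\end{equation*}

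Finally, $\ln$ is strictly increasing, so $p^{so}_{ji}\le\eta_i$ is equivalent to $\ln\eta_i-\ln p^{so}_{ji}\ge0$. Expanding $-\ln p^{so}_{ji}$ gives exactly the two terms in \eqref{eq:9}, which yields the equivalence (the case ${\bm w}_i={\bm 0}$ is excluded or handled by continuity).
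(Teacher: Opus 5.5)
The paper gives no derivation of its own (it only cites \cite{Yang:20}), so your argument has to stand alone. It breaks exactly at the step you flag: the independence of the $|x_l|^2$ is not a removable technicality. A single channel ${\bm g}_j$ multiplies every beam, so $(x_0,\dots,x_I)$ is circularly symmetric Gaussian with $\mathbb{E}[x_lx_k^*]=\rho_j^2{\bm w}_k^H{\bm w}_l$. The $|x_l|^2$ are therefore independent, and your MGF factorization valid, only when the beamformers are mutually orthogonal.

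The exact computation goes through a quadratic form. With $\tilde{\bm g}=\bar{\bm g}_j\sim{\cal CN}({\bm 0},\rho_j^2{\bm I}_N)$ and $\tau=2^{D_{ji}}-1$, the outage event is $\tilde{\bm g}^H{\bm Q}\tilde{\bm g}\ge\tau\varsigma_j^2$, where ${\bm Q}={\bm w}_i{\bm w}_i^H-\tau\sum_{l\ne i}{\bm w}_l{\bm w}_l^H$. By eigenvalue interlacing, ${\bm Q}$ has at most one positive eigenvalue $\lambda_+$. Write its negative eigenvalues as $-\mu_k$ and condition as you did; this is now legitimate, because the coordinates of $\tilde{\bm g}$ in the eigenbasis of ${\bm Q}$ are independent. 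One gets
\[
p^{so}_{ji}=\exp\Big(-\frac{\tau\varsigma_j^2}{\rho_j^2\lambda_+}\Big)\prod_k\Big(1+\frac{\mu_k}{\lambda_+}\Big)^{-1},
\]
with $p^{so}_{ji}=0$ when ${\bm Q}\preceq{\bm 0}$. This reduces to your expression, and hence to \eqref{eq:9}, when $\lambda_+=\|{\bm w}_i\|_2^2$ and $\mu_l=\tau\|{\bm w}_l\|_2^2$, i.e.\ in the orthogonal case.

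The discrepancy cannot be absorbed into a one-sided bound either. Take $I=1$, ${\bm w}_0=a{\bm w}_1$ and $c=\tau\varsigma_j^2/(\rho_j^2\|{\bm w}_1\|_2^2)$. The true SOP is $\exp(-c/(1-\tau|a|^2))$ if $\tau|a|^2<1$ and $0$ otherwise, whereas your formula gives $e^{-c}/(1+\tau|a|^2)$.

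For $\tau|a|^2=1/2$, $c=0.05$ and $\eta_i=0.8$, the true SOP is about $0.905$, so \eqref{eq:8b} fails. Your formula gives about $0.634$, so \eqref{eq:9} holds. For $\tau|a|^2=1$, small $c$ and $\eta_i=0.1$, the opposite happens. Hence \eqref{eq:9} and \eqref{eq:8b} are not equivalent for general beamformers (neither implies the other), and the statement cannot be proved at that generality.

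What your derivation does establish rigorously is the equivalence under an additional hypothesis: either the beamformers $\{{\bm w}_l\}_{l=0}^I$ are mutually orthogonal, or each interference term passes through its own independent Rayleigh channel. State the lemma with that hypothesis rather than as an unchecked working assumption.
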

Although the probability constraint \eqref{eq:8b} can be equivalently converted into a deterministic form in \eqref{eq:9}, it is still complex since the the beamformers are coupled in a logarithm and fraction form. To handle this transformed SOP constraint, \cite{Wei-Chiang:15,Sheng:18,Zhichao:18} introduce $J \times I$ auxiliary constraints for substituting ${{\left( {{2^{{D_{ji}}}} - 1} \right)} \mathord{\left/
		{\vphantom {{\left( {{2^{{D_{ji}}}} - 1} \right)} {\left\| {{{\bm w}_i}} \right\|_2^2}}} \right.
		\kern-\nulldelimiterspace} {\left\| {{{\bm w}_i}} \right\|_2^2}}$ with a new auxiliary variable $t_{ji}$ and then use SCA to convexify all these introduced constraints. Another work \cite{Wai-Yiu:23} uses the matrix form of the beamformers ${{\bm W}_i} = {{\bm w}_i}{\bm w}_i^H$, and converts all the quadratic term $\left\| {{{\bm w}_l}} \right\|_2^2$ as a linear combination of ${\rm Tr}\left( {{{\bm W}_l}} \right)$. Then, SDR can be adopted through further introducing convexified auxiliary constraints and temporarily discarding the rank-1 constraint. In general, these existing approaches involve exchanging \eqref{eq:9} for a series of simpler, though more conservative, convexified constraints. Unfortunately, this often results in allocating limited resources to merely surpass the target outage probability, rather than optimizing the objective function~\cite{Kun-Yu:11}. 

To this end, we introduce a method for handling \eqref{eq:9} without approximation or relaxation, thus tightly realize the target outage probability. In particular, it is recognized that the left hand side of \eqref{eq:9} is a monotonic increasing function of $D_{ji}$. On the other hand, to maximize the secrecy rate \eqref{eq:8a}, $D_{ji}$ should be minimized. Therefore, the optimal value of $D_{ji}$ that maximizes the secrecy rate while maintaining the SOP constraint is given by \eqref{eq:9} with equality holds. As a result, the optimal $D_{ji}$ can be found by the bisection method when $\left\{ {{{\bm w}_l}} \right\}_{l = 0}^I$ are fixed. By denoting ${D_{ji}}$ as ${D_{ji}}\big( \left\{ {{{\bm w}_l}} \right\}_{l = 0}^I \big)$, problem \eqref{eq:8} becomes
\begin{equation}\label{eq:10}
	\begin{split}
		\mathop {\max }\limits_{\left\{ {{{\bm w}_l}} \right\}_{l = 0}^I,{\bm u}} \mathop {\min }\limits_i \;\;&\big\{ {{{\rm{{\cal R}}}_i}\big( {\big\{ {{{\bm w}_l}} \big\}_{l = 0}^I} \big) - \mathop {\max }\limits_j {D_{ji}}\big( {\big\{ {{{\bm w}_l}} \big\}_{l = 0}^I} \big)} \big\}\\
		s.t.\;\;&\sum\nolimits_{l = 0}^I {{{\left\| {{{{\bm w}}_l}} \right\|}_2^2}}  \le {P_{BS}},\\
		&{\gamma _{Rad}} \ge {\Gamma }.
	\end{split}
\end{equation}
From \eqref{eq:10}, it is clear that the beamformers serve both purposes of enhancing security and target sensing. It can be anticipated that a higher sensing SINR requirement $\Gamma$ would decrease the secrecy rate, and vice versa.

Through expressing ${D_{ji}}$ as an implicit function of beamformers $\left\{ {{{\bm w}_l}} \right\}_{l = 0}^I$, the SOP constraint does not explicitly appear in \eqref{eq:10}. Since the optimal ${{D_{ji}}\big( \left\{ {{{\bm w}_l}} \right\}_{l = 0}^I \big)}$ is given by \eqref{eq:9} with equality holds, a large ${{{\rho _j^2} \mathord{\left/
			{\vphantom {{\rho _j^2} {\varsigma _j^2}}} \right.
			\kern-\nulldelimiterspace} {\varsigma _j^2}}}$ would result in a large redundancy rate ${{D_{ji}}\big( \left\{ {{{\bm w}_l}} \right\}_{l = 0}^I \big)}$. Therefore, the largest redundancy rate ${D_i}\big( {\left\{ {{{\bm w}_l}} \right\}_{l = 0}^I} \big): = \mathop {\max }\limits_j \big\{ {{D_{ji}}\big( {\left\{ {{{\bm w}_l}} \right\}_{l = 0}^I} \big)} \big\}$ of user $i$ due to all eavasdroppers satisfies
\begin{equation}\label{eq:11}
	\begin{split}
		&\ln {\eta _i} + {{\left( {{2^{{D_i}\left( {\left\{ {{{\bm w}_l}} \right\}_{l = 0}^I} \right)}} - 1} \right)} \mathord{\left/
				{\vphantom {{\left( {{2^{{D_i}\left( {\left\{ {{{\bm w}_l}} \right\}_{l = 0}^I} \right)}} - 1} \right)} {\left[ {\mathop {\max }\limits_j \left( {{{\rho _j^2} \mathord{\left/
											{\vphantom {{\rho _j^2} {\varsigma _j^2}}} \right.
											\kern-\nulldelimiterspace} {\varsigma _j^2}}} \right)\left\| {{{\bm w}_i}} \right\|_2^2} \right]}}} \right.
				\kern-\nulldelimiterspace} {\left[ {\mathop {\max }\limits_j \left( {{{\rho _j^2} \mathord{\left/
								{\vphantom {{\rho _j^2} {\varsigma _j^2}}} \right.
								\kern-\nulldelimiterspace} {\varsigma _j^2}}} \right)\left\| {{{\bm w}_i}} \right\|_2^2} \right]}}\\
		&+ \sum\limits_{l = 0,l \ne i}^I {\ln \left( {1 + \left( {{2^{{D_i}\left( {\left\{ {{{\bm w}_l}} \right\}_{l = 0}^I} \right)}} - 1} \right)\frac{{\left\| {{{\bm w}_l}} \right\|_2^2}}{{\left\| {{{\bm w}_i}} \right\|_2^2}}} \right)}  = 0,
	\end{split}
\end{equation}
which coincides with the intuition that the eavesdropper with the highest statistical SNR is the one that determines what redundancy rate a user needs in order to maintain the outage probability. Then problem \eqref{eq:10} can be simplified as
\begin{subequations}\label{eq:12}
	\begin{align}
		\mathop {\max }\limits_{{\left\{ {{{\bm w}_l}} \right\}_{l = 0}^I},{\bm u}} \mathop {\min }\limits_{i}\;\;& \big\{ {{{ {{{\rm{{\cal R}}}_i}\big( {\big\{ {{{\bm w}_l}} \big\}_{l = 0}^I} \big) - {D_{i}\big( {\big\{ {{{\bm w}_l}} \big\}_{l = 0}^I} \big)}} } }} \big\}\label{eq:12a}\\
		s.t.\;\;&\sum\nolimits_{l = 0}^I {{{\left\| {{{\bm w}_l}} \right\|}_2^2}}  \le {P_{BS}},\label{eq:12b}\\
		&{\gamma _{Rad}} \ge {\Gamma }.\label{eq:12c}
	\end{align}
\end{subequations}

According to the formulation of sensing SINR \eqref{eq:5}, the amplitude of the receive filter $\bm u$ does not affect the value of ${\gamma _{rad}}$, therefore, the sensing constraint \eqref{eq:12c} can be equivalently transformed into
\begin{subequations}\label{eq:13}
	\begin{align}
		&\sum\nolimits_{l = 0}^I {\left| {{{\bm u}^T}{{\bm A}_0}{{\bm w}_l}} \right|_2^2}  - \Gamma \sum\nolimits_{l = 0}^I {\sum\nolimits_{m = 1}^M {{{\left| {{{\bm u}^T}{{\bm A}_m}{{\bm w}_l}} \right|}^2}} }  - \Gamma \sigma _{BS}^2 \nonumber \\
		&\ge 0,\label{eq:13a}\\
		&\left\| {\bm u} \right\|_2^2 = 1.\label{eq:13b}
	\end{align}
\end{subequations}
Hence, the worst case secrecy rate maximization under sensing SINR constraint is equivalent to
\begin{subequations}\label{eq:14}
	\begin{align}
		\mathop {\max }\limits_{{\left\{ {{{\bm w}_l}} \right\}_{l = 0}^I},{\bm u}} \mathop {\min }\limits_i \;\;&\left\{ {{{\rm{{\cal R}}}_i}\left( {\left\{ {{{\bm w}_l}} \right\}_{l = 0}^I} \right) - {D_i}\left( {\left\{ {{{\bm w}_l}} \right\}_{l = 0}^I} \right)} \right\}\label{eq:14a}\\
		s.t.\;\;&\sum\nolimits_{l = 0}^I {\left\| {{{\bm w}_l}} \right\|_2^2}  \le {P_{BS}},\label{eq:14b}\\
		&\eqref{eq:13a},\eqref{eq:13b}. \nonumber
	\end{align}
\end{subequations}
It is emphasized that \eqref{eq:14} covers the common worst user's achievable rate maximization if eavesdroppers do not exist, which corresponds to ${D_i}\left( {\left\{ {{{\bm w}_l}} \right\}_{l = 0}^I} \right) \equiv 0$. Moreover, in ISAC, there exists sensing performance criteria other than SINR constraint. The above two observations inspire us to extend \eqref{eq:14} in the following section.

\textbf{Remark 1 (SOP vs intercept probability):} Beside SOP, another commonly used outage probability is the intercept probability \cite{Jiaxin:15,Yulong:13}, which is given by
\begin{equation}
	\Pr \big\{ {{{{{\cal R}}}_i}\big( {\left\{ {{{\bm w}_l}} \right\}_{l = 0}^I} \big) \leqslant {C_{ji}}} \big\}.
\end{equation}
It is a metric to quantify the likelihood of a successful interception of the entire information transmitted with ${{\rm{{\cal R}}}_i}\big( \left\{ {{{\bm w}_l}} \right\}_{l = 0}^I \big)$. On the other hand, SOP in \eqref{eq:7} measures the probability of the information leakage of the $i^{th}$ user. When the eavesdropping rate $C_{ji}$ exceeds the redundant information $D_{ji}$, it means the confidentiality of the transmitted information cannot be guaranteed. However, it is important to note that information leakage does not necessarily mean that all useful information has been intercepted by the eavesdroppers. In general, since $D_{ji}$ is the amount of redundant information added to a message to secure it from eavesdroppers, ${{{{{\cal R}}}_i}\left( {\left\{ {{{\bm w}_l}} \right\}_{l = 0}^I} \right) > {D_{ji}}}$, which makes
\begin{equation}\label{eq:Remark1}
	\Pr \left\{ {{D_{ji}} \leqslant {C_{ji}}} \right\} \geqslant \Pr \left\{ {{{{{\cal R}}}_i}\left( {\left\{ {{{\bm w}_l}} \right\}_{l = 0}^I} \right) \leqslant {C_{ji}}} \right\}.
\end{equation}
The implication of \eqref{eq:Remark1} is that imposing constraint on SOP would also restrict intercept probability. In another perspective, SOP provides a more stringent measure of security. Together with the fact that employing redundant information is a common strategy in secure and covert transmission \cite{Baogang:24,Jingke:25,Xuanxuan:19,Shengbin:24,Qunshu:25}, our primary focus is on the SOP.

\section{Extension to general sensing constraint in ISAC}
As the sensing target and the scatterings are not directly related to the communication users and eavesdroppers, with the introduction of auxiliary variables $\left\{ {{y_i}} \right\}_{i = 1}^I$, we can extend \eqref{eq:14} to a generalized rate optimization in ISAC:

\begin{subequations}\label{eq:19}
	\begin{align}
		\mathop {\max }\limits_{\left\{ {{{\bm w}_l}} \right\}_{l = 0}^I,{\bm v}}\mathop {\min }\limits_{\left\{ {{y_i}} \right\}_{i = 1}^I}& \sum\nolimits_{i = 1}^I {{y_i}{\psi _i}\left( {\left\{ {{{\bm w}_l}} \right\}_{l = 0}^I} \right)} \label{eq:19a}\\
		s.t.\;\;&\sum\nolimits_{l = 0}^I {\left\| {{{\bm w}_l}} \right\|_2^2}  \le {P_{BS}},\label{eq:19b}\\
		&S\left( {{\left\{ {{{\bm w}_l}} \right\}_{l = 0}^I},{\bm v}} \right) \ge 0,\label{eq:19c}\\
		&\bm v \in \Omega_{\bm v},\label{eq:19d}\\
		& \left\{ {{y_i}} \right\} \in {\Omega _{\bm y}},\label{eq:19e}
	\end{align}
\end{subequations}
where ${\psi _i}\big( {\big\{ {{{\bm w}_l}} \big\}_{l = 0}^I} \big)$ could be the secrecy rate ${{\rm{{\cal R}}}_i}\big( {\left\{ {{{\bm w}_l}} \right\}_{l = 0}^I} \big) - {D_i}\big( {\left\{ {{{\bm w}_l}} \right\}_{l = 0}^I} \big)$ if eavesdroppers are present or the achievable rate ${{\rm{{\cal R}}}_i}\big( {\left\{ {{{\bm w}_l}} \right\}_{l = 0}^I} \big)$ if there is no eavesdropper, $S\big( {{\left\{ {{{\bm w}_l}} \right\}_{l = 0}^I},{\bm v}} \big)$ is the sensing performance metric of interest, ${\bm v}$ contains other optimization variables related to the sensing performance and $\Omega_{\bm v}$ is the feasible region of $\bm v$. Furthermore, ${{\Omega _{\bm y}}}$ is the feasible set of the auxiliary variables $\left\{ {{y_i}} \right\}_{i = 1}^I$. Without loss of generality, we assume that $\big\{ {{\psi _i}\big( {\big\{ {{{\bm w}_l}} \big\}_{l = 0}^I} \big)} \big\}_{i = 1}^I$ and $S\big( {{\left\{ {{{\bm w}_l}} \right\}_{l = 0}^I},{\bm v}} \big)$ are twice differentiable.

Problem \eqref{eq:19} covers two widely used objective functions. 

\noindent \underline{\emph{Objective 1: Worst user secrecy rate.}} In the case, the constraint on $\bm y$ is a simplex ${\Omega _{\bm y}}: = \big\{ {\sum\nolimits_{i = 1}^I {{y_i}}  = 1,{y_i} \ge 0,\forall i} \big\}$. In order to minimize the summation of the rates under this ${\Omega _{\bm y}}$, the inner minimization with respect to (w.r.t.) $\left\{ {{y_i}} \right\}$ would allocate zero to the non-minimal rates and one to the minimal rate so that the summation of rates would reach its minimal solution, i.e., $\mathop {\min }\limits_i {\psi _i}\big( {\left\{ {{{\bm w}_l}} \right\}_{l = 0}^I} \big)$ as in worst case rate maximization \eqref{eq:14}. An advantage of the formulation \eqref{eq:19} under simplex constraint of $\left\{ {{y_i}} \right\}_{i = 1}^I$ over the worst user rate maximization \eqref{eq:14} is that the inner minimization is changed from a discrete user selection problem into a minimization problem with continuous variables $\left\{ {{y_i}} \right\}$, which allows a continuous iteration of these variables (more details will be presented in the next section). 

\noindent \underline{\emph{Objective 2: Weighted sum secrecy rate.}}
In this case, ${\Omega _{\bm y}}: = \left\{ {{y_i} = {{\tilde y}_i},\forall i} \right\}$ is a singleton set. Then the inner minimization w.r.t. $\left\{ {{y_i}} \right\}$ in \eqref{eq:19} can be skipped. As a special case, when ${{{\tilde y}_i} = 1}$ for all $i$, it becomes the sum rate maximization problem. Also, if ${{{\tilde y}_i}}$ is one for one user and zero for other users, problem \eqref{eq:19} would turn into rate maximization of a single user. This makes \eqref{eq:19a} a generalization to include both worst user rate maximization and weighted sum rate maximization for the communication users. Notice that since SOP constraint does not explicitly appear in \eqref{eq:19}, for an ISAC system without eavesdroppers, ${\psi _i}\big( {\left\{ {{w_l}} \right\}_{l = 0}^I} \big)$ can be any utility beyond the secrecy rate and achievable rate. However, since we focus on secure ISAC, we do not pursue this direction further in this paper.

Formulation \eqref{eq:19} not only extends the objective function but also generalizes the sensing metric. While our goal is not to exhaustively list all possible sensing criteria, below are a few popular sensing metrics used in existing research.

\noindent \underline{\emph{Metric 1: Sensing SINR.}} This is the case we discussed in Section II. Correspondingly, $S\big( {{\left\{ {{{\bm w}_l}} \right\}_{l = 0}^I},{\bm v}} \big)$ is given by the left hand side of \eqref{eq:13a} with $\bm v=\bm u$ and ${{\Omega _{\bm v}}}$ is the feasible region given by the constraint \eqref{eq:13b}.

\noindent \underline{\emph{Metric 2: Beampattern matching.}} The mean squared error (MSE) between the sensing beampattern and the ideal beampattern can be employed as a sensing criterion, which is written as \cite{Siyi:25,Peng:22}
\begin{equation}\label{eq:20}
	\begin{split}
		&MSE\left( {\left\{ {{{\bm w}_l}} \right\}_{l = 0}^I} \right) \\
		&= \frac{1}{T}\sum\limits_{t = 1}^T {{{\left| {{P_d}\left( {{\theta _t}} \right) - {{\bm a}^H}\left( {{\theta _t}} \right)\left( {\sum\limits_{l = 0}^I {{{\bm w}_l}{\bm w}_l^H} } \right){\bm a}\left( {{\theta _t}} \right)} \right|}^2}},
	\end{split}
\end{equation}
where $T$ represents the number of sampling angles, ${{P_d}\left( {{\theta _t}} \right)}$ is the desired beampattern at the target angle $\theta _t$ and ${\bm a}\left( {{\theta _t}} \right)$ is given in \eqref{eq:4}. Correspondingly, the sensing performance constraint is written as
\begin{equation}\label{eq:21}
	MSE\left( {\left\{ {{{\bm w}_l}} \right\}_{l = 0}^I} \right) \le \gamma,
\end{equation}
where $\gamma$ is the allowable MSE. In this case, $S\big( {\left\{ {{{\bm w}_l}} \right\}_{l = 0}^I} \big) = \gamma  - MSE\big( {\left\{ {{{\bm w}_l}} \right\}_{l = 0}^I} \big)$ and ${\Omega _{\bm v}} = \emptyset $.

\noindent \underline{\emph{Metric 3: Detection probability.}} It is known that with the receive filter ${\bm u} \in {{\mathbb C}^{N \times 1}}$ orthogonal to the steering vectors ${{{\left\{ {{\bm a}\left( {{\theta _m}} \right)} \right\}}_{m \ne 0}}}$ of all scatterings, the interferences from the scatterings can be suppressed. 
With the derivations in \cite{Jiancheng:23}, the detection probability ${P_{D}}$ of the sensing target for a given false-alarm probability ${P_{FA}}$ is given by
\begin{equation}\label{eq:23}
	\begin{split}
	&{P_D} = \\
	&Q\left( {{Q^{ - 1}}\left( {{P_{FA}}} \right) - \sqrt {\frac{{2\beta _0^2{{\bm a}^H}\left( {{\theta _0}} \right)\left( {\sum\limits_{l = 0}^I {{{\bm w}_l}{\bm w}_l^H} } \right){\bm a}\left( {{\theta _0}} \right)}}{{\sigma _{BS}^2}}} } \right),
	\end{split}
\end{equation}
where $Q\left(  \cdot  \right)$ is the right-tailed distribution function of the standard Gaussian distribution and ${Q^{ - 1}}\left(  \cdot  \right)$ is its inverse function. To maintain a target detection probability $\phi$, we need
\begin{equation}\label{eq:25}
	{P_D} \ge \phi.
\end{equation}
Based on \eqref{eq:23}, constraint \eqref{eq:25} can be reformulated as
\begin{equation}\label{eq:24}
	\sum\nolimits_{l = 0}^I {{\bm w}_l^H{{\bm A}_0}{{\bm w}_l}}  - \left[ {{{\sigma _{BS}^2\left( {{Q^{ - 1}}\left( {{P_{FA}}} \right) - {Q^{ - 1}}\left( {{\phi}} \right)} \right)} \mathord{\left/
				{\vphantom {{\sigma _{BS}^2\left( {{Q^{ - 1}}\left( {{P_{FA}}} \right) - {Q^{ - 1}}\left( {{P_D}} \right)} \right)} {2{\beta _0}}}} \right.
				\kern-\nulldelimiterspace} {2{\beta _0}}}} \right] \ge 0,
\end{equation}
where ${\bm A}_0$ is defined in \eqref{eq:3}. In this case, $S\big( {\left\{ {{{\bm w}_l}} \right\}_{l = 0}^I} \big)$ is given by the left hand side of \eqref{eq:24} and ${\Omega _{\bm v}} = \emptyset $.

\textbf{Remark 2 (Generality of \eqref{eq:19} without eavesdroppers):} The problem formulation \eqref{eq:19} and thus the subsequent proposed algorithm also cover the cases of worst user sum rate and weighted user sum rate maximization without eavesdroppers. This can be easily achieved by setting ${\eta _i} = 1$ in the SOP constraint, which will automatically lead to ${D_i}\big( {\left\{ {{{\bm w}_l}} \right\}_{l = 0}^I} \big) = 0$ in \eqref{eq:11}. While this case is not the focus of this paper as no PLS is provided, the developed framework also unifies the treatment of ordinary ISAC under various sensing constraints.

\section{Proposed framework for solving the general max-min problem \eqref{eq:19}}
Since \eqref{eq:19} includes worst user rate or sum rate in the objective and various sensing metrics in the constraint as special cases, it is desirable to have a unified framework for solving it. For tackling the general max-min formulation \eqref{eq:19}, existing methods would firstly transform the max-min into a pure maxmization problem, either through introducing a single auxiliary variable to replace the inner minimization in the objective function and then put the inner minimization as constraints \cite{Dongfang:22,Boxiang:24,Rui:24}, or using the penalty reformulation \cite{Wei-Chiang:15}. In general, these methods either face difficulty in setting an appropriate objective function for the outer layer variables~\cite{Liu:24}, or losing the solution quality guarantee due to manually tuning of penalty parameter.

Furthermore, the generally non-convex or even highly complex $S\big( {\left\{ {{{\bm w}_l}} \right\}_{l = 0}^I,{\bm v}} \big)$ in the constraint is another challenge in handling \eqref{eq:19}. Existing dominant methods for handling a non-convex constraint $S\big( {\left\{ {{{\bm w}_l}} \right\}_{l = 0}^I,{\bm v}} \big)$ would involve convexifying it. However, since the sensing constraint in \eqref{eq:19} is in a general form, common tactics such as Taylor expansion or semidefinie relaxization for quadratic constraint may not be applicable. For example, for the beampattern matching constraint, it is not quadratic in $\big\{ {{{\bm w}_l}} \big\}_{l = 0}^I$. In this case, we need to introduce additional conservative auxiliary constraints, which tightens the feasible region and loses convergence points in the excluded region. To develop a solution for the general problem, we will handle the inner minimization and the constraint $S\left( {\left\{ {{{\bm w}_l}} \right\}_{l = 0}^I,{\bm v}} \right)$ at the same time below.

More specifically, rewritting the auxiliary variables $\left\{ {{y_i}} \right\}_{i = 1}^I$ as a vector ${\bm y} := {\left[ {{y_1},{y_2}, \ldots, {y_I}} \right]^T}$ and introducing one more non-negative penalty variable $c$, \eqref{eq:19} can be converted to
\begin{subequations}\label{eq:26}
	\begin{align}
		\mathop {\max }\limits_{\left\{ {{{\bm w}_l}} \right\}_{l = 0}^I,{\bm v}} \mathop {\min }\limits_{c,{\bm y}}\;\;& {{\bm y}^T}{\rm{{\bm \psi}}}\left( {\left\{ {{{\bm w}_l}} \right\}_{l = 0}^I} \right) + cS\left( {\left\{ {{{\bm w}_l}} \right\}_{l = 0}^I,{\bm v}} \right)\label{eq:26a}\\
		s.t.\;\;&\sum\nolimits_{l = 0}^I {\left\| {{{\bm w}_l}} \right\|_2^2}  \le {P_{BS}},\label{eq:26b}\\
		&{\bm v} \in {\Omega _{\bm v}},\label{eq:26c}\\
		&{\bm y} \in {\Omega _{\bm y}},\label{eq:26d}\\
		&c \ge 0,\label{eq:26e}
	\end{align}
\end{subequations}
where ${\bm \psi} \left( {\left\{ {{{\bm w}_l}} \right\}_{l = 0}^I} \right) = {\left[ {{\psi _1}\left( {\left\{ {{{\bm w}_l}} \right\}_{l = 0}^I} \right), \ldots ,{\psi _I}\left( {\left\{ {{{\bm w}_l}} \right\}_{l = 0}^I} \right)} \right]^T}$ is a column vector containing the utilities of all users. The solution equivalence between \eqref{eq:19} and \eqref{eq:26} is given in the following Proposition.
\begin{myPro}
	If $\big\{ {\left\{ {{{\hat {\bm w}}_l}} \right\}_{l = 0}^I,\hat {\bm v},\hat {\bm y},\hat c} \big\}$ is the optimal solution of \eqref{eq:26}, then $\big\{ {\left\{ {{{\hat {\bm w}}_l}} \right\}_{l = 0}^I,\hat {\bm v},\hat {\bm y}} \big\}$ is the optimal solution of \eqref{eq:19}. Conversely, if $\big\{ {\left\{ {{{\hat {\bm w}}_l}} \right\}_{l = 0}^I,\hat {\bm v},\hat {\bm y}} \big\}$ is the optimal solution of \eqref{eq:19}, then $\big\{ {\left\{ {{{\hat {\bm w}}_l}} \right\}_{l = 0}^I,\hat {\bm v},\hat {\bm y},\hat c=0} \big\}$ is the optimal solution of \eqref{eq:26}. Furthermore, if $\big\{ {\left\{ {{\bm w}_l^ * } \right\}_{l = 0}^I,{{\bm v}^ * },{{\bm y}^ * },{c^ * }} \big\}$ is a stationary point of \eqref{eq:26}, $\big\{ {\left\{ {{\bm w}_l^*} \right\}_{l = 0}^I,{{\bm v}^*},\hat {\bm y}^*} \big\}$ is at least a stationary point of \eqref{eq:19}.
\end{myPro}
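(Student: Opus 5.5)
The plan is to work with the value function of the inner minimization in \eqref{eq:26}. For fixed outer variables $\big(\{\bm w_l\}_{l=0}^I,\bm v\big)$, write
\[
F\big(\{\bm w_l\},\bm v\big)=\min_{\bm y\in\Omega_{\bm y},\,c\ge 0}\Big\{\bm y^T\bm\psi\big(\{\bm w_l\}\big)+cS\big(\{\bm w_l\},\bm v\big)\Big\}.
\]
The terms in $\bm y$ and in $c$ are separable, so $F=\min_{\bm y\in\Omega_{\bm y}}\bm y^T\bm\psi+\min_{c\ge0}cS$. The second minimum equals $0$ (attained at $c=0$) when $S\ge 0$, and equals $-\infty$ when $S<0$. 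Hence the outer maximization of \eqref{eq:26} is the maximization of $G(\{\bm w_l\}):=\min_{\bm y\in\Omega_{\bm y}}\bm y^T\bm\psi$ over the set defined by \eqref{eq:26b}, \eqref{eq:26c} and $S\ge0$. This is exactly problem \eqref{eq:19}.

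\textbf{Global optimality.} The optimal values agree, provided \eqref{eq:19} is feasible, which we assume. Take an optimal point $\{\hat{\bm w}_l,\hat{\bm v},\hat{\bm y},\hat c\}$ of \eqref{eq:26}. If $S(\hat{\bm w},\hat{\bm v})<0$, the inner minimum would be $-\infty$, so the point would be strictly worse than any feasible point of \eqref{eq:19}, a contradiction. Therefore $S\ge0$, and $\hat c S=0$ by inner optimality. The objective value is then $\hat{\bm y}^T\bm\psi(\hat{\bm w})=G(\hat{\bm w})$, which equals the optimal value of \eqref{eq:19}, so $\{\hat{\bm w},\hat{\bm v},\hat{\bm y}\}$ is optimal for \eqref{eq:19}. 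The converse runs the same way: an optimal point of \eqref{eq:19} satisfies $S\ge0$, the choice $c=0$ attains $\min_{c\ge0}cS=0$, and the objective values coincide.

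\textbf{Stationarity.} Here I will compare first-order conditions. A stationary point of the max-min problem \eqref{eq:26} satisfies the following. The inner variables $(\bm y^*,c^*)$ satisfy the KKT conditions of the inner minimization. For $c$ these read $S(\bm w^*,\bm v^*)\ge0$, $c^*\ge0$ and $c^*S=0$; if $S<0$, the $c$-problem would have no stationary point. For $\bm y$ they state that $\bm y^*$ is a minimizer of the linear function $\bm y^T\bm\psi$ over the convex set $\Omega_{\bm y}$. The outer variables satisfy
\[
\sum_i y_i^*\nabla\psi_i+c^*\nabla_{\bm w}S-\lambda\nabla\Big(\sum_l\|\bm w_l\|_2^2\Big)\in N_{\Omega_{\bm v}}\text{-terms},
\]
with $\lambda\ge0$ complementary to \eqref{eq:26b}, together with $c^*\nabla_{\bm v}S$ lying in the normal cone of $\Omega_{\bm v}$.

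I then read these same equations as the KKT system of \eqref{eq:19}, written in epigraph or subgradient form. The weights $\bm y^*$ select a Clarke subgradient of $G$, since $\partial G$ is the convex hull of $\{\bm\psi^T\bm y:\bm y\in\arg\min\}$. The penalty variable $c^*$ plays the role of the Lagrange multiplier of \eqref{eq:19c}, with complementary slackness $c^*S=0$ and dual feasibility $c^*\ge0$. Primal feasibility $S\ge0$ also holds. So every condition required for stationarity of \eqref{eq:19} holds with multiplier $c^*$.

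\textbf{Main obstacle.} The delicate step is making "stationary point of a max-min problem" precise. This is needed to justify the claim that the inner KKT conditions force $S\ge0$ and $c^*S=0$, and to handle the nonsmoothness of $G$ in the simplex case. I would first adopt the standard definition in which each block satisfies its own first-order condition with the other blocks fixed. Under that definition the correspondence above is term-by-term. The word "at least" in the statement reflects that the KKT conditions of \eqref{eq:19} may be satisfied without the reverse implication holding.
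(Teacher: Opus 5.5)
Your proposal is correct. For the equivalence of optimal solutions you argue as the paper does: $\min_{c\ge 0} cS$ equals $0$ when $S\ge 0$ and $-\infty$ otherwise. So an optimal point of \eqref{eq:26} has $S\ge 0$ and $\hat c S=0$, the penalty term vanishes, and the two problems coincide.

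For the stationarity part you take a different route. The paper argues by contradiction. It assumes the point is stationary for \eqref{eq:26} but not for \eqref{eq:19}, constructs a sequence converging to the point that violates the sensing constraint nearby, and concludes the point cannot be stationary for \eqref{eq:26}. You instead read off the block-wise first-order conditions (Definition 1 with $\varepsilon=0$):
\begin{itemize}
\item the $c$-block alone gives primal feasibility $S\ge 0$ and complementary slackness $c^*S=0$;
\item the $\bm y$-block makes $\bm y^*$ a minimizer of $\bm y^T\bm\psi$, which gives valid epigraph multipliers, or via Danskin the Clarke subgradient $\sum_i y_i^*\nabla\psi_i$ of $G$;
\item the outer block is then exactly the KKT stationarity equation of \eqref{eq:19}, with $c^*$ as the multiplier of \eqref{eq:19c}.
\end{itemize}
Your route is shorter and constructive: it exhibits the multiplier explicitly and needs no auxiliary sequence. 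The paper's primal argument works with the geometry of the feasible set rather than with multipliers.

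To land exactly on the normal-cone form of Definition 1, where the feasible set of \eqref{eq:19} contains the nonconvex set $\{S\ge 0\}$, add one line. At a point with $S=0$, a first-order Taylor expansion shows $-c^*\nabla S$ lies in the regular (Fr\'echet) normal cone of $\{S\ge 0\}$ for every $c^*\ge 0$. Regular normal cones satisfy $\widehat N_{C_1}+\widehat N_{C_2}\subseteq\widehat N_{C_1\cap C_2}$, so no constraint qualification is needed in this direction.

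Some small fixes:
\begin{itemize}
\item the $\bm w$-block condition is an equality with no $\Omega_{\bm v}$ term;
\item the Danskin formula should read $\sum_i y_i\nabla\psi_i$, not $\bm\psi^T\bm y$;
\item under a pure KKT reading the converse also holds (take $c$ equal to the multiplier), so ``at least'' only has content in the normal-cone reading, where a stationary point of \eqref{eq:19} may admit no multiplier if a constraint qualification fails.
\end{itemize}
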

\emph{Proof:} Due to page limitation, we only provide a sketch of the proof in the following. More details can be found in Appendix A in supplementary material. First, it can be proved that the optimal solution of \eqref{eq:26} must have $\hat cS\big( {\left\{ {{{\hat {\bm w}}_l}} \right\}_{l = 0}^I,\hat {\bm v}} \big) = 0$ and $S\big( {\left\{ {{{\hat {\bm w}}_l}} \right\}_{l = 0}^I,\hat {\bm v}} \big) \ge 0$. This causes the term $cS\big( {\left\{ {{{\bm w}_l}} \right\}_{l = 0}^I,{\bm v}} \big)$ vanishes at the optimal solution of \eqref{eq:26} and makes it equivalent to \eqref{eq:19}. Second, if $\big\{ {\left\{ {{\bm w}_l^*} \right\}_{l = 0}^I,{{\bm v}^*},\hat {\bm y}^*,c^*} \big\}$ is the stationary point of \eqref{eq:26} but $\big\{ {\left\{ {{\bm w}_l^*} \right\}_{l = 0}^I,{{\bm v}^*},\hat {\bm y}^*} \big\}$ is not a stationary point of \eqref{eq:19}, we can construct an sequence gradually approaching $\big\{ {\left\{ {{\bm w}_l^*} \right\}_{l = 0}^I,{{\bm v}^*},\hat {\bm y}^*} \big\}$ and always violates the sensing metric constraint when the points in this sequence is sufficiently close to $\big\{ {\left\{ {{\bm w}_l^*} \right\}_{l = 0}^I,{{\bm v}^*},\hat {\bm y}^*} \big\}$. In this way, we can prove that $\big\{ {\left\{ {{\bm w}_l^*} \right\}_{l = 0}^I,{{\bm v}^*},\hat {\bm y}^*,c^*} \big\}$ is not a stationary point of \eqref{eq:26}. By contradiction, the second part is proved. \hfill $\Box$
\vspace{0.2cm}

Through moving the sensing constraint into the objective function, the functions ${\bm \psi} \big( {\left\{ {{{\bm w}_l}} \right\}_{l = 0}^I} \big)$ and $S\big( {\left\{ {{{\bm w}_l}} \right\}_{l = 0}^I,{\bm v}} \big)$ only exist in the objective function and the constraints become much simpler. Moreover, as $\bm y$ and $c$ are continuous varaibles, an intuitive approach for solving \eqref{eq:26} is to use AO between the inner minimization and the outer maximization. Specifically, at the $k^{th}$ iteration, given the current iteration point $\big\{ {\left\{ {{\bm w}_l^k} \right\}_{l = 0}^I,{{\bm v}^k},{c^k},\bm y^k} \big\}$, we alternatively solve the following two subproblems
\begin{equation}\label{eq:33}
	\begin{split}
		&\left\{ {{c^{k + 1}},{{\bm y}^{k + 1}} } \right\} = \mathop {\arg \min }\limits_{c, {{\bm y}}}\; {{\bm y}^T}{\rm{{\bm \psi}}}\left( {\left\{ {{{\bm w}_l^k}} \right\}_{l = 0}^I} \right) \\ &\;\;\;\;\;\;\;\;\;\;\;\;\;\;\;\;\;\;\;\;\;\;\;\;\;\;\;\;\;\;\;\;\;\;\;\;\;\;+cS\left( {\left\{ {{\bm w}_l^k} \right\}_{l = 0}^I,{{\bm v}^k}} \right)\\
		&\;\;\;\;\;\;\;\;\;\;\;\;\;\;\;\;\;\;\;\;\;\;\;\;\;\;\;\;\;\;\;s.t.\;\;{\bm y} \in {\Omega _{\bm y}}, c \ge 0.
	\end{split}
\end{equation}
\begin{equation}\label{eq:43}
	\begin{split}
		\left\{ {\left\{ {{\bm w}_l^{k + 1}} \right\}_{l = 0}^I,{{\bm v}^{k + 1}}} \right\} = \mathop {\arg \max }\limits_{\left\{ {{{\bm w}_l}} \right\}_{l = 0}^I,{\bm v}}\;\;& {\left( {{{\bm y}^{k + 1}}} \right)^T}{\rm{{\bm \psi}}}\left( {\left\{ {{{\bm w}_l}} \right\}_{l = 0}^I} \right)\\
		&+ {c^{k + 1}}S\left( {\left\{ {{{\bm w}_l}} \right\}_{l = 0}^I,{\bm v}} \right)\\
		s.t.\;\;&\sum\nolimits_{l = 0}^I {\left\| {{{\bm w}_l}} \right\|_2^2}  \le {P_{BS}},\\
		&{\bm v} \in {\Omega _{\bm v}}.
	\end{split}
\end{equation}

Unfortunately, alternatively solving the above two subproblems is not a viable solution. In particular, notice that the objective function in \eqref{eq:33} is linear w.r.t. $c$. Given that the constraint $c \ge 0$ is an unbounded set and $S\big( {\left\{ {{{\bm w}_l}} \right\}_{l = 0}^I,{\bm v}} \big)$ can be negative before convergence, the solution of $c$ in \eqref{eq:33} could be unbounded, which could block subsequent iterations. Furthermore, since the minimization problem \eqref{eq:33} is a strictly linear function of $\bm y$, the optimal solution w.r.t. $\bm y$ may not be unique for the general feasible set ${\Omega _{\bm y}}$ (e.g., ${\Omega _{\bm y}}$ under worst case rate optimization problem with two or more users having the same minimal rate in $\bm \psi \big( {\big\{ {{{\bm w}_l^k}} \big\}_{l = 0}^I} \big)$). Randomly choosing one from the solution set would make the solution sequence has a sudden jump, which imposes challenges to the solution quality analysis. To avoid an unbounded iterate $c^k$ and handle the non-uniqueness of $\bm y^k$, the subproblem for inner minimization should be modified.

\subsection{Regularizing the inner minimization problem}
To deal with the possible unbounded iterate of $c$ and non-unique solution of $\bm y$ at the $k^{th}$ iteration, we propose to modify \eqref{eq:33} as the following regularized minimization problem
\begin{equation}\label{eq:16}
	\begin{split}
		\left\{ {{c^{k + 1}},{{\bm y}^{k + 1}}} \right\}& = \mathop {\arg \min }\limits_{c,{\bm y}}\;\; {{\bm y}^T}{\rm{{\bm \psi}}}\left( {\left\{ {{\bm w}_l^k} \right\}_{l = 0}^I} \right) + \frac{{\lambda _{\bm y}^k}}{2}\left\| {\bm y} \right\|_2^2\\
		&+ cS\left( {\left\{ {{\bm w}_l^k} \right\}_{l = 0}^I,{{\bm v}^k}} \right) + \frac{{\lambda _c^k}}{2}{c^2} + \frac{{{\beta ^k}}}{2}{{\left( {c - {c^k}} \right)}^2}\\
		&\;\;\;\;\;\;\;\;\;\;s.t.\;\;{\bm y} \in {\Omega _{\bm y}},\\
		&\;\;\;\;\;\;\;\;\;\;\;\;\;\;\;\;\;c \in \left[ {0,{C^k}} \right],
	\end{split}
\end{equation}
where ${{\lambda_c ^k}}$, $\lambda_{\bm y} ^k$, $\beta ^k$, $C^k$ are positive regularization parameters. The regularization term ${\beta ^k}{\left( {c - {c^k}} \right)^2}$$/2$ ensures that the next iteration point $c^{k + 1}$ would not go far from the current point $c^k$. The term ${{\lambda _c^k{c^2}} \mathord{\left/
		{\vphantom {{\lambda _c^k{c^2}} 2}} \right.
		\kern-\nulldelimiterspace} 2}$ and $C^k$ guarantees that the solution $c^{k+1}$ of this inner minimization problem would not go to infinity and block the iteration. On the other hand, to make the solution of $\bm y$ unique, we introduce a strongly convex term $\lambda _{\bm y}^k{\left\| {\bm y} \right\|^2}$$/2$. To ensure the subproblem \eqref{eq:16} be consistent with the original subproblem \eqref{eq:33} as the number of iteration increases, we impose $\mathop {\lim }\limits_{k \to \infty } {\lambda_{\bm y} ^k} = \mathop {\lim }\limits_{k \to \infty } {\lambda_c ^k} = \mathop {\lim }\limits_{k \to \infty } {\beta ^k} = 0$ and $\mathop {\lim }\limits_{k \to \infty } {C^k} = \infty $. Although the non-uniqueness of $\bm y$ might occur again when $k$ goes to infinity, this does not matter in practice since the number of iteration would be stopped at a finite number. Obviously, the schedule of the regularization parameters ${{\lambda_{\bm y} ^k}}$, ${{\lambda_c ^k}}$, ${\beta ^k}$ and ${C^k}$ would affect the solution quality of the whole AO iteration. This issue will be analyzed in subsection IV-D. 

\subsection{Finding an ascent point for the outer maximization problem}
Notice that the outer maximization w.r.t. $\big\{ {\left\{ {{{\bm w}_l}} \right\}_{l = 0}^I,{\bm v}} \big\}$ involves the utility functions ${\bm \psi} \big( {\left\{ {{{\bm w}_l}} \right\}_{l = 0}^I} \big)$ and a general sensing constraint $S\big( {\left\{ {{{\bm w}_l}} \right\}_{l = 0}^I,{\bm v}} \big)$, which precludes the possibility of finding a general concave approximation \cite{Wei-Chiang:15,Lei:25,Songtao:20} to the objective function in \eqref{eq:43}. Fortunately, since ${\bm \psi} \big( {\left\{ {{{\bm w}_l}} \right\}_{l = 0}^I} \big)$ and $S\big( {\left\{ {{{\bm w}_l}} \right\}_{l = 0}^I,{\bm v}} \big)$ are differentiable, their gradients can be obtained. Based on this observation, a simple projected gradient ascent \eqref{eq:17} (shown at the top of next page) is adopted to find an ascent point, where ${1 \mathord{\left/
		{\vphantom {1 {{\alpha ^k}}}} \right.
		\kern-\nulldelimiterspace} {{\alpha ^k}}}$ is an appropriately chosen stepsize.
	
\begin{figure*}
	\begin{equation}\label{eq:17}
		\begin{split}
			\left\{ {\left\{ {{\bm w}_l^{k + 1}} \right\}_{l = 0}^I,{{\bm v}^{k + 1}}} \right\} = \mathop {\arg \min }\limits_{\left\{ {{{\bm w}_l}} \right\}_{l = 0}^I,{\bm v}}& \sum\limits_{l = 0}^I {\left\| {{{\bm w}_l} - \left( {{\bm w}_l^k + \frac{1}{{{\alpha ^k}}}\left[ {{{\left( {{{\bm y}^{k + 1}}} \right)}^T}{\nabla _{{{\bm w}_l}}}{\rm{{\bm \psi}}}\left( {\left\{ {{\bm w}_l^k} \right\}_{l = 0}^I} \right) + {c^{k + 1}}{\nabla _{{{\bm w}_l}}}S\left( {\left\{ {{\bm w}_l^k} \right\}_{l = 0}^I,{{\bm v}^k}} \right)} \right]} \right)} \right\|_2^2} \\
			&+ \left\| {{\bm v} - \left( {{{\bm v}^k} + \frac{{{c^{k + 1}}}}{{{\alpha ^k}}}{\nabla _{\bm v}}S\left( {\left\{ {{\bm w}_l^k} \right\}_{l = 0}^I,{{\bm v}^k}} \right)} \right)} \right\|_2^2\\
			s.t.&\;\;\sum\nolimits_{l = 0}^I {\left\| {{{\bm w}_l}} \right\|_2^2}  \le {P_{BS}},\\
			&\;\;{\bm v} \in {\Omega _{\bm v}}.
		\end{split}
	\end{equation}
\rule[5pt]{18.07cm}{0.1em}
\end{figure*}

According to \cite{beck:17}, in order to ensure \eqref{eq:17} is an ascent point of the original maximization subproblem \eqref{eq:43}, ${{\alpha ^k}}$ in \eqref{eq:17} should be larger than the Lipschitz constant of the objective function \eqref{eq:43}. Since the utility function ${\bm \psi} \big( {\left\{ {{{\bm w}_l}} \right\}_{l = 0}^I} \big)$ and constraint $S\big( {\left\{ {{{\bm w}_l}} \right\}_{l = 0}^I,{\bm v}} \big)$ are assumed to be twice continuously differentiable, the Lipschitz constant of the objective function \eqref{eq:43} exists. However, in practice, the Lipschitz constant may be difficult to find (e.g., when involving an implicit function such as ${{D_i}\big( {\left\{ {{{\bm w}_l}} \right\}_{l = 0}^I} \big)}$). Inspired by the diminishing stepsize rule in the projected gradient method \cite{Bin:06}, ${{\alpha ^k}}$ can be chosen as an increasing sequence with $\mathop {\lim }\limits_{k \to \infty } {\alpha ^k} = \infty $. Then, as the iteration goes, ${{\alpha ^k}}$ will eventually exceed the Lipschitz constant of \eqref{eq:43} and guarantee to provide an ascent point of \eqref{eq:43}. Similar to the regularization parameters in the inner minimization subproblem, theoretical results on the schedule of ${\alpha ^k}$ would be given later.

\begin{algorithm}[t]
	\renewcommand{\algorithmicrequire}{\textbf{Input:}} 
	\renewcommand{\algorithmicensure}{\textbf{General step:}} 
	\caption{Modified AO iteration for solving \eqref{eq:26}}
	\begin{algorithmic}[1]
		\REQUIRE ~~\\ 
		The upper bound of the iteration numbers $K$\\
		\renewcommand{\algorithmicrequire}{\textbf{Output:}}
		\REQUIRE ~~\\
		$\big\{ {\left\{ {{\bm w}_l^{best}} \right\}_{l = 0}^I,{{\bm v}^{best}}} \big\}$\\
		\ENSURE ~~\\
		\STATE Initialize $\big\{ {\left\{ {{\bm w}_l^1} \right\}_{l = 0}^I,{{\bm v}^1},{c^1},{{\bm y}^1}} \big\}$ as any feasible point of the problem \eqref{eq:26}. Set $\left\{ {{\bm w}_l^{best}} \right\}_{l = 0}^I \leftarrow \left\{ {{\bm w}_l^1} \right\}_{l = 0}^I$ and ${{\bm v}^{best}} \leftarrow {{\bm v}^1}$\\
		\STATE \textbf{For} $k = 1,2,...,K$\\
		\STATE \quad Calculate ${\lambda _c^k}$, ${\lambda_{\bm y}^k}$, $\beta ^k$, ${\alpha ^k}$ and ${{C}^k}$\\
		\STATE \quad Obtain $\left\{ {{c^{k + 1}},{{\bm y}^{k + 1}}} \right\}$ by solving \eqref{eq:16}\\
		\STATE \quad Obtain $\big\{ {\left\{ {{\bm w}_l^{k + 1}} \right\}_{l = 0}^I,{{\bm v}^{k + 1}}} \big\}$ by solving \eqref{eq:17}\\
		\STATE \quad \textbf{If} the current iteration solution $\big\{ {\left\{ {{\bm w}_l^{k + 1}} \right\}_{l = 0}^I,{{\bm v}^{k + 1}}} \big\}$\\
		\quad \quad satisfies $S\big( {{\left\{ {{\bm w}_l^{k+1}} \right\}_{l = 0}^I},\bm v^{k+1}} \big) \ge 0$, ${{\bm v}^{k + 1}} \in {\Omega _{\bm v}}$ and \\ \quad \quad $\mathop {\min }\limits_{{\bm y} \in {\Omega _{\bm y}}} {{\bm y}^T}{\bm \psi} \big( {\left\{ {{\bm w}_l^{k + 1}} \right\}_{l = 0}^I} \big) \ge \mathop {\min }\limits_{{\bm y} \in {\Omega _{\bm y}}} {{\bm y}^T}{\bm \psi} \big( {\left\{ {{\bm w}_l^{best}} \right\}_{l = 0}^I} \big)$\\
		\STATE \quad \quad $\left\{ {{\bm w}_l^{best}} \right\}_{l = 0}^I \leftarrow \left\{ {{\bm w}_l^{k+1}} \right\}_{l = 0}^I$ and ${{\bm v}^{best}} \leftarrow {{\bm v}^{k+1}}$\\
		\STATE \quad \textbf{End}
		\STATE \textbf{until} the iteration number $k$ meets the upper bound $K$.\\
	\end{algorithmic}
\end{algorithm}

\subsection{Overall modified AO algorithm}

The overall proposed AO iteration for solving \eqref{eq:26} is summarized in Algorithm 1. Notice that as this modified AO iteration involves two opposing layers, the objective value generated by the sequence $\big\{ {\left\{ {{\bm w}_l^k} \right\}_{l = 0}^I,{{\bm v}^k},{c^k},{{\bm y}^k}} \big\}$ is not monotonic. Therefore, in Algorithm 1, we need steps 6 and 7 to keep track of the current best solution.

For the outer iteration, due to the general forms of ${\rm{{\bm \psi}}}\big( {\left\{ {{\bm w}_l^k} \right\}_{l = 0}^I} \big)$ and $S\big( {\left\{ {{{\bm w}_l}} \right\}_{l = 0}^I,{\bm v}} \big)$, there is no guarantee we could obtain a suboptimal or stationary point of the objective function \eqref{eq:43} even we execute \eqref{eq:17} for many iterations. Moreover, the max-min problem \eqref{eq:26} involves a coupled structure where the inner minimization influences the outer maximization. Updating the outer variables multiple times before updating the inner variables can lead to a misalignment between the two layers, which causes the algorithm to spend unnecessary time adjusting the outer variables based on outdated or inconsistent inner solutions, effectively slowing down the convergence \cite{Jiawei:20}. Therefore, we propose to execute \eqref{eq:17} for only one iteration and then move to the inner minimization of $\left\{ {c,{\bm y}} \right\}$ in \eqref{eq:16}. Although we only obtain an ascent point of \eqref{eq:43} by a single iteration, solution quality of the whole AO iteration can still be obtained. This is because the convergence and solution quality depend on the stepsize and the weight of the introduced regularization terms \cite{Jiefei:24,Zi:23,Maher:19} in \eqref{eq:16}. It does not rely on whether we use multiple projected gradient steps or single projected step in outer maximization layer. The detailed convergence analysis for the whole modified AO iteration will be provided next.

\subsection{Solution quality analysis}
Although there exist some solution quality analyses for algorithms solving non-concave convex max-min problems \cite{Maher:19,Wang:20,Jiawei:20,Zi:23,Jiefei:24,Kaiqing:21}, they are not applicable to the proposed AO iteration for solving \eqref{eq:26}. Firstly, existing analyses are carried out for algorithms with known Lipschitz constant \cite{Maher:19,Zi:23,Jiawei:20,Jiefei:24} or upper bound of the Hessian matrix norm \cite{Wang:20} of the objective function \eqref{eq:26a}. Although the Lipschitz constant or upper bound of the Hessian matrix norm of \eqref{eq:26a} exist since \eqref{eq:26a} is twice continuously differentiable, the general form of function ${\bm \psi} \big( {\left\{ {{{\bm w}_l}} \right\}_{l = 0}^I} \big)$ and sensing metric $S\big( {\left\{ {{{\bm w}_l}} \right\}_{l = 0}^I,{\bm v}} \big)$ in \eqref{eq:26a} make them unobtainable. Furthermore, the solution quality analyses of these existing works require the constraints of the max-min problem to be bounded \cite{Kaiqing:21,Jiawei:20}. Unfortunately, this is not valid in our case since the constraint $c \ge 0$ of the auxiliary variable is obviously unbounded. Therefore, a dedicated analysis is needed for the proposed algorithm.

Before we reveal the solution quality guarantee of the algorithm, we need to define the $\varepsilon $-stationary point which is commonly used in the analysis of non-monotonic algorithms for solving max-min problem.

\begin{myDef} (\textbf{$\varepsilon $-stationary point})\cite[Def 3.1]{Jiawei:20}, \cite[Def 3.1]{Jiefei:24} Given a max-min problem
	$\mathop {\max }\limits_{{\bm x} \in {\rm{{\cal X}}}} \mathop {\min }\limits_{{\bm z} \in {\rm{{\cal Z}}}} f\left( {{\bm x},{\bm z}} \right)$,
	let ${{\mathbb I}_{\rm{{\cal X}}}}\left( {\bm x} \right)$ and ${{\mathbb I}_{\rm{{\cal Z}}}}\left( {\bm z} \right)$ be the indicator functions of the constraints ${{\bm x} \in {\rm{{\cal X}}}}$ and ${{\bm z} \in {\rm{{\cal Z}}}}$, respectively. We call $\left\{ {{{\bm x}^k},{{\bm z}^k}} \right\}$ is a $\varepsilon $-stationary point of this max-min problem if there exists $\bm d \in \partial {{\mathbb I}_{\rm{{\cal X}}}}\left( {\bm x} \right)$ and $\bm e \in \partial {{\mathbb I}_{\rm{{\cal Z}}}}\left( {\bm z} \right)$ such that $\left\| {\left[ \begin{array}{l}
			{\nabla _{\bm x}}f\left( {{\bm x},{\bm z}} \right) - {\bm d}\\
			{\nabla _{\bm z}}f\left( {{\bm x},{\bm z}} \right) + {\bm e}
		\end{array} \right]} \right\|_2$ $\le \varepsilon $, where $\partial $ is the subgradient operation on the non-smooth indicator functions.
\end{myDef}

\noindent Next we present the convergence and solution quality of Algorithm 1 in Proposition 2.

\begin{myPro}
	Define $T\left( \varepsilon  \right)$ as the iteration index of Algorithm 1 obtaining a $\varepsilon$-stationary point of \eqref{eq:26} for the first time. Suppose the sensing metric $S\big( {\left\{ {{{\bm w}_l}} \right\}_{l = 0}^I,{\bm v}} \big)$ and utility function ${\bm \psi} \big( {\left\{ {{{\bm w}_l}} \right\}_{l = 0}^I} \big)$ are twice continuously differentiable and the feasible set ${\Omega _{\bm v}}$ is compact. Also, assume that the global optimal solution of \eqref{eq:16} and \eqref{eq:17} can be obtained. Then, an upper bound of $T\left( \varepsilon  \right)$ is given by $T\left( \varepsilon  \right) \le {\rm{{\cal O}}}\left( {{\varepsilon ^{ - 6}}} \right)$ under the setting ${\lambda_c ^k}= {\lambda_{\bm y} ^k}\propto {k^{ - \frac{1}{4}}}$, ${\beta ^k} \propto {k^{ - 3}}$, ${\alpha ^k} \propto {k^{\frac{1}{3}}}$ and ${C^k} \propto \ln k$.
\end{myPro}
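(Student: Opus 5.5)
We would follow the standard potential-function analysis for alternating gradient projection (AGP) type algorithms on nonconcave--convex max--min problems, as in \cite{Jiawei:20,Jiefei:24}, adapting it to the unbounded variable $c$. Write $\bm x=\{\{\bm w_l\}_{l=0}^I,\bm v\}$ and $\bm z=\{c,\bm y\}$, and let $f(\bm x,\bm z)=\bm y^T\bm\psi(\{\bm w_l\})+cS(\{\bm w_l\},\bm v)$ be the objective \eqref{eq:26a}. Since \eqref{eq:26a} is linear in $\bm z$, the regularized inner objective in \eqref{eq:16} is strongly convex in $\bm z$, with modulus $\lambda^k=\lambda_c^k=\lambda_{\bm y}^k$ (plus $\beta^k$ for $c$). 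Let $\bm z^*_k(\bm x)$ denote its minimizer over $\Omega_{\bm y}\times[0,C^k]$.

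\textbf{Step 1: boundedness and local constants.} The $\bm x$-iterates lie in the compact set $\{\sum_l\|\bm w_l\|_2^2\le P_{BS}\}\times\Omega_{\bm v}$. The $c$-iterates satisfy $c^k\le C^k\propto\ln k$, and we restrict to the $\bm y$-part of $\Omega_{\bm y}$, which is bounded in both cases of interest (simplex or singleton). Twice continuous differentiability therefore gives finite constants $L_0$ bounding $\|\bm\psi\|,|S|$ and their first and second derivatives on this compact set. Consequently $\nabla_{\bm x}f$ is Lipschitz in $\bm x$ with constant $L_k\le L_0(1+C^k)=\mathcal O(\ln k)$, and Lipschitz in $\bm z$ with constant $L_0$. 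Because $\alpha^k\propto k^{1/3}$ dominates $\mathcal O(\ln k)$, for $k$ large enough we have $\alpha^k\ge L_k$. The finitely many initial iterations only contribute a constant.

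\textbf{Step 2: potential function and descent.} We would use the potential
\[
\Phi^k=2\,f_k\!\left(\bm x^k,\bm z^*_k(\bm x^k)\right)-f_k\!\left(\bm x^k,\bm z^k\right),
\]
where $f_k$ is $f$ plus the regularizers of \eqref{eq:16}, together with correction terms. The argument proceeds in three parts.

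(i) The projected ascent step \eqref{eq:17}, by the descent lemma, gives $f_k(\bm x^{k+1},\bm z^{k+1})\ge f_k(\bm x^k,\bm z^{k+1})+\tfrac{\alpha^k}{2}\|\bm x^{k+1}-\bm x^k\|^2$.

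(ii) The strongly convex inner step, via the error bound $\|\bm z^{k+1}-\bm z^*_k(\bm x^k)\|\le\frac{L_0}{\lambda^k}\|\bm x^{k+1}-\bm x^k\|$, controls the change in the inner variables.

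(iii) Changing the parameters $\lambda^k,\beta^k,C^k$ between iterations introduces error terms: $(\lambda^{k-1}-\lambda^k)\|\bm z\|^2=\mathcal O(k^{-5/4}\ln^2k)$, $\beta^k(C^k)^2=\mathcal O(k^{-3}\ln^2k)$, and terms from enlarging $[0,C^k]$.

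Combining these yields a recursion of the form
\[
\Phi^{k+1}-\Phi^k\ge a_k\|\bm x^{k+1}-\bm x^k\|^2+b_k\|\bm z^{k+1}-\bm z^k\|^2-e_k,
\]
where $a_k\sim\alpha^k$ and $b_k\sim\lambda^k$, and $\sum_k e_k$ is controlled by the chosen schedules.

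\textbf{Step 3: stationarity gap and rate.} From the optimality conditions of \eqref{eq:16} and \eqref{eq:17}, the $\varepsilon$-stationarity residual of Definition 1 at iterate $k$ is bounded by
\[
\alpha^k\|\bm x^{k+1}-\bm x^k\|+(L_0+\beta^k)\|\bm z^{k+1}-\bm z^k\|+\lambda^k\|\bm z^{k+1}\|,
\]
plus a term accounting for the truncation $c\le C^k$. This truncation term vanishes once $C^k$ exceeds the size of the inner solution, which is ensured by $\lambda_c^k c\le |S|$, i.e., $c\lesssim L_0/\lambda^k$. The logarithmic growth of $C^k$ is slower than this, so we would instead argue through the $\mathcal O(\ln k)$ Lipschitz bound.

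Since $\Phi^k$ is bounded above by $\mathcal O(C^k)$, summing the recursion up to $T$ gives
\[
\sum_{k\le T}\min(a_k,b_k)\,\|\text{step}_k\|^2\le\mathcal O(\ln T).
\]
Inserting the weights then gives
\[
\min_{k\le T}\text{residual}_k^2\lesssim\frac{(\alpha^T)^2}{\lambda^T}\cdot\frac{\ln T}{T}+(\lambda^T)^2,
\]
and with the given schedules this is of order $T^{-1/3}$ up to logarithmic factors. Setting this equal to $\varepsilon^2$ gives $T(\varepsilon)=\mathcal O(\varepsilon^{-6})$.

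\textbf{Main obstacle.} The hardest step is Step 2(iii) together with the unbounded $c$. The constants grow like $\ln k$, and the feasible set $[0,C^k]$ varies with $k$, so $\bm z^*_k$ moves from one iteration to the next. This drift must be bounded in terms of $\lambda^{k-1}/\lambda^k-1=\mathcal O(1/k)$ and the increments of $C^k$, so that $\sum_k e_k$ grows at most polylogarithmically. We would first attempt this using the standard sensitivity bound for strongly convex projections, $\|\bm z^*_k-\bm z^*_{k-1}\|\le\frac{\lambda^{k-1}-\lambda^k}{\lambda^k}\|\bm z^*_{k-1}\|$, combined with the nonexpansiveness of the projection onto nested intervals.
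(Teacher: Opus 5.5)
Your closing arithmetic does not give the claimed rate. With $\alpha^T\propto T^{1/3}$ and $\lambda^T\propto T^{-1/4}$, your own bound $(\alpha^T)^2(\lambda^T)^{-1}\ln T/T$ is of order $T^{2/3+1/4-1}\ln T=T^{-1/12}\ln T$, not $T^{-1/3}$. As written, the argument therefore yields only roughly $T(\varepsilon)=\mathcal{O}(\varepsilon^{-24})$.

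Two things go wrong before that point. First, the $L_0\|\bm z^{k+1}-\bm z^k\|$ term in your residual, which you can only control with weight $b_k\sim\lambda^k$, is spurious. Algorithm 1 solves \eqref{eq:16} exactly, so $\bm z^{k+1}=\bm z_k^*(\bm x^k)$ and the left side of your bound in (ii) is zero. Moreover, \eqref{eq:17} already uses $\bm z^{k+1}$. Hence at $(\bm x^{k+1},\bm z^{k+1})$, whenever the cap on $c$ is inactive, the residual is at most $(\alpha^k+L_k+2L_0)\|\bm x^{k+1}-\bm x^k\|+(\lambda^k+\beta^k)(C^k+\|\bm y^{k+1}\|)$. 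The only $\Delta\bm z$ contribution is $\beta^k|c^{k+1}-c^k|\le\beta^kC^k$.

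Second, the recursion in Step 2 is asserted, not derived, and the condition that makes it hold is missing from Step 1. Since $\bm z_k^*$ is only $(L_0/\lambda^k)$-Lipschitz, the $\bm z$-response to an $\bm x$-step can cost up to $(L_0^2/\lambda^k)\|\bm x^{k+1}-\bm x^k\|^2$ of the ascent. So $a_k\sim\alpha^k$ requires $\alpha^k\gtrsim L_0^2/\lambda^k\propto k^{1/4}$, not merely $\alpha^k\gtrsim\ln k$. This is exactly what the pairing $\alpha^k\propto k^{1/3}$, $\lambda^k\propto k^{-1/4}$ encodes.

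The paper's route handles both points at once and needs no $2f_k(\bm x,\bm z_k^*(\bm x))-f_k(\bm x,\bm z)$ potential.
\begin{itemize}
\item It works with $\Phi_k(\bm x)=\min_{\bm z}f_k(\bm x,\bm z)$, the optimal value of \eqref{eq:16}. By Danskin's theorem and the gradient-consistency result of \cite{Tianyi:20}, $\Phi_k$ is smooth with $\nabla\Phi_k(\bm x)=\nabla_{\bm x}f(\bm x,\bm z_k^*(\bm x))$ and Lipschitz constant $\mathcal{O}(C^k+L_0^2/\lambda^k)$.
\item Then \eqref{eq:17} is exactly a projected gradient step on $\Phi_k$, giving $\Phi_k(\bm x^{k+1})-\Phi_k(\bm x^k)\ge\frac{\alpha^k}{2}\|\bm x^{k+1}-\bm x^k\|^2$ for large $k$.
\item The drift $\Phi_k-\Phi_{k+1}$ is bounded by your $e_k$: Bertrand-type summable terms plus $L_0(C^{k+1}-C^k)$.
\item Telescoping yields $\sum_{k\le T}\alpha^k\|\bm x^{k+1}-\bm x^k\|^2=\mathcal{O}(\ln T)$.
\item The smallest squared residual over $T/2\le k\le T$ is then $\lesssim(\alpha^T)^2\ln T/T+T^{-1/2}\ln^2T$, i.e.\ $T^{-1/3}$ up to logarithms, which is the $\mathcal{O}(\varepsilon^{-6})$ claim.
\end{itemize}

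The second gap is the cap $c\le C^k$, which you flag but do not close; the $\mathcal{O}(\ln k)$ Lipschitz bound on $\nabla_{\bm x}f$ says nothing about it. By \eqref{eq:41}, $c^{k+1}=C^k$ whenever $S(\{\bm w_l^k\},\bm v^k)\le\beta^kc^k-(\lambda_c^k+\beta^k)C^k$. At such an iterate the normal cone of $\{c\ge0\}$ at $C^k>0$ is $\{0\}$. So the $c$-component of the residual for \eqref{eq:26} equals $|S(\{\bm w_l^{k+1}\},\bm v^{k+1})|$, which need not be small.

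You must therefore show that some iterate in the averaging window is uncapped. Neither your potential argument nor the stated smoothness and compactness hypotheses provide this. Suppose $\bm\psi$ is constant and the initial point has $\nabla S=\bm 0$ and $S<0$; Algorithm 1 only requires feasibility for \eqref{eq:26}, not $S\ge0$. Then \eqref{eq:17} returns the same $\{\bm w_l,\bm v\}$ at every iteration, $c^{k+1}>0$, and the residual stays at $|S|$ forever.

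So an additional ingredient is required. One option is an assumption excluding infeasible stationary points of $\bm y^T\bm\psi+cS$ for large $c$, i.e.\ a constraint-qualification-type condition. Another is to measure stationarity relative to $\Omega_{\bm y}\times[0,C^k]$, but that no longer connects to \eqref{eq:19} through Proposition 1. The paper's proof sketch is silent on this point as well, so you should state explicitly which option you adopt.
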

\emph{Proof:} Due to page limitation, we only provide a sketch of the proof in the following. More details can be found in Appendix B in supplementary material. First, we can prove the smoothness of the regularized inner minization subproblem \eqref{eq:16} due to the twice continuously differentiable of $S\big( {\left\{ {{{\bm w}_l}} \right\}_{l = 0}^I,{\bm v}} \big)$, ${\bm \psi} \big( {\left\{ {{{\bm w}_l}} \right\}_{l = 0}^I} \big)$ and gradient consistence theory \cite{Tianyi:20}. Second, we can prove the difference of the optimal objective value of the subproblem \eqref{eq:16} between the iteration points ${\big\{ {\left\{ {{\bm w}_l^{k + 1}} \right\}_{l = 0}^I,{{\bm v}^{k + 1}}} \big\}}$ and ${\big\{ {\left\{ {{\bm w}_l^{k}} \right\}_{l = 0}^I,{{\bm v}^{k}}} \big\}}$ is lower bounded. Lastly, applying the definition of $\varepsilon $-stationary point and convergent Bertrand series \cite{Bertrand:23}, Proposition 2 is proved.$\Box$

\vspace{0.2cm}
According to Proposition 2, the proposed algorithm can find at least one $\varepsilon$-stationary point of \eqref{eq:26} within ${\rm{{\cal O}}}\left( {{\varepsilon ^{ - 6}}} \right)$ iterations (${\rm{{\cal O}}}\left( {{\varepsilon ^{ - 6}}} \right)$ can be interpreted as a conservative convergence rate). This means that as the number of iteration $k$ increases, we are guaranteed to find a $\varepsilon$-stationary point with $\varepsilon$ arbitrarily close to zero, which corresponds to a stationary point of \eqref{eq:26}. According to Proposition 1, a stationary point of \eqref{eq:26} is at least a stationary point of \eqref{eq:19}. Therefore, Algorithm 1 is guaranteed to give at least a stationary point of \eqref{eq:19} as the number of iteration goes to infinity.

Regarding the compact set assumption in Proposition 2, as the wireless resources related to the variable $\bm v$ should not be unlimited, the feasible set ${\Omega _{\bm v}}$ should be compact. On the other hand, Proposition 2 requires the global optimal solution of \eqref{eq:16} and \eqref{eq:17}, which seems to be restrictive since the objective functions in \eqref{eq:16} and \eqref{eq:17} contain the general sensing performance metric ${S\big( {\left\{ {{\bm w}_l} \right\}_{l = 0}^I,{\bm u}} \big)}$. But a closer look reveals that the sensing metrics in \eqref{eq:16} and \eqref{eq:17} only depend on the $k^{th}$ iteration solution but not the unknown variables, making the objective functions of \eqref{eq:16} and \eqref{eq:17} only contain at most quadratic terms of the optimization variables. This makes it possible to obtain the global optimal solutions of \eqref{eq:16} and \eqref{eq:17}, which will be discussed in the next section.
		
\section{Solving \eqref{eq:16} and \eqref{eq:17} under different ISAC sensing metrics}
\subsection{Obtaining the global optimal solution of \eqref{eq:16}}
Notice that $\bm y$ and $c$ are decoupled in both the objective function and the constraints in \eqref{eq:16}. This makes it possible to solve \eqref{eq:16} by independently minimizing subproblems w.r.t. $\bm y$ and $c$. The details are given as follows.

\noindent \underline{\emph{Updating $\bm y$:}} Focusing on the terms related to $\bm y$ in \eqref{eq:16}, the problem becomes
\begin{equation}\label{eq:27}
	\begin{split}
		\mathop {\min }\limits_{{\bm y}}\;\;& {{\bm y}^T}{\bm \psi }\left( {\left\{ {{{\bm w}_l^k}} \right\}_{l=0}^I} \right)+\frac{{\lambda _{\bm y}^k}}{2}{\left\| {\bm y} \right\|^2} \\
		s.t.\;\;&{\bm y} \in {\Omega _{\bm y}}.
	\end{split}
\end{equation}
Since this is a projection problem with a quadratic objective function, the optimal solution is
\begin{equation}\label{eq:50}
	{y^{k + 1}} = {{\rm Proj}_{{\Omega _{\bm y}}}}\left( { - \frac{{{\bm \psi} \left( {\left\{ {{\bm w}_l^k} \right\}_{l = 0}^I} \right)}}{{\lambda _{\bm y}^k}}} \right),
\end{equation}
where ${{\rm Proj}_{\rm{{\cal A}}}}\left( b \right)$ is to project $b$ to the feasible set ${\rm{{\cal A}}}$. If the constraint on ${\bm y}$ is a singleton, $\bm y$ is a constant and \eqref{eq:27} can be skipped. In the worst utility maximization, ${\Omega _{\bm y}}: = \big\{ {\sum\nolimits_{i = 1}^I {{y_i}}  = 1,{y_i} \ge 0,\forall i} \big\}$, and \eqref{eq:50} under such constraint can be obtained as \cite{Zongze:21}
\begin{equation}\label{eq:40}
	{{\bm y}^{k + 1}}\left( \iota \right) = {\left[ { - \frac{{{\bm \psi} \big( {\big\{ {{\bm w}_l^k} \big\}_{l = 0}^I} \big) + \iota }}{{\lambda _{\bm y}^k}}} \right]^ + },
\end{equation}
where $\iota$ satisfies ${{\bm 1}^T}{{\bm y}^{k + 1}}\left( \iota \right) = 1$ and can be found by the bisection method from the interval $\big[ {\mathop {\min }\limits_i \big\{ { -{\psi _i}\big( {\left\{ {{\bm w}_l^k} \right\}_{l = 0}^I} \big)} \big\}} \big. - {\lambda _{\bm y}^k},$ $\big. {\mathop {\max }\limits_i \big\{ { - {\psi _i}\big( {\left\{ {{\bm w}_l^k} \right\}_{l = 0}^I} \big)} \big\}} \big]$.

\noindent \underline{\emph{Updating $c$:}} Focusing on the terms related to $c$ in \eqref{eq:16}, the problem becomes
\begin{equation}\label{eq:28}
	\begin{split}
		\mathop {\min }\limits_c&\;\; cS\left( {{\left\{ {{\bm w}_l^k} \right\}_{l = 0}^I},{\bm v^k}} \right) + \frac{{{\lambda _c^k}}}{2}{c^2} + \frac{{{\beta ^k}}}{2}{\left( {c - {c^k}} \right)^2}\\
		s.t.&\;\;c \in \left[ {0,{C_k}} \right].
	\end{split}
\end{equation}
The optimization problem \eqref{eq:28} can be solved in closed-form with a box constraint projection
\begin{equation}\label{eq:41}
	c^{k+1} = {{\rm Proj}_{\left[ {0,{C_k}} \right]}}\left( {\frac{{{\beta ^k}{c^k} - S\left( {{{\left\{ {{\bm w}_l^k} \right\}_{l = 0}^I}},{{\bm v}^k}} \right)}}{{{\lambda _c^k} + {\beta ^k}}}} \right).
\end{equation}
Since the updating of $\bm y$ and $c$ involve simple operations, \eqref{eq:16} can be solved very efficiently.

\subsection{Obtaining the global optimal solution of \eqref{eq:17}}
Notice that $\left\{ {{{\bm w}_l}} \right\}_{l = 0}^I$ and $\bm v$ are decoupled in both objective function and constraints of \eqref{eq:17}.Therefore, ${\left\{ {{\bm w}_l^{k + 1}} \right\}_{l = 0}^I}$ and ${{\bm v}^{k + 1}}$ can be separately solved without losing the optimality. 

\noindent \underline{\emph{Updating ${\left\{ {{{\bm w}_l}} \right\}_{l = 0}^I}$:}} Focusing on the terms related to ${\left\{ {{{\bm w}_l}} \right\}_{l = 0}^I}$ in \eqref{eq:17}, the maximization problem is given by
\begin{equation}\label{eq:29}
\begin{split}
	&\mathop {\min }\limits_{\left\{ {{{\bm w}_l}} \right\}_{l = 0}^I} \sum\limits_{l = 0}^I {\left\| {{{\bm w}_l} - \left( {{\bm w}_l^k + \frac{1}{{{\alpha ^k}}}{{\left( {{{\bm y}^{k + 1}}} \right)}^T}{\nabla _{{{\bm w}_l}}}{\bm \psi} \left( {\left\{ {{\bm w}_l^k} \right\}_{l = 0}^I} \right)} \right.} \right.} \\
	&\;\;\;\;\;\;\;\;\;\;\;\left. {\left. { + \frac{{{c^{k + 1}}}}{{{\alpha ^k}}}{\nabla _{{{\bm w}_l}}}S\left( {\left\{ {{\bm w}_l^k} \right\}_{l = 0}^I,{{\bm v}^k}} \right)} \right)} \right\|_2^2\\
	&\;\;\;s.t.\;\;\;\sum\nolimits_{l = 0}^I {\left\| {{{\bm w}_l}} \right\|_2^2 \le {P_{BS}}} .
\end{split}
\end{equation}
Notice that the constraint of \eqref{eq:29} is a sum power constraint, its closed-form solution is given by
\begin{equation}\label{eq:31}
	\begin{split}
		{\bm w}_l^{k + 1}\left( \chi  \right) =& \frac{{{\alpha ^k}{\bm w}_l^k + {{\left( {{{\bm y}^{k + 1}}} \right)}^T}{\nabla _{{{\bm w}_l}}}{\bm \psi} \left( {\left\{ {{\bm w}_l^k} \right\}_{l = 0}^I} \right)}}{{{\alpha ^k} + 2\chi }}\\
		&+ \frac{{{c^{k + 1}}{\nabla _{{{\bm w}_l}}}S\left( {\left\{ {{\bm w}_l^k} \right\}_{l = 0}^I,{{\bm v}^k}} \right)}}{{{\alpha ^k} + 2\chi }},
	\end{split}
\end{equation}
where $\chi  = 0$ if $\sum\nolimits_{l = 0}^I {\left\| {{\bm w}_l^{k + 1}\left( 0 \right)} \right\|_2^2}  \le {P_{BS}}$. Otherwise, $\chi $ satisfies the equation $\sum\nolimits_{l = 0}^I {\left\| {{\bm w}_l^{k + 1}\left( \chi  \right)} \right\|_2^2}  = {P_{BS}}$, which can be found by the bisection method from the interval $\big[ {0,\sqrt {\frac{1}{{4{P_{BS}}}}\sum\nolimits_{l = 0}^I {{{\big\| {{\alpha ^k}{\bm w}_l^{k + 1}\left( 0 \right)} \big\|}^2}} } } \big]$.

\noindent \underline{\emph{Updating ${\bm v}$:}} From \eqref{eq:25}, the optimization problem w.r.t. $\bm v$ is given by 

\begin{equation}\label{eq:34}
	\begin{split}
		\mathop {\min }\limits_{\bm v}&\;\; {\big\| {{\bm v} - \big( {{{\bm v}^k} + {{{c^{k + 1}}{\nabla _{\bm v}}S\big( {\big\{ {{\bm w}_l^k} \big\}_{l = 0}^I,{{\bm v}^k}} \big)} \mathord{\big/
							{\vphantom {{{c^{k + 1}}{\nabla _{\bm v}}S\big( {\big\{ {{\bm w}_l^k} \big\}_{l = 0}^I,{{\bm v}^k}} \big)} {{\alpha ^k}}}} \big.
							\kern-\nulldelimiterspace} {{\alpha ^k}}}} \big)} \big\|^2}\\
		s.t.&\;\;{\bm v} \in {\Omega _{\bm v}}.
	\end{split}
\end{equation}
This is a simple projection problem and the optimal solution is 
\begin{equation}\label{eq:42}
	{{\bm v}^{k + 1}} = {{\rm Proj}_{{\Omega _{\bm v}}}\big( {{{\bm v}^k} + {{{c^{k + 1}}{\nabla _{\bm v}}S\big( {\big\{ {{\bm w}_l^k} \big\}_{l = 0}^I,{{\bm v}^k}} \big)} \mathord{\big/
					{\vphantom {{{c^{k + 1}}{\nabla _{\bm v}}S\big( {\big\{ {{\bm w}_l^k} \big\}_{l = 0}^I,{{\bm v}^k}} \big)} {{\alpha ^k}}}} \big.
					\kern-\nulldelimiterspace} {{\alpha ^k}}}} \big)}.
\end{equation}
Notice that \eqref{eq:31} and \eqref{eq:42} only contain simple one dimensional bisection search and projection update, thus solving \eqref{eq:17} is very efficient.

\subsection{Computation of gradients in various ISAC scenarios}
Notice that computation of various gradients only occurs when solving \eqref{eq:17}. As an example, we compute the gradient ${{\nabla _{{{\bm w}_l}}}{\bm \psi} \left( {\left\{ {{\bm w}_l^k} \right\}_{l = 0}^I} \right)}$ when ${\psi _i}\big(  \cdot  \big)$ is the secrecy rate ${{{\big\{ {{{\rm{{\cal R}}}_i}\big( {\left\{ {{{\bm w}_l}} \right\}_{l = 0}^I} \big) - {D_i}\big( {\left\{ {{{\bm w}_l}} \right\}_{l = 0}^I} \big)} \big\}}_{\forall i}}}$:
\begin{equation}\label{eq:39}
	\begin{split}
		&{\nabla _{{{\bm w}_l}}}{\bm \psi} \left( {\left\{ {{{\bm w}_l}} \right\}_{l = 0}^I} \right)\\
		&= \left[ {{\nabla _{{{\bm w}_l}}}{{\rm{{\cal R}}}_1}\left( {\left\{ {{{\bm w}_l}} \right\}_{l = 0}^I} \right) - {\nabla _{{{\bm w}_l}}}{D_1}\left( {\left\{ {{{\bm w}_l}} \right\}_{l = 0}^I} \right), \ldots ,} \right.\\
		&\;\;\;\;\;\;\left. {{\nabla _{{{\bm w}_l}}}{{\rm{{\cal R}}}_I}\left( {\left\{ {{{\bm w}_l}} \right\}_{l = 0}^I} \right) - {\nabla _{{{\bm w}_l}}}{D_I}\left( {\left\{ {{{\bm w}_l}} \right\}_{l = 0}^I} \right)} \right],
	\end{split}
\end{equation}
where ${\nabla _{{{\bm w}_l}}}{{{\rm{{\cal R}}}_i}\big( {\left\{ {{{\bm w}_l}} \right\}_{l = 0}^I} \big)}$ and ${\nabla _{{{\bm w}_l}}}{{D_i}\big( {\left\{ {{{\bm w}_l}} \right\}_{l = 0}^I} \big)}$ are respectively given in \eqref{eq:35} (shown at the top of this page), and ${\tau  _{i,t}} = \left\| {{{\bm w}_i}} \right\|_2^2 + \big( {{2^{{D_i}\left( {\left\{ {{{\bm w}_l}} \right\}_{l = 0}^I} \right)}} - 1} \big)\left\| {{{\bm w}_t}} \right\|_2^2$. If there is no eavesdropper, we can just set ${D_i}\big( {\left\{ {{{\bm w}_l}} \right\}_{l = 0}^I} \big) \equiv 0$.
\begin{figure*}
	\begin{subequations}\label{eq:35}
		\begin{align}
			&{\nabla _{{{\bm w}_l}}}{{\rm{{\cal R}}}_i}\left( {\left\{ {{{\bm w}_l}} \right\}_{l = 0}^I} \right) = \left\{ \begin{array}{l}
				\frac{{2{{\bar {\bm h}}_i}{\bm h}_i^T{{\bm w}_l}}}{{\left( {\sum\nolimits_{t = 0}^I {{{\left| {{\bm h}_i^T{{\bm w}_t}} \right|}^2}}  + \sigma _i^2} \right)\ln 2}} - \frac{{2{{\bar {\bm h}}_i}{\bm h}_i^T{{\bm w}_l}}}{{\left( {\sum\nolimits_{t = 0,t \ne i}^I {{{\left| {{\bm h}_i^T{{\bm w}_t}} \right|}^2}}  + \sigma _i^2} \right)\ln 2}},\;\;l \ne i,\\
				{{2{{\bar {\bm h}}_i}{\bm h}_i^T{{\bm w}_l}} \mathord{\left/
						{\vphantom {{2{{\bar {\bm h}}_i}{\bm h}_i^T{{\bm w}_l}} {\left[ {\left( {\sum\nolimits_{t = 0}^I {{{\left| {{\bm h}_i^T{{\bm w}_t}} \right|}^2} + \sigma _i^2} } \right)\ln 2} \right]}}} \right.
						\kern-\nulldelimiterspace} {\left[ {\left( {\sum\nolimits_{t = 0}^I {{{\left| {{\bm h}_i^T{{\bm w}_t}} \right|}^2} + \sigma _i^2} } \right)\ln 2} \right]}},\;\;\;\;\;\;\;\;\;\;\;\;\;\;\;\;l = i.
			\end{array} \right.\\
			&{\nabla _{{{\bm w}_l}}}{D_i}\left( {\left\{ {{{\bm w}_l}} \right\}_{l = 0}^I} \right) = \left\{ \begin{array}{l}
				\frac{{\left( {1 - {2^{{D_i}\left( {\left\{ {{{\bm w}_l}} \right\}_{l = 0}^I} \right)}}} \right)2{{\bm w}_l}}}{{{2^{{D_i}\left( {\left\{ {{{\bm w}_l}} \right\}_{l = 0}^I} \right)}}{\tau _{i,l}}\left[ {\frac{{\ln 2}}{{\mathop {\max }\limits_j \left\{ {{{\rho _j^2} \mathord{\left/
												{\vphantom {{\rho _j^2} {\varsigma _j^2}}} \right.
												\kern-\nulldelimiterspace} {\varsigma _j^2}}} \right\}\left\| {{{\bm w}_i}} \right\|_2^2}} + \sum\limits_{t = 0,t \ne i}^I {\frac{{\left\| {{{\bm w}_t}} \right\|_2^2\ln 2}}{{{\tau _{i,t}}}}} } \right]}},\;\;\;l \ne i,\\
			{{\left( {{2^{{D_i}\left( {\left\{ {{{\bm w}_l}} \right\}_{l = 0}^I} \right)}} - 1} \right)2{{\bm w}_i}} \mathord{\left/
					{\vphantom {{\left( {{2^{{D_i}\left( {\left\{ {{{\bm w}_l}} \right\}_{l = 0}^I} \right)}} - 1} \right)2{{\bm w}_i}} {\left[ {\left\| {{{\bm w}_i}} \right\|_2^2{2^{{D_i}\left( {\left\{ {{{\bm w}_l}} \right\}_{l = 0}^I} \right)}}\ln 2} \right]}}} \right.
					\kern-\nulldelimiterspace} {\left[ {\left\| {{{\bm w}_i}} \right\|_2^2{2^{{D_i}\left( {\left\{ {{{\bm w}_l}} \right\}_{l = 0}^I} \right)}}\ln 2} \right]}},l = i.
			\end{array} \right.
		\end{align}
	\end{subequations}
\rule[0pt]{18.07cm}{0.1em}
\end{figure*}

On the other hand, we need ${\big\{ {{\nabla _{{{\bm w}_l}}}S\big( {{\left\{ {{{\bm w}_l}} \right\}_{l = 0}^I},{\bm v}} \big)} \big\}_{\forall l}}$ and ${{\nabla _{\bm v}}S\big( {{\left\{ {{{\bm w}_l}} \right\}_{l = 0}^I},{\bm v}} \big)}$ in \eqref{eq:31} and \eqref{eq:42}, respectively. Below, we list the gradients of three sensing performance metrics mentioned in Section III.

\begin{enumerate}[leftmargin=0.5cm]
	\item \underline{\emph{Sensing SINR:}} In this case, ${\bm v}={\bm u}$, and $S\left( {\left\{ {{{\bm w}_l}} \right\}_{l = 0}^I,{\bm v}} \right)$ is given by the left hand side of \eqref{eq:13a}. Correspondingly, the gradients of ${S\big( {{\left\{ {{{\bm w}_l}} \right\}_{l = 0}^I},{\bm v}} \big)}$ are given as
	\begin{subequations}\label{eq:36}
		\begin{align}
			&{\nabla _{{{\bm w}_l}}}S\left( {{\left\{ {{{\bm w}_l}} \right\}_{l = 0}^I},{\bm v}} \right) = 2{\bm A}_0^H\bar {\bm v}{{\bm v}^T}{{\bm A}_0}{{\bm w}_l} \nonumber \\
			&\;\;\;\;\;\;\;\;\;\;\;\;\;\;\;\;\;\;\;\;\;\;\;\;\;\;\;\;\;\;\;\;\;\;- 2\Gamma \sum\nolimits_{m = 1}^M {{\bm A}_m^H\bar {\bm v}{{\bm v}^T}{{\bm A}_m}{{\bm w}_l}}.\label{eq:36a}\\
			&{\nabla _{\bm u}}S\left( {{\left\{ {{{\bm w}_l}} \right\}_{l = 0}^I},{\bm v}} \right) = 2\sum\nolimits_{l = 0}^I {{\bar {\bm A}}_0}{{\bar {\bm w}}_l}{\bm w}_l^T{\bm A}_0^T{\bm v} \nonumber\\
			&\;\;\;\;\;\;\;\;\;\;\;\;\;\;\;\;\;\;\;\;\;\;\;\;\;\;\;\;\;- 2\Gamma \sum\nolimits_{l = 0}^I {\sum\nolimits_{m = 1}^M {{{\bar {\bm A}}_m}{{\bar {\bm w}}_l}{\bm w}_l^T{\bm A}_m^T{\bm v}} }  .\label{eq:36b}
		\end{align}
	\end{subequations}
	Then, the update of $\left\{ {{{\bm w}_l^{k+1}}} \right\}_{l = 0}^I$ can be obtained by putting \eqref{eq:36a} back to \eqref{eq:31}. For the update of $\bm v$, since ${\Omega _{\bm v}}$ is a normalization constraint, the optimal solution $\bm v^{k+1}$ is given by ${{\bm v}^{k + 1}} = \frac{{{\alpha ^k}{{\bm v}^k} + {c^{k + 1}}{\nabla _{\bm v}}S\left( {{\left\{ {{\bm w}_l^k} \right\}_{l = 0}^I},{{\bm v}^k}} \right)}}{{{{\left\| {{\alpha ^k}{{\bm v}^k} + {c^{k + 1}}{\nabla _{\bm v}}S\left( {{\left\{ {{\bm w}_l^k} \right\}_{l = 0}^I},{{\bm v}^k}} \right)} \right\|}_2}}}$.
	
	\item \underline{\emph{Beampattern matching:}} In this case, ${\Omega _{\bm v}} = \emptyset $, and $S\left( {\left\{ {{{\bm w}_l}} \right\}_{l = 0}^I} \right) = \gamma  - MSE\left( {\left\{ {{{\bm w}_l}} \right\}_{l = 0}^I} \right)$, where $MSE\left( {\left\{ {{{\bm w}_l}} \right\}_{l = 0}^I} \right)$ is given by \eqref{eq:20}. Then, the gradient of $S\big( {\left\{ {{{\bm w}_l}} \right\}_{l = 0}^I} \big)$ is given as
	\begin{equation}\label{eq:37}
		\begin{split}
			&{\nabla _{{{\bm w}_l}}}S\left( {\left\{ {{{\bm w}_l}} \right\}_{l = 0}^I} \right) =\frac{4}{T}\sum\nolimits_{t = 1}^T\\
			& {\big[ {{P_d}\left( {{\theta _t}} \right) - {{\bm a}^H}\left( {{\theta _t}} \right)\big( {\textstyle\sum\nolimits_{i = 0}^I {{{\bm w}_i}{\bm w}_i^H} } \big){\bm a}\left( {{\theta _t}} \right)} \big]} {\bm a}\left( {{\theta _t}} \right){{\bm a}^H}\left( {{\theta _t}} \right){{\bm w}_l}.
		\end{split}
	\end{equation}
	The update of $\left\{ {{{\bm w}_l^{k+1}}} \right\}_{l = 0}^I$ is obtained by putting \eqref{eq:37} to \eqref{eq:31}. Since $\bm v$ does not appear in the sensing metric, there is no optimization step for $\bm v$.
	
	\item \underline{\emph{Detection probability:}} In this case, ${\Omega _{\bm v}} = \emptyset $, and $S\left( {\left\{ {{{\bm w}_l}} \right\}_{l = 0}^I} \right)$ is given by the left hand side of \eqref{eq:24}, which gives the gradient of $S\big( {\left\{ {{{\bm w}_l}} \right\}_{l = 0}^I} \big)$ as
	\begin{equation}\label{eq:38}
		{\nabla _{{{\bm w}_l}}}S\left( {\left\{ {{{\bm w}_l}} \right\}_{l = 0}^I} \right) = 2{{\bm A}_0}{{\bm w}_l}.
	\end{equation}
Similar to the above two cases, the update of $\left\{ {{{\bm w}_l^{k+1}}} \right\}_{l = 0}^I$ is obtained by putting \eqref{eq:38} to \eqref{eq:31}. Similar to beampattern matching case, there is no $\bm v$ in the sensing metric, the corresponding optimization step can be skipped.
\end{enumerate}

Notice that the proposed algorithm can handle sensing metrics beyond the above cases. When a new sensing metric is employed, the only change involved is the gradients ${\nabla _{{{\bm w}_l}}}S\big( {\left\{ {{\bm w}_l^k} \right\}_{l = 0}^I,{{\bm v}^k}} \big)$ and ${\nabla _{\bm v}}S\big( {\left\{ {{\bm w}_l^k} \right\}_{l = 0}^I,{{\bm v}^k}} \big)$, while other parts of the algorithm remain the same. For example, if the sensing metric is the mutual information (MI) \cite{xu2025mutual}, we have $S\big( {\big\{ {{{\bm w}_l}} \big\}_{l = 0}^I} \big) = {\log _2}\big( {1 + \frac{1}{{\sigma _{BS}^2}}\sum\nolimits_{l = 0}^I {{\bm w}_l^H\big( {\sum\nolimits_{m = 0}^M {{\bm A}_m^H{{\bm A}_m}} } \big){{\bm w}_l}} } \big) - \delta$, where $\delta$ is the minimal MI needed for sensing, and there is no $\bm v$. To connect with the proposed framework, we only need to compute the gradient of $S\big( {\big\{ {{{\bm w}_l}} \big\}_{l = 0}^I} \big)$ and replace ${\nabla _{{{\bm w}_l}}}S\left( {\left\{ {{\bm w}_l^k} \right\}_{l = 0}^I,{{\bm v}^k}} \right)$ in \eqref{eq:31} with the gradient of the MI metric. This advantage makes the proposed algorithm widely applicable to different ISAC scenarios.

Overall, the complexity of Algorithm 1 is summarized in Table \ref{tab1}, where ${\xi _D}$, ${\xi _{\bm y}}$ and ${\xi _{\bm w}}$ are the accuracy of the bisection search in redundancy rate $\left\{ {{D_i}} \right\}_{i = 1}^I$, auxiliary vector $\bm y$ and beamformers $\left\{ {{{\bm w}_l}} \right\}_{l = 0}^I$, respectively. For the complexity of calculating the gradient of $S\big( {\left\{ {{{\bm w}_l}} \right\}_{l = 0}^I,\bm v} \big)$ with respect to $\left\{ {{{\bm w}_l}} \right\}_{l = 0}^I$ in a single iteration, it is ${{\cal{O}}}\left( {MNI} \right)$, ${{\cal{O}}}\left( {TNI} \right)$ and ${{\cal{O}}}\left( {NI} \right)$ for the sensing SINR, beampattern matching and detection probability, respectively. ${{{\cal{O}}}_{\bm v}}$ is the complexity of updating $\bm v$ in a single iteration, which involves calculating the gradient of $S\big( {\left\{ {{{\bm w}_l}} \right\}_{l = 0}^I,\bm v} \big)$ with respect to $\bm v$ and the projection operation in (34). As $\bm v$ only appears in the sensing SINR case, ${{{\cal{O}}}_v} = {{\cal{O}}}\left( {MNI} \right)$.

\begin{table*}
	\centering
	\begin{threeparttable}[b]
		\caption{Computational complexity of Algorithm 1}
		\label{tab1}
		\begin{tabular}{|c|c|}
			\hline
			&\\[-7pt]
			&Computational complexity\\
			\hline
			&\\[-7.5pt]
			Algorithm without gradient update of $S\left( {\left\{ {{{\bm w}_l}} \right\}_{l = 0}^I,{\bm v}} \right)$ &${{\cal{O}}}\left( {K{I^2}N + K{I^2}\log \left( {\xi _D^{ - 1}} \right) + KI\log \left( {\xi _{\bm y}^{ - 1}} \right) + KNI\log \left( {\xi _{\bm w}^{ - 1}} \right)} \right)$\\
			\hline
			&\\[-6pt]
			Gradient update under sensing SINR&$K \cdot {{\cal{O}}}\left( {MNI} \right) + K \cdot {{{\cal{O}}}_{\bm v}}$\\
			\hline
			&\\[-6pt]
			Gradient update under beampattern matching&$K \cdot {{\cal{O}}}\left( {TNI} \right)$\\
			\hline
			&\\[-6pt]
			Gradient update under detection probability&$K \cdot {{\cal{O}}}\left( {NI} \right)$\\
			\hline
		\end{tabular}
	\end{threeparttable}
\end{table*}

\section{Simulation Results and Discussions}
\begin{figure*}[t]
	\centering
	\subfloat[]{\includegraphics[width=6.03cm]{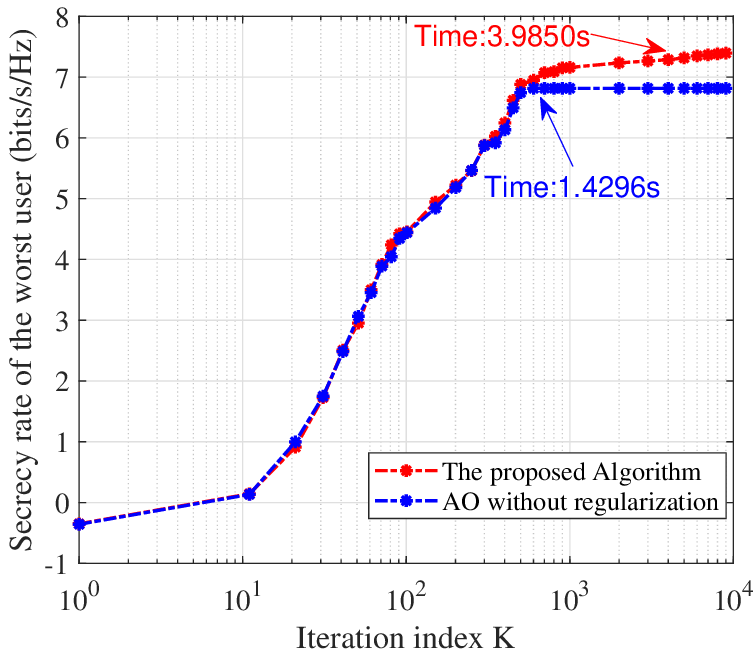}}\hfil
	\subfloat[]{\includegraphics[width=6.03cm]{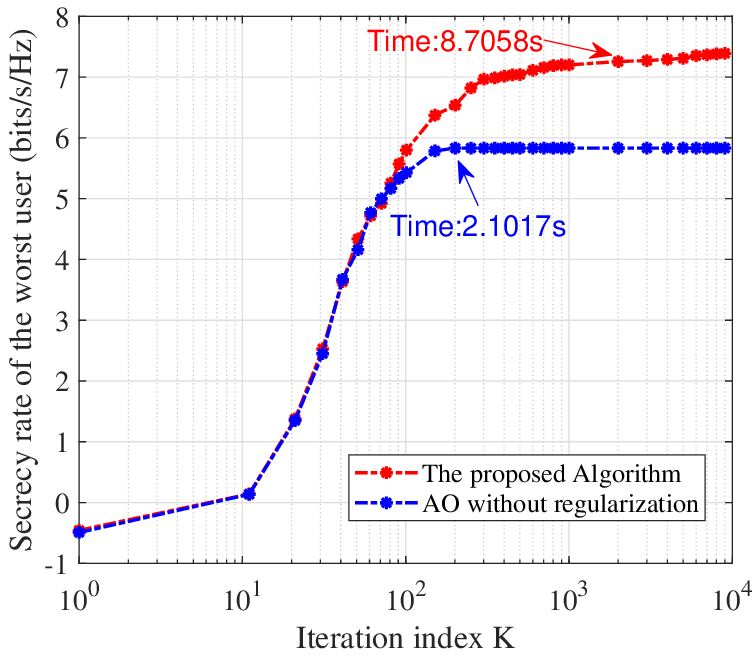}}\hfil 
	\subfloat[]{\includegraphics[width=6.03cm]{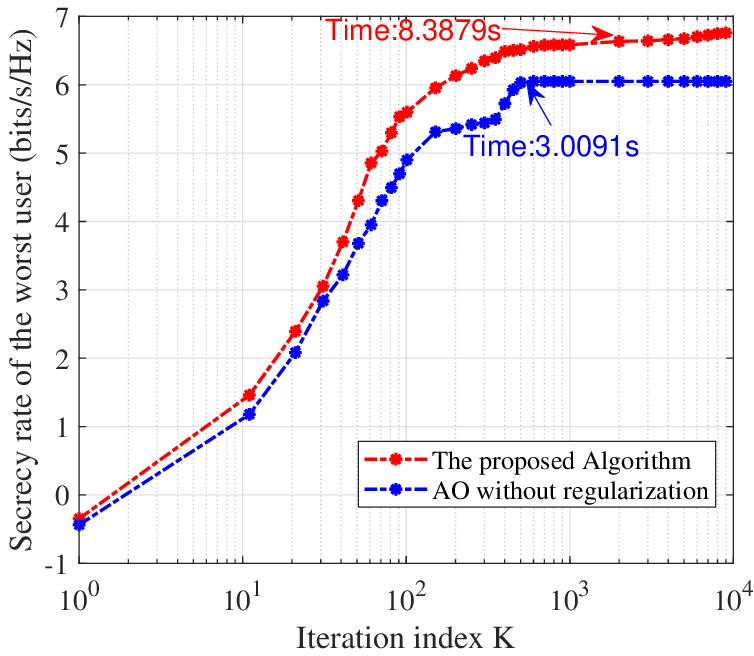}} 	
	\caption{Worst user secrecy rate performance with $I=5$, $J=8$, $N=20$, $M=5$, $P_{BS} = 20{\rm dBm}$. (a) $\Gamma  = 10{\rm dB}$ under sensing SINR constraint, (b) $\gamma  = 0.1$ under beampattern matching constraint, (c) ${P_{FA}} = 0.1$ and $\phi  = 0.9$ under detection probability constraint}
	\label{fig_three}
\end{figure*}

In this section, we evaluate the performance of the proposed algorithm through simulations, with each point in the figures obtained via averaging over 100 simulation trials. In the simulations, the position of the legitimate users, eavesdroppers, scatterings and BS are randomly distributed in a circular area with a radius of $100$m. The carrier frequency is 3.5GHz, the bandwidth is set to $10$ MHz and the noise power spectral density for legitimate users, eavesdroppers and BS are $\sigma _i^2 = \varsigma _j^2 = \sigma _{BS}^2 =  - 96{\rm dBm}/{\rm Hz}$ \cite{Zongze:21}. The target secrecy outage probability level ${\eta  _i} = 0.1$. For the proposed algorithm, the regularization parameters at the $k^{th}$ iteration are $\lambda _c^k = \lambda _{\bm y}^k = {k^{ - \frac{1}{4}}}$, ${\beta ^k} = {k^{ - 3}}$, ${\alpha ^k} = {k^{\frac{1}{3}}}$ and ${C^k} = 10 + \ln k$, which are set according to Proposition 2. The variance of the statistical CSI of eavesdroppers ${\left\{ {{\rho _{j}}} \right\}_{\forall j}}$ is set according to \cite{Zongze:21}, which follows the signal propagation model. The round-trip channel coefficients ${\left\{ {{\beta _m}} \right\}_{\forall m}}$ of the scatterings are set as ${\beta _m}\left( {{\rm dB}} \right) = 20 + 20{\log _{10}}\left( {{d_m}} \right)$ \cite{Silei:25}, where ${{d_m}}$ is the Euclidean distances between the BS and the $m^{th}$ scattering. The number of sampling angles $T=181$ in beampattern matching case are uniformly distributed over the range from $ - {90^ \circ }$ to $ {90^ \circ }$ \cite{Siyi:25}, and the desired beampattern ${P_d}\left( {{\theta _t}} \right)$ follows that from \cite[eq. 11]{Peng:22}. The channels ${\left\{ {{{\bm h}_{i}}} \right\}_{\forall i}}$ of the legitimate users are modeled as Rician fading, which contain both the line-of-sight (LoS) and non-LoS (NLoS) components~\cite{Tse:05}, and is given by ${{\bm h}_i} = \sqrt {{{{L_0}{{\left( {{{{d_{{{\bm h}_i}}}} \mathord{\left/
									{\vphantom {{{d_{{{\bm h}_i}}}} {{d_0}}}} \right.
									\kern-\nulldelimiterspace} {{d_0}}}} \right)}^{ - {\alpha _{\bm h}}}}} \mathord{\left/
				{\vphantom {{{L_0}{{\left( {{{{d_{{{\bm h}_i}}}} \mathord{\left/
												{\vphantom {{{d_{{{\bm h}_i}}}} {{d_0}}}} \right.
												\kern-\nulldelimiterspace} {{d_0}}}} \right)}^{ - {\alpha _{\bm h}}}}} {\left( {{\kappa _{\bm h}} + 1} \right)}}} \right.
				\kern-\nulldelimiterspace} {\left( {{\kappa _{\bm h}} + 1} \right)}}} \left( {\sqrt {{\kappa _{\bm h}}} {\bm h}_i^{LoS} + {\bm h}_i^{NLoS}} \right)$, with each parameter explained below.
\begin{enumerate}[leftmargin=0.5cm]
	\item ${L_0}{\left( {{{{{ d}_{{{\bm h}_{k,i}}}}} \mathord{\left/
						{\vphantom {{{{ d}_{{{\bm h}_{i}}}}} {{{ d}_0}}}} \right.
						\kern-\nulldelimiterspace} {{{ d}_0}}}} \right)^{ - {\alpha _{\bm h}}}}$ is the distance dependent pathloss from the BS to the $i^{th}$ legitimate user, where ${L_0} =  - 30$dB denotes the pathloss at the reference distance ${{d}_0} = 1$m. The parameter ${{{ d}_{{{\bm h}_{i}}}}}$ is the distance between the BS and the $i^{th}$ legitimate user, ${{\alpha _{\bm h}}} = 2$ denotes the pathloss exponent for the BS-user link, and ${{\kappa _{\bm h}}}=5$ is the Rician factor of the BS-user link.
	\item The LoS component ${\bm h}_{i}^{LoS}$ is modeled as the response vector of a uniform linear array. Hence the $n^{th}$ element ${\left[ {{\bm h}_i^{LoS}} \right]_n} = {e^{j2\pi \left( {n - 1} \right)\zeta \sin \left( \omega  \right)}}$, where $\omega $ denotes the angle-of-arrival (AoA) or
		angle-of-departure (AoD) of the array, which is uniformly distributed in $\left[ {0,2\pi } \right)$ and $\zeta=1/2$. On the other hand, ${{\bm h}_{i}^{NLoS}}$ denotes the NLoS channel component with each element obeying the normalized
		complex Gaussian distribution.
\end{enumerate}

First, we show the convergence of the proposed method for worst secrecy rate maximization under three different sensing criteria in Fig. \ref{fig_three}. Here, the times in these figures are counted at the iterations where the arrows point to. Furthermore, to demonstrate the importance of regularizing the AO minimization subproblem \eqref{eq:33}, we also include the simulation of directly executing \eqref{eq:33} without regularization. As can be seen from Fig. \ref{fig_three}, the proposed AO iteration converges fast, with the secrecy rate at $K=1000$ iterations reaches at least 95\% of that at $K=10000$ iterations. Furthermore, the performance degradation resulting from not regularizing \eqref{eq:33} under the criteria of sensing SINR, beampattern matching and detection probability are 11.51\%, 29.73\% and 15.47\%, respectively. This degradation is due to possibly unbounded solution of \eqref{eq:33}, which hinders the subsequent iterations and leads to premature termination of the process. Notice that the usual termination condition, such as the improvement of the objective function between two adjacent iterations being less than a certain tolerance (e.g., ${10^{ - 4}}$), does not apply to the proposed algorithm since it involves non-monotonic iterations. Actually, proper selection of the number of iterations is still an open problem in non-monotonic iterative algorithms \cite{Jiefei:24,Jiawei:20}, which involves the trade-off between the computational time and the performance. In the subsequent simulations, we take results at $K=1000$ and $K=10000$ iterations to demonstrate the performance of the proposed algorithm.

Next, we compare the worst user secrecy rate performance of the proposed algorithm with other optimization methods. The compared approaches are labeled in the form of $A+B$, where $A$ represents the way of handling the inner minimization in \eqref{eq:8a} while $B$ denotes the method for handling the sensing constraint and SOP constraint after applying the Lemma 1. In particular, for handling the inner minimization, we consider either penalty reformulation or auxiliary variable substitution. In penalty reformulation, it uses the log-sum-exp function with an additional penalty weight to approximate the inner minimization problem \cite{Wei-Chiang:15}. On the other hand, in auxiliary variable substitution, the inner minimization problem is replaced with an auxiliary variable, and moved as a constraint~\cite{Ziang:24}. For handling the resultant transformed problem, we consider SDR, projected gradient ascent or SCA. However, notice that the projected gradient ascent cannot be applied to the transformed problem obtained by auxiliary variable substitution since the transformed objective function does not depend on the beamformers $\left\{ {{{\bm w}_l}} \right\}_{l = 0}^I$ and the vector $\bm v$ in the sensing constraint anymore, and thus its derivative with respect to these variables are all zeros. This results in a total of 5 methods being compared. The running times of each algorithm in Figs. \ref{fig_BS} - \ref{fig_Eves} are averaged over all simulated settings in each figure. 

As can be seen in Figs. \ref{fig_BS}(a) and \ref{fig_Eves}(a), which shows the secrecy rate of the worst user under sensing SINR constraint, the proposed algorithm achieves the best secrecy rate. This is due to the following two reasons. Firstly, through the introduced dual variable $c$, the sensing performance constraint is put as part of the objective function. This means that at early stages of iteration, the sensing performance constraint can be violated, which enlarges the feasible set and increases the flexibility of finding a better iteration trajectory. As the iteration goes, the dual variable $c$ would gradually pull the iteration points to satisfy the sensing performance constraint. In this way, the sensing performance constraint is enforced by the proposed algorithm as a soft margin constraint with the margin increasingly tightened. This is in great contrast to other algorithms that treat the sensing performance constraint as hard constraint in all iterations. Secondly, since we alternatively optimize the max layer and min layer, the iteration process is non-monotonic. This enables the proposed algorithm to cross the valley of the bumpy surface to find a better solution. However, other algorithms are monotonic and they cannot go over a valley to find a solution with higher worst case secrecy rate.

\begin{figure*}[t]
	\centering
	\subfloat[]{\includegraphics[width=6.03cm]{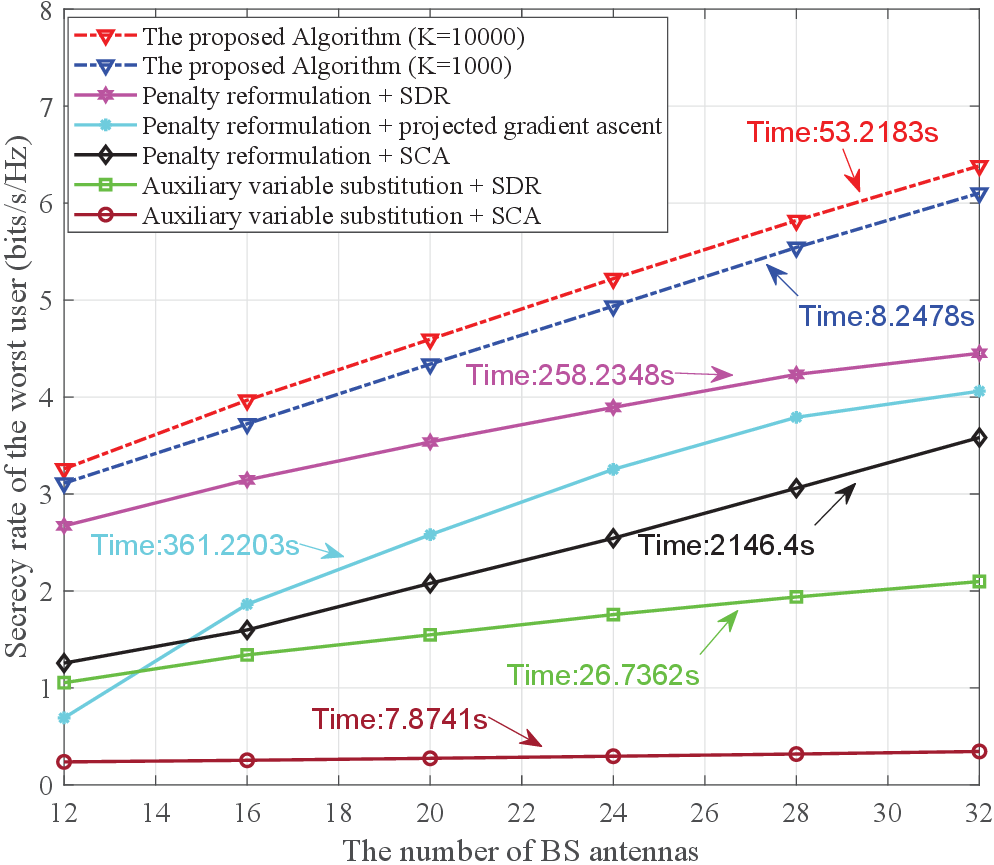}}\hfil
	\subfloat[]{\includegraphics[width=6.03cm]{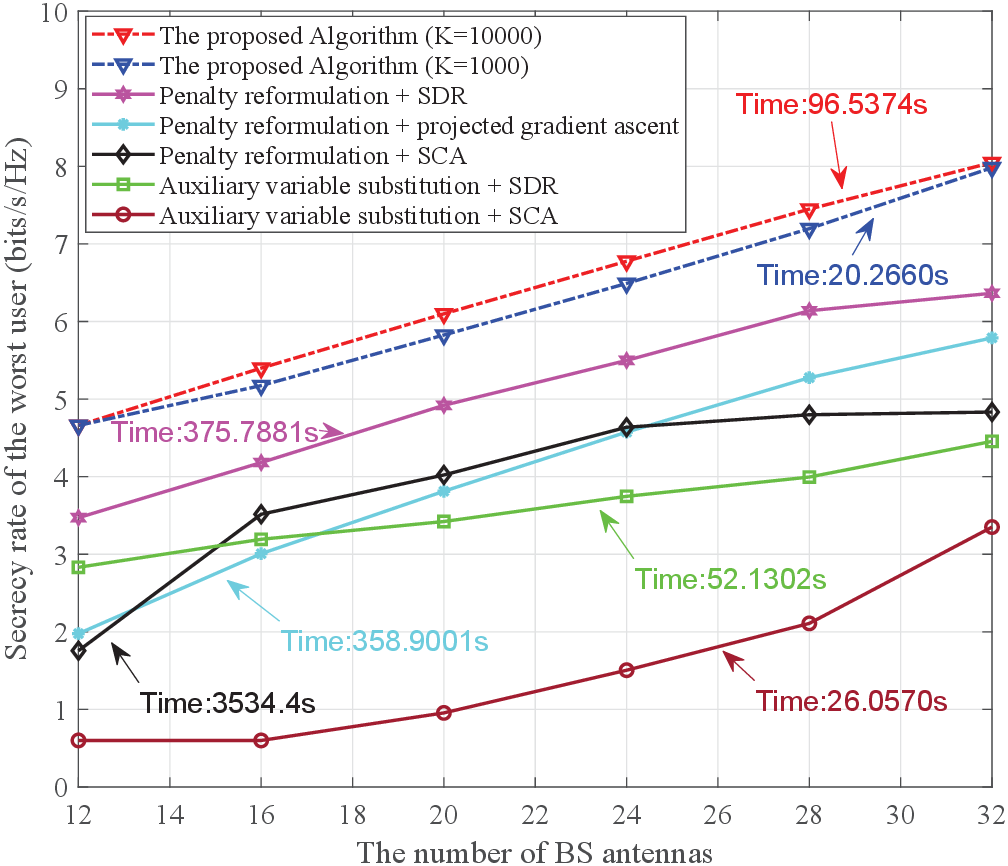}}\hfil 
	\subfloat[]{\includegraphics[width=6.03cm]{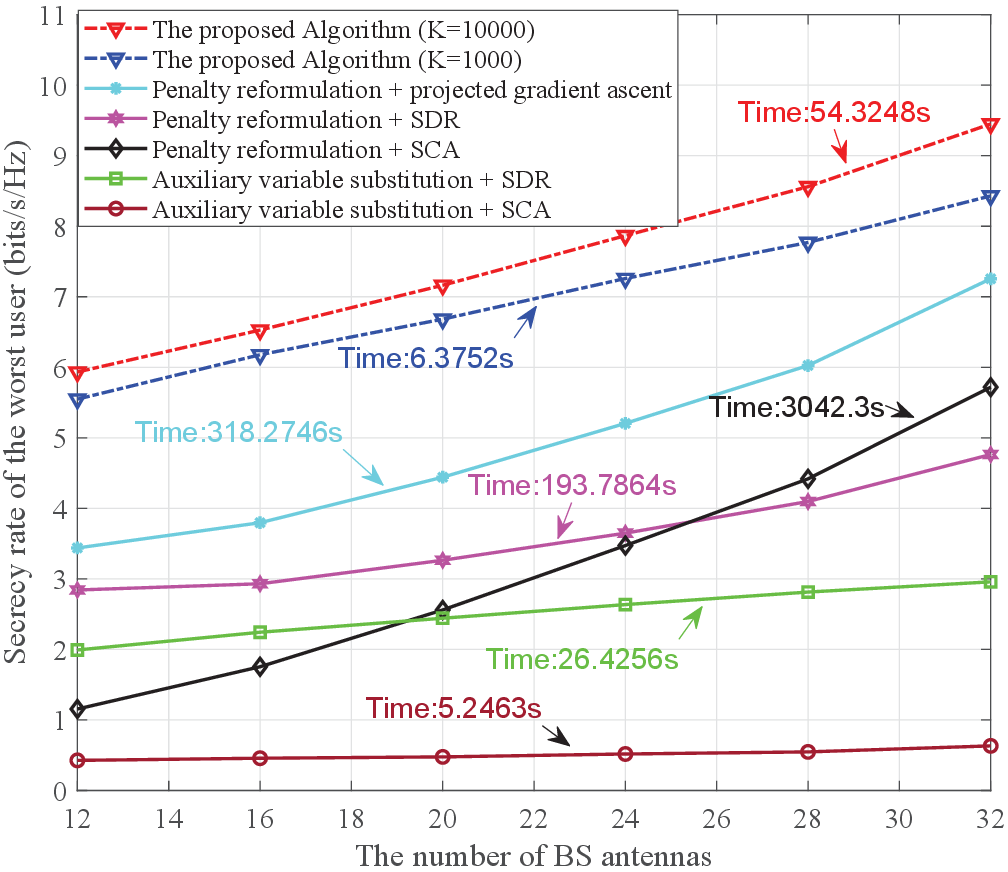}} 	
	\caption{Worst user secrecy rate versus number of BS antennas under three different sensing metrics. Simulation settings: (a) $I=7$, $J=9$, $M=6$, $P_{BS} = 20{\rm dBm}$ and $\Gamma  = 10{\rm dB}$ for sensing SINR constraint; (b)$I=5$, $J=7$, $M=4$, $P_{BS} = 20{\rm dBm}$ and $\gamma  = 0.1$ for beampattern matching constraint; (c) $I=6$, $N=28$, $M=7$, $P_{BS} = 15{\rm dBm}$, $P_{FA}=0.1$ and $\phi  = 0.85$ for detection probability constraint.}
	\label{fig_BS}
\end{figure*}

\begin{figure*}[t]
	\centering
	\subfloat[]{\includegraphics[width=6.03cm]{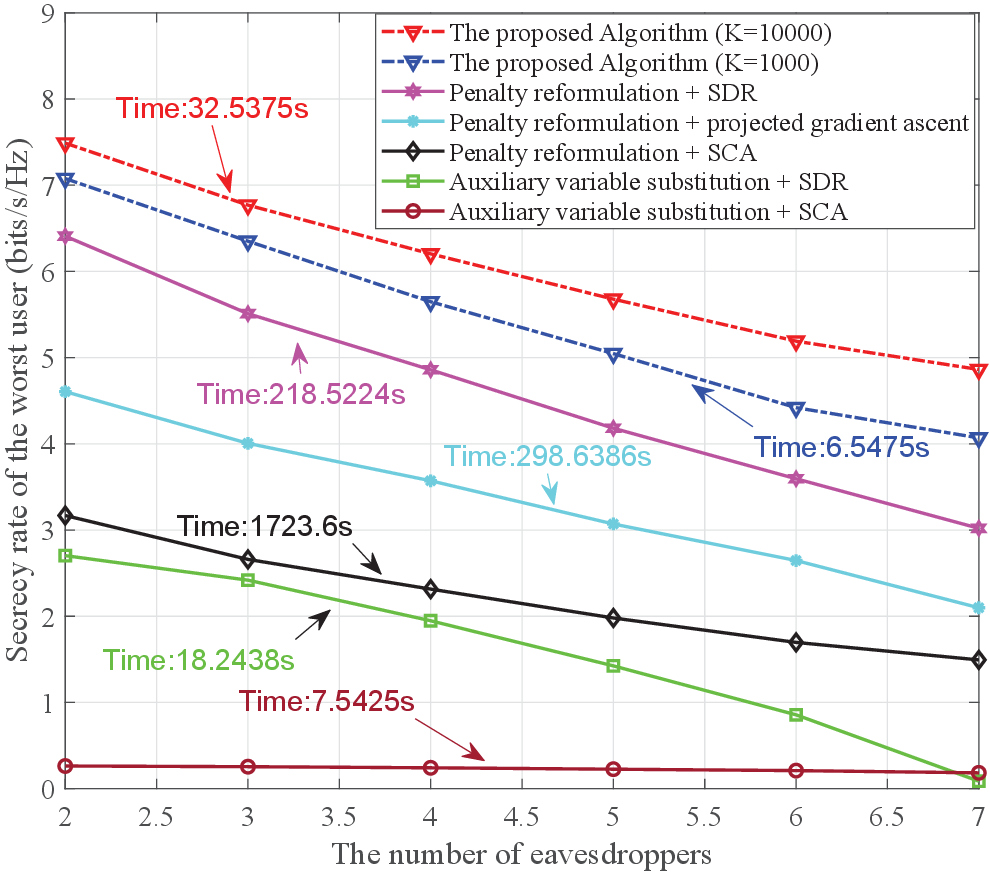}}\hfil
	\subfloat[]{\includegraphics[width=6.03cm]{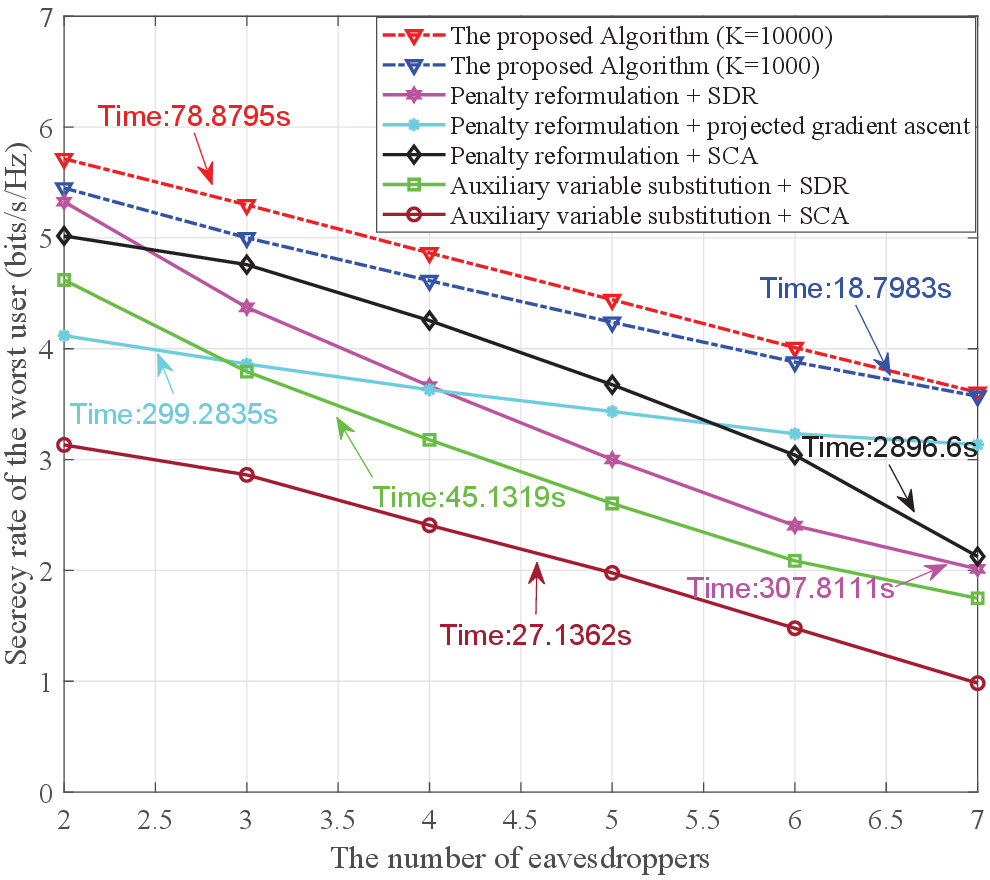}}\hfil 
	\subfloat[]{\includegraphics[width=6.03cm]{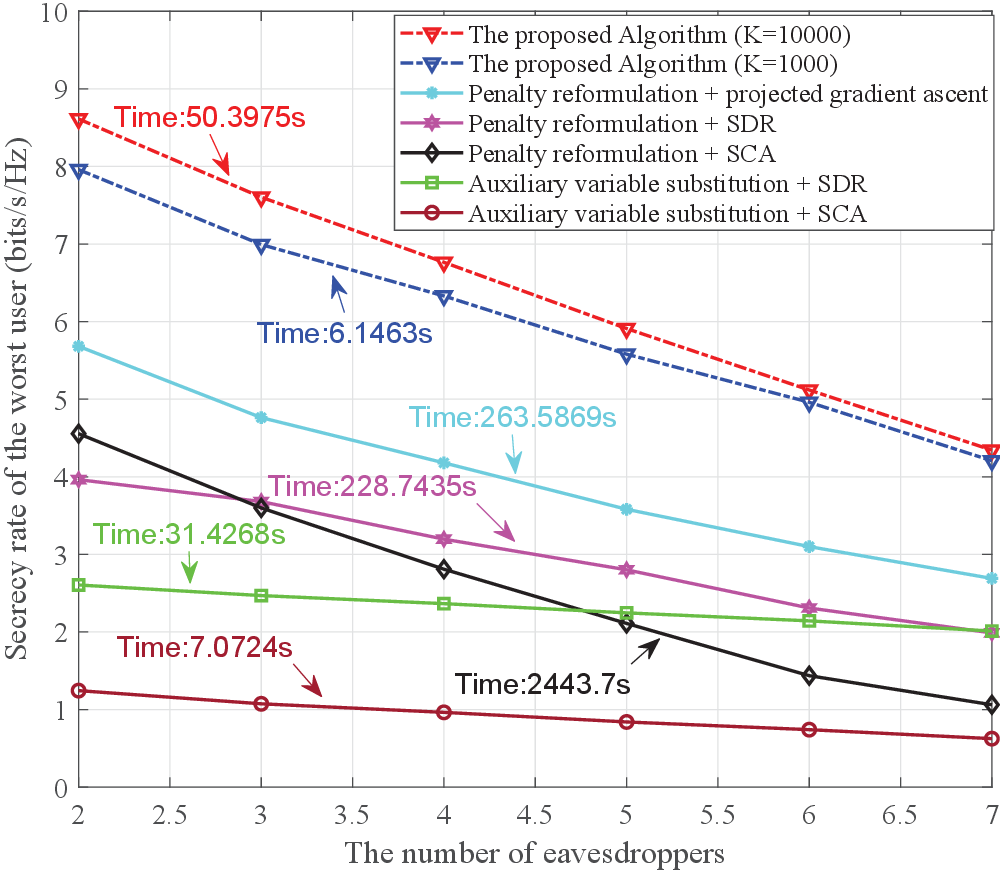}} 	
	\caption{Worst user secrecy rate versus number of eavesdroppers under three different sensing metrics. Simulation settings: (a) $I=7$, $N=24$, $M=6$, $P_{BS} = 15{\rm dBm}$ and $\Gamma  = 10{\rm dB}$ for sensing SINR constraint; (b) $I=5$, $N=24$, $M=4$, $P_{BS} = 10{\rm dBm}$ and $\gamma  = 0.1$ for beampattern matching constraint; (c) $I=6$, $J=4$, $M=7$, $P_{BS} = 20{\rm dBm}$, $P_{FA}=0.1$ and $\phi  = 0.9$ for detection probability constraint.}
	\label{fig_Eves}
\end{figure*}

\begin{figure*}[t]
	\centering
	\subfloat[]{\includegraphics[width=6.03cm]{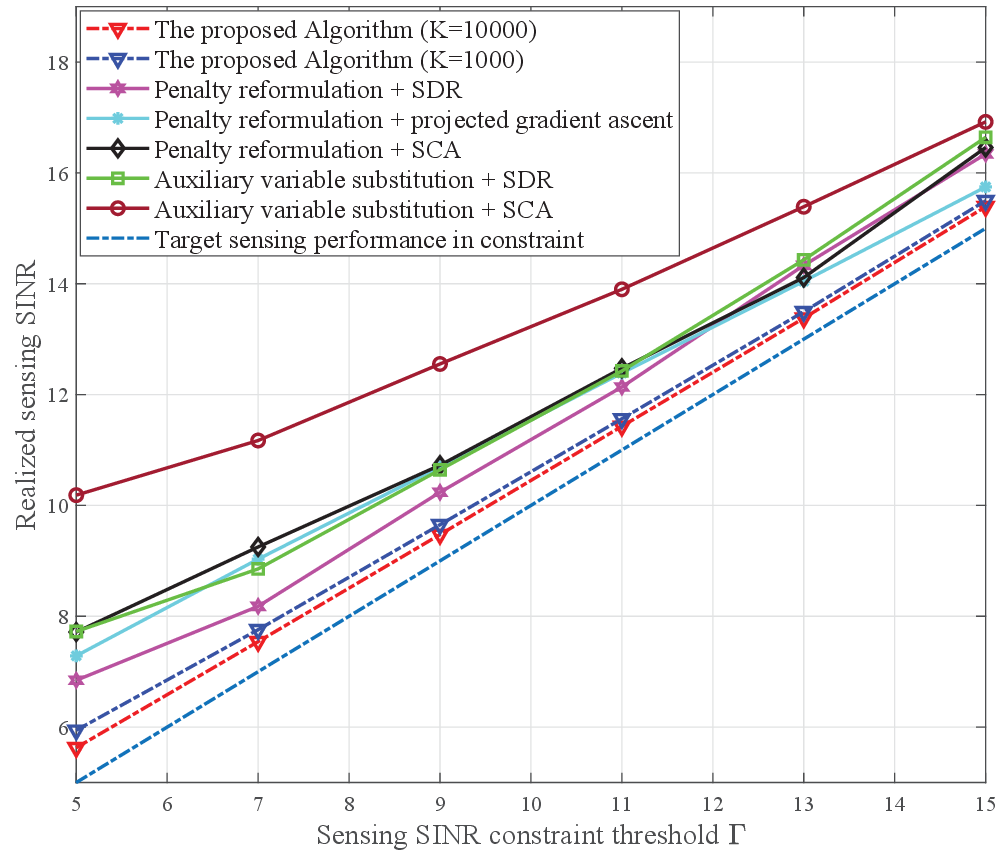}}\hfil
	\subfloat[]{\includegraphics[width=6.03cm]{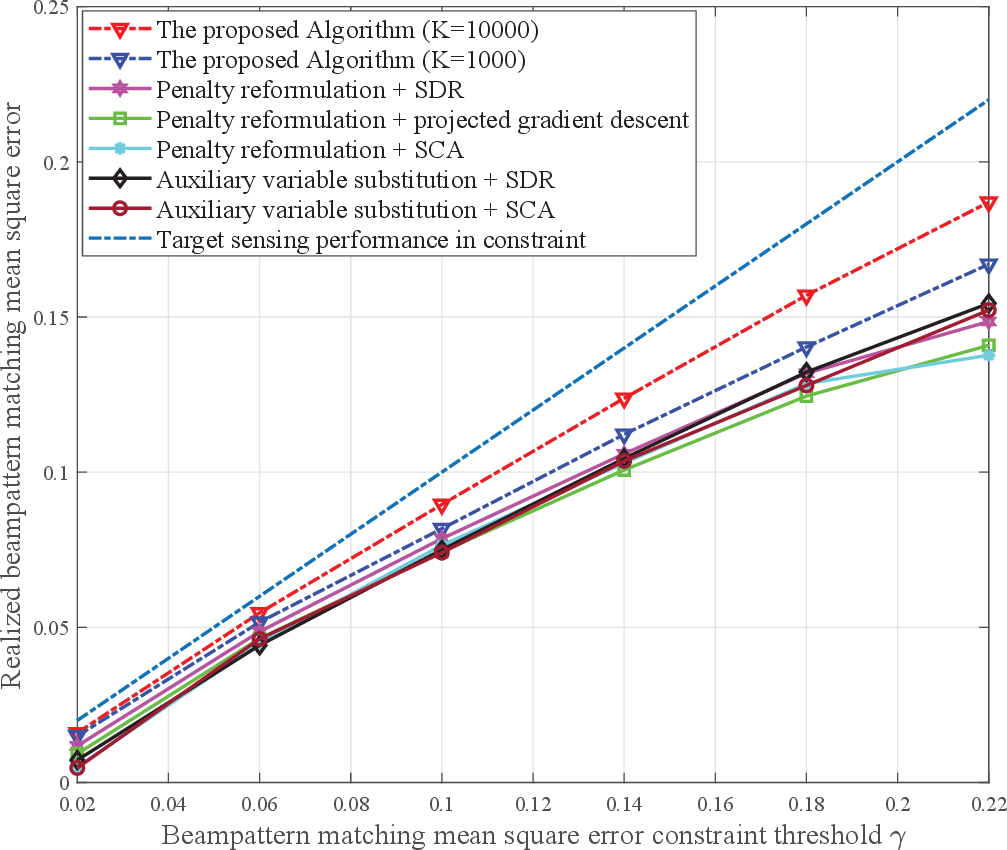}}\hfil 
	\subfloat[]{\includegraphics[width=6.03cm]{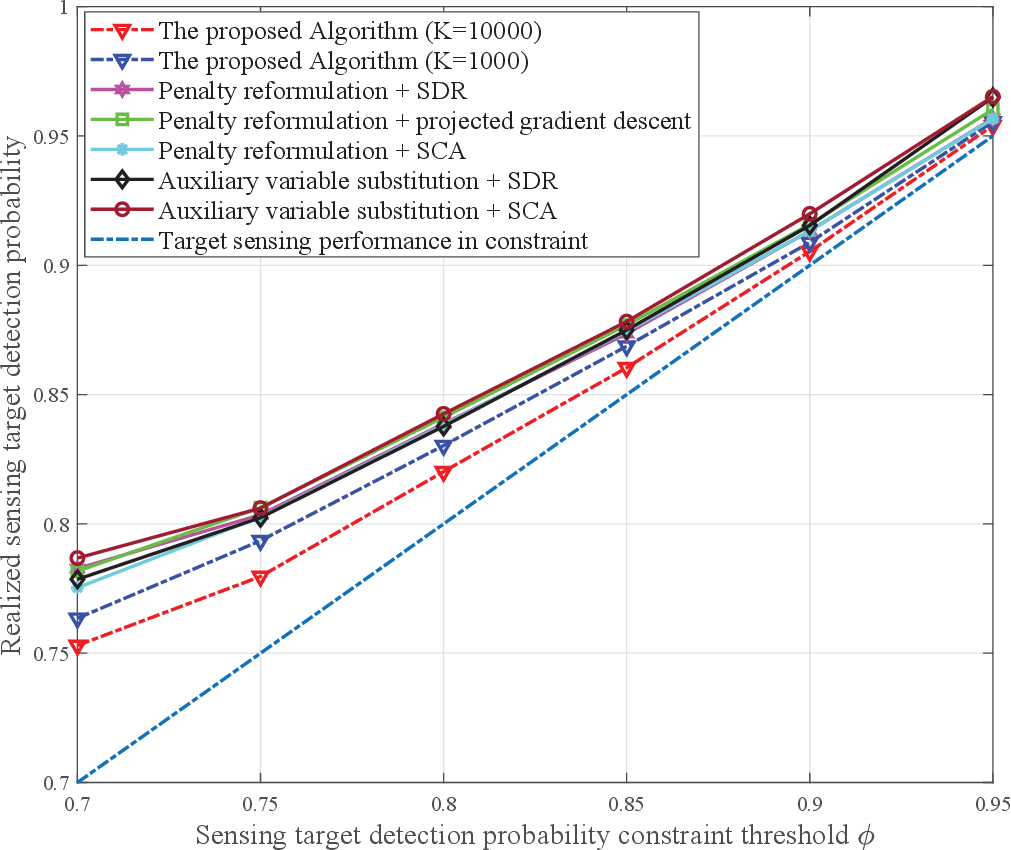}} 	
	\caption{Realized sensing performance under different sensing constraint thresholds (a) $N=20$, $I=6$, $J=7$, $M=5$, $P_{BS} = 20{\rm dBm}$ with sensing SINR metric, (b) $N=16$, $I=5$, $J=4$, $M=6$, $P_{BS} = 20{\rm dBm}$ with beampattern matching metric, (c) $N=12$, $I=6$, $J=5$, $M=4$, $P_{BS} = 20{\rm dBm}$, ${P_{FA}} = 0.1$ with detection probability metric}
	\label{figB1}
\end{figure*}

Besides the advantage of obtaining the best secrecy rate, the proposed algorithm also enjoys one of the lowest running times. Although the auxiliary variable substitution combined with SCA costs the least time, its secrecy rate is the worst and the solution is practically useless. In general, applying SCA and SDR rely on conventional convex tools such as CVX for solving the resulting problems, which lead to high computational complexity. For the penalty reformulation combined with projected gradient ascent method, while it uses the gradient information and thus mitigate the computation burden, it still requires explicit handling of the sensing performance constraint $S\left( {{\bm w},{\bm u}} \right) \ge 0$ in every iteration, which does not have closed-form solutions or simple one-dimensional bisection solutions as in the proposed algorithm. In contrast, the proposed algorithm only consists of computing gradients and simple calculations. As a result, the proposed algorithm costs significantly less time than the other methods.

Under the beampattern matching sensing criterion, the worst secrecy rates due to various methods are shown in Figs. \ref{fig_BS}(b) and \ref{fig_Eves}(b), which again shows the proposed algorithm is having the best performance. Notice that the beampattern matching sensing constraint is more complex than the SINR sensing constraint since it is a function containing the fourth power of beamformers $\left\{ {{{\bm w}_l}} \right\}_{l = 0}^I$ and its summation depends on the number of sampling angles $T$. This would increase the load of computing the gradients compared to the SINR constraint. In the compared methods, auxiliary variables are needed to replace the whole absolute value function \eqref{eq:20}, which would generate more auxiliary constraints and greatly slows down the compared methods. Although all the methods takes more time in beampattern matching constraint compared to the case of SINR constraint, calculating gradients is relatively more efficient than convexifying the complex sensing constraint. This leads to the proposed algorithm being one the fastest algorithms among the compared methods.

The worst user secrecy rate performance of various algorithms under detection probability sensing constraint are shown in Figs. \ref{fig_BS}(c) and \ref{fig_Eves}(c). Since the detection probability constraint is a summation of quadratic terms w.r.t. beamformers $\left\{ {{{\bm w}_l}} \right\}_{l = 0}^I$, the convex surrogate function in SCA can be chosen as the first-order Taylor expansion of these quadratic terms. Alternatively, we can transform these quadratic terms into matrix variables and then relaxe the problem to a semidefinite programming (SDP). Similar to the sensing SINR constraint and beampattern matching constraint cases, the proposed algorithm achieves the best secrecy rate performance with the one of the shortest execution times.

Furthermore, the realized sensing performance under different thresholds of the sensing constraints is shown in Fig. \ref{figB1}. From the figure, it can be seen that the proposed algorithm achieves realized sensing performance closest to the target specified in the sensing constraints (represented by the light blue line). Therefore, the proposed algorithm does not spend excessive resource to over-satisfy the sensing constraint, which leaves more resource for optimizing the secrecy rate. This explains the best secrecy rate achieved by the proposed algorithm compared to other existing methods in Figs. \ref{fig_BS} and \ref{fig_Eves}. The trade-off between sensing and secure communication performance is further shown in Fig. S1 in the supplementary material.

\begin{figure}[t]
	\centering
	\includegraphics[width=0.9\linewidth]{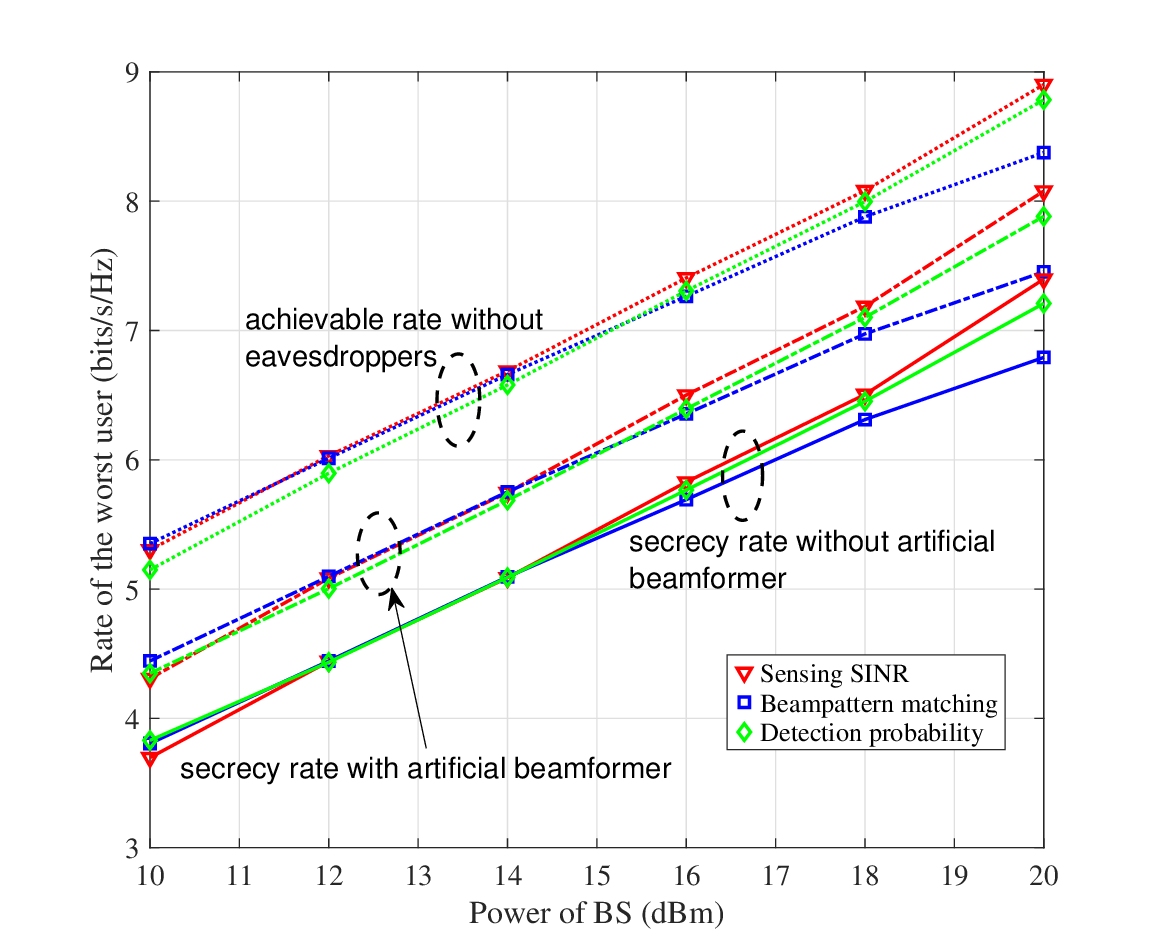}
	\caption{Worst user rate performance under three sensing metrics with $N=24$, $I=7$, $J=5$, $M=4$. $\Gamma  = 10{\rm dB}$ under sensing SINR constraint, $\gamma  = 0.1$ under beampattern matching constraint, ${P_{FA}} = 0.1$ and $\phi  = 0.9$ under detection probability constraint.}
	\label{fig_AN}
\end{figure}

Next, we show the impact of the eavesdroppers and the benefit of artificial beamformer $\bm w_0$ in Fig. \ref{fig_AN}. As expected, eavesdroppers would degrade the performance. Without artificial beamformer, the degradation could reach 33\% for all three considered sensing constraints. This shows that eavesdroppers cannot be neglected in ISAC. However, adopting artificial beamformer, which on the surface would take away some power resource from legitimate users, ends up improving the secrecy rate and leads to only 17\% degradation compared to no eavesdropper case. The artificial beamformer serves two purposes here: it provides a beam for sensing the target to facilitate the satisfaction of the sensing constraint and at the same time acts as artificial noise to counteract the eavesdroppers. 

\begin{figure*}[t]
	\centering
	\subfloat[]{\includegraphics[width=6.03cm]{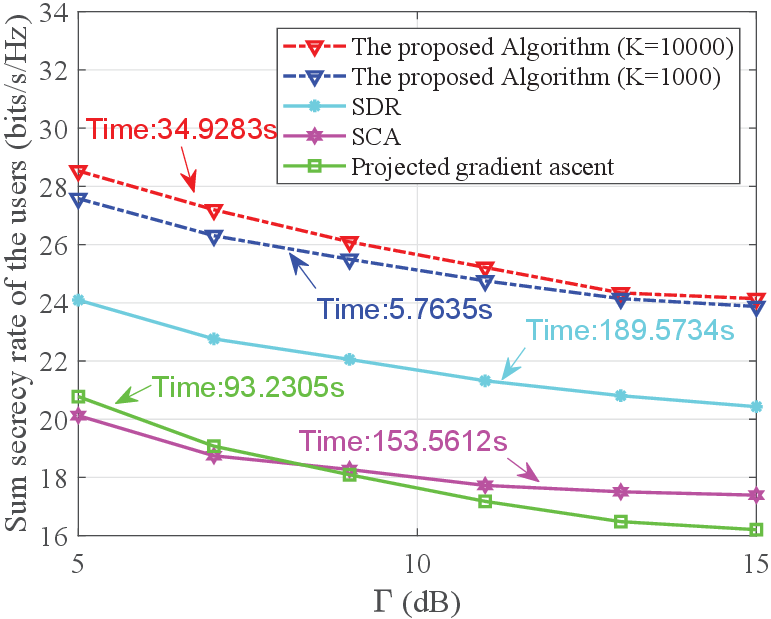}}\hfil
	\subfloat[]{\includegraphics[width=6.03cm]{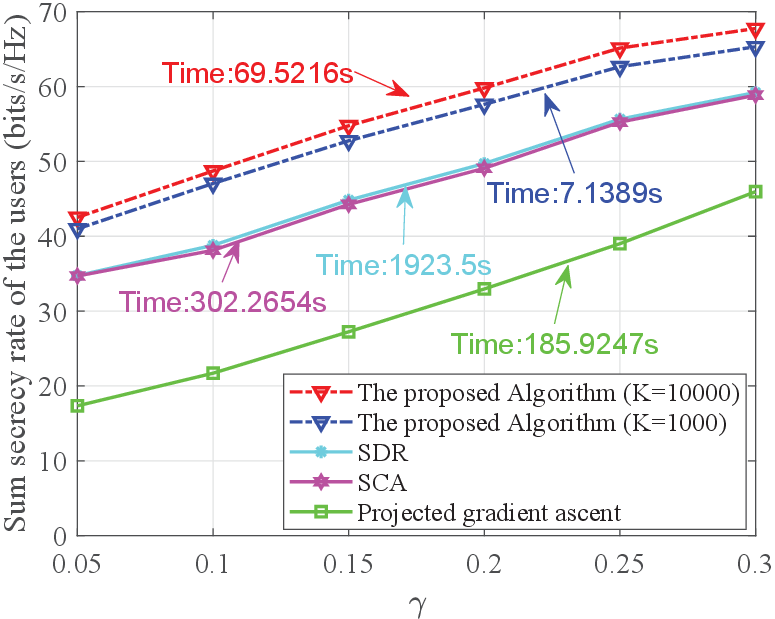}}\hfil 
	\subfloat[]{\includegraphics[width=6.03cm]{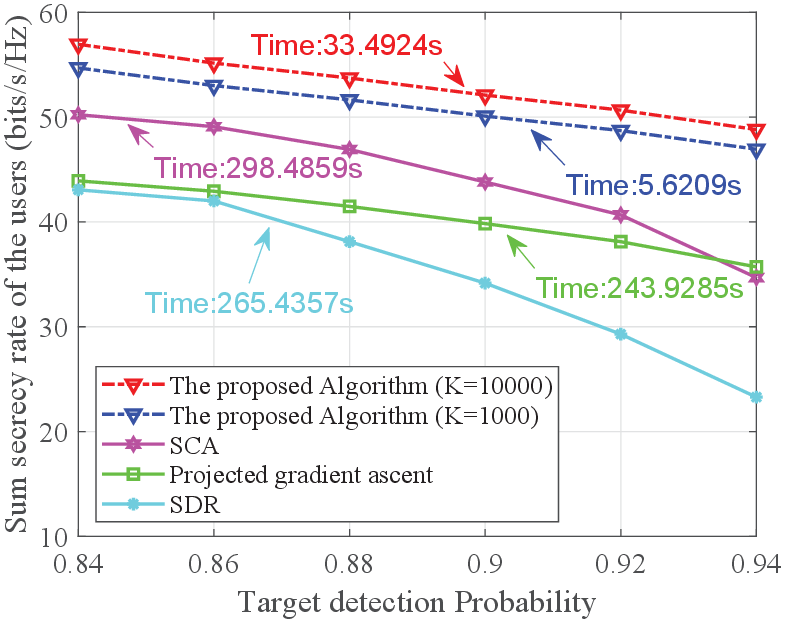}} 	
	\caption{Sum secrecy rate versus constraint thresholds under three different sensing metric. Simulation settings: (a) $N=16$, $I=4$, $J=5$, $M=3$, $P_{BS} = 10{\rm dBm}$ for sensing SINR constraint; (b) $N=24$, $I=7$, $J=5$, $M=6$, $P_{BS} = 15{\rm dBm}$ for beampattern matching constraint; (c) $N=24$, $I=6$, $J=7$, $M=5$, $P_{BS} = 15{\rm dBm}$ and $P_{FA}=0.1$ for detection probability constraint.}
	\label{figsumrate}
\end{figure*}

Finally, to show the versatility of the proposed general formulation and solution framework, Fig. \ref{figsumrate} illustrates the sum secrecy rate performance under three different sensing metrics with various sensing constraint thresholds. Due to the same advantages mentioned in worst user secrecy rate maximization, the proposed algorithm also enjoys better performance and lower running time compared to other algorithms in sum rate maximization.

\section{Conclusions}
This paper presented a unified optimization framework for maximizing the secrecy rate under various sensing constraints in secure ISAC systems. It was revealed that the probabilistic outage constraint can be handled precisely so that the limited resource could be better utilized for secrecy rate optimization. By transforming sensing constraints into the objective function and introducing a penalty variable, the proposed framework efficiently accommodates various sensing metrics, including (but not limited) to sensing SINR, beampattern matching, and detection probability. The resultant optimization problem was solved with AO algorithm enhanced with regularization, and  the solution was theoretically proved to converge to a stationary point. Simulation results validated the effectiveness and computational efficiency of the framework in optimizing the secrecy rates across various sensing criteria, outperforming other state-of-the-art methods. This unified framework not only streamlines algorithmic development for future research in secure ISAC systems but also facilitates straightforward comparisons across different scenarios.

\bibliographystyle{IEEEtran}

\bibliography{StarSystem}

@ARTICLE{Qunshu:25,
	author={Wang, Qunshu and Guo, Shaoyong and Wu, Celimuge and Xing, Chengwen and Zhao, Nan and Niyato, Dusit and Karagiannidis, George K.},
	journal={IEEE J. Sel. Areas Commun.}, 
	title={STAR-RIS Aided Covert Communication in UAV Air-Ground Networks}, 
	year={2025},
	month={Sep},
	volume={43},
	number={1},
	pages={245-259},
	doi={10.1109/JSAC.2024.3460079}}

@ARTICLE{Shengbin:24,
	author={Lin, Shengbin and Xu, Yitao and Wang, Haichao and Ding, Guoru},
	journal={IEEE Trans. Wireless Commun.}, 
	title={Multi-Antenna Covert Communication Assisted by UAV-RIS With Imperfect CSI}, 
	year={2024},
	month={Jun},
	volume={23},
	number={10},
	pages={13841-13855},
	doi={10.1109/TWC.2024.3405201}}

@Article{Xuanxuan:19,
	author={Wang, Xuanxuan and Feng, Wei and Chen, Yunfei and Ge, Ning},
	journal={Proc. - IEEE GLOBECOM}, 
	title={Power Allocation for {UAV} Swarm-Enabled Secure Networks Using Large-Scale CSI}, 
	year={2019},
	month={Dec},
	volume={},
	number={},
	pages={1-6},
	doi={10.1109/GLOBECOM38437.2019.9014165}}

@article{Jingke:25,
	title={Dual-UAV-Aided Covert Communications for Air-to-Ground ISAC Networks},
	author={Sun, Jingke and Yang, Liang and Boulogeorgos, Alexandros-Apostolos A and Tsiftsis, Theodoros A and Liu, Hongwu},
	journal={arXiv preprint arXiv:2506.00601},
	year={2025}
}

@article{Baogang:24,
	title={{UAV}-assisted {ISAC} network physical layer security based on Stackelberg game},
	author={Li, Baogang and Liao, Jia and Gong, Xi and Xiang, Hongyin and Yang, Zhi and Zhao, Wei},
	journal={Ad Hoc Netw.},
	volume={152},
	pages={103304},
	year={2024},
	month={Jan},
	publisher={Elsevier}
}

@Article{Yulong:13,
	author={Zou, Yulong and Wang, Xianbin and Shen, Weiming},
	journal={Proc. - IEEE ICC}, 
	title={Intercept probability analysis of cooperative wireless networks with best relay selection in the presence of eavesdropping attack}, 
	year={2013},
	month={Jul},
	volume={},
	number={},
	pages={2183-2187},
	doi={10.1109/ICC.2013.6654851}}

@Article{Jiaxin:15,
	author={Yang, Jiaxin and Champagne, Benoit and Li, Qiang and Hanzo, Lajos},
	journal={Proc. - IEEE GLOBECOM}, 
	title={Secure {MIMO} {AF} Relaying Design: An Intercept Probability Constrained Approach}, 
	year={2015},
	month={Dec},
	volume={},
	number={},
	pages={1-6},
	doi={10.1109/GLOCOM.2015.7417241}}

@ARTICLE{Giovanni:25,
	author={Interdonato, Giovanni and Di Murro, Francesca and D’Andrea, Carmen and Di Gennaro, Giovanni and Buzzi, Stefano},
	journal={IEEE Trans. Commun.}, 
	title={Approaching Massive MIMO Performance With Reconfigurable Intelligent Surfaces: We Do Not Need Many Antennas}, 
	year={2025},
	month={Nov},
	volume={73},
	number={6},
	pages={4000-4016},
	doi={10.1109/TCOMM.2024.3506928}}

@article{xu2025mutual,
	title={Mutual Information-Oriented ISAC Beamforming Design for Large Dimensional Antenna Array},
	author={Xu, Shanfeng and Cheng, Yanshuo and Wang, Siqiang and Wang, Xinyi and Zheng, Zhong and Fei, Zesong},
	journal={Electronics},
	volume={14},
	number={13},
	pages={2515},
	year={2025},
	month={Jun},
	publisher={MDPI}
}

@ARTICLE{Yiqing:22,
	author={Li, Yiqing and Jiang, Miao},
	journal={IEEE Trans. Veh. Technol.}, 
	title={Joint Transmit Beamforming and Receive Filters Design for Coordinated Two-Cell Interfering Dual-Functional Radar-Communication Networks}, 
	year={2022},
	month={Jul},
	volume={71},
	number={11},
	pages={12362-12367},
	doi={10.1109/TVT.2022.3193001}}

@ARTICLE{Na:22,
	author={Zhao, Na and Wang, Yunlong and Zhang, Zhibo and Chang, Qing and Shen, Yuan},
	journal={IEEE Commun. Lett.}, 
	title={Joint Transmit and Receive Beamforming Design for Integrated Sensing and Communication}, 
	year={2022},
	month={Mar},
	volume={26},
	number={3},
	pages={662-666},
	doi={10.1109/LCOMM.2021.3140093}}

@article{pritzker2022transmit,
	title={Transmit precoder design approaches for dual-function radar-communication systems},
	author={Pritzker, Jacob and Ward, James and Eldar, Yonina C},
	journal={arXiv preprint arXiv:2203.09571},
	year={2022}
}

@ARTICLE{Yuanhan:22,
	author={Ni, Yuanhan and Wang, Zulin and Huang, Qin},
	journal={IEEE Wireless Commun. Lett.}, 
	title={Joint Transceiver Beamforming for Multi-Target Single-User Joint Radar and Communication}, 
	year={2022},
	month={Nov},
	volume={11},
	number={11},
	pages={2360-2364},
	doi={10.1109/LWC.2022.3203386}}

@ARTICLE{Chen:22a,
	author={Chen, Li and Wang, Zhiqin and Du, Ying and Chen, Yunfei and Yu, F. Richard},
	journal={IEEE J. Sel. Areas Commun.}, 
	title={Generalized Transceiver Beamforming for DFRC With MIMO Radar and MU-MIMO Communication}, 
	year={2022},
	month={Jun},
	volume={40},
	number={6},
	pages={1795-1808},
	doi={10.1109/JSAC.2022.3155515}}

@ARTICLE{Xiang:22b,
	author={Liu, Xiang and Huang, Tianyao and Liu, Yimin},
	journal={IEEE J. Sel. Areas Commun.}, 
	title={Transmit Design for Joint MIMO Radar and Multiuser Communications With Transmit Covariance Constraint}, 
	year={2022},
	month={Mar},
	volume={40},
	number={6},
	pages={1932-1950},
	doi={10.1109/JSAC.2022.3155512}}

@ARTICLE{Xiang:20,
	author={Liu, Xiang and Huang, Tianyao and Shlezinger, Nir and Liu, Yimin and Zhou, Jie and Eldar, Yonina C.},
	journal={IEEE Trans. Signal Process.}, 
	title={Joint Transmit Beamforming for Multiuser {MIMO} Communications and {MIMO} Radar}, 
	year={2020},
	month={Jun},
	volume={68},
	number={},
	pages={3929-3944},
	doi={10.1109/TSP.2020.3004739}}

@ARTICLE{Fan:18,
	author={Liu, Fan and Masouros, Christos and Li, Ang and Sun, Huafei and Hanzo, Lajos},
	journal={IEEE Trans. Wireless Commun.}, 
	title={MU-MIMO Communications With MIMO Radar: From Co-Existence to Joint Transmission}, 
	year={2018},
	month={Feb},
	volume={17},
	number={4},
	pages={2755-2770},
	doi={10.1109/TWC.2018.2803045}}

@ARTICLE{Zechen:24,
	author={Liu, Zechen and Liu, Xin and Liu, Yuemin and Leung, Victor C. M. and Durrani, Tariq S.},
	journal={IEEE Trans. Wireless Commun.}, 
	title={UAV Assisted Integrated Sensing and Communications for Internet of Things: 3D Trajectory Optimization and Resource Allocation}, 
	year={2024},
	month={Jan},
	volume={23},
	number={8},
	pages={8654-8667},
	doi={10.1109/TWC.2024.3352985}}

@Article{Xipeng:24,
	author={Chen, Xipeng and Cao, Xiaowen and Xie, Lifeng and He, Yejun},
	journal={Proc - IEEE iWRFAT}, 
	title={{DRL}-Based Joint Trajectory Planning and Beamforming Optimization in Aerial RIS-Assisted ISAC System}, 
	year={2024},
	month={Jul},
	volume={},
	number={},
	pages={510-515},
	doi={10.1109/iWRFAT61200.2024.10594249}}

@ARTICLE{Salem:2025,
	author={Salem, A. Abdelaziz and Abdallah, Saeed and Saad, Mohamed and Alnajjar, Khawla and Albreem, Mahmoud A.},
	journal={IEEE Trans. Veh. Technol.}, 
	title={Robust Secure ISAC: How RSMA and Active RIS Manage Eavesdropper's Spatial Uncertainty}, 
	year={2025},
	month={Sep},
	volume={},
	number={},
	pages={1-16},
	doi={10.1109/TVT.2025.3610863}}

@ARTICLE{Haitao:23,
	author={Zhao, Haitao and Wu, Fengjing and Xia, Wenchao and Zhang, Yao and Ni, Yiyang and Zhu, Hongbo},
	journal={IEEE Commun. Lett.}, 
	title={Joint Beamforming Design for RIS-Aided Secure Integrated Sensing and Communication Systems}, 
	year={2023},
	month={Sep},
	volume={27},
	number={11},
	pages={2943-2947},
	doi={10.1109/LCOMM.2023.3312089}}

@ARTICLE{Hui-Ming:15,
	author={Wang, Hui-Ming and Zheng, Tongxing and Xia, Xiang-Gen},
	journal={IEEE Trans. Wireless Commun.}, 
	title={Secure MISO Wiretap Channels With Multiantenna Passive Eavesdropper: Artificial Noise vs. Artificial Fast Fading}, 
	year={2015},
	month={Jun},
	volume={14},
	number={1},
	pages={94-106},
	doi={10.1109/TWC.2014.2332164}}

@ARTICLE{Qi:14,
	author={Xiong, Qi and Gong, Yi and Liang, Ying-Chang and Li, Kwok Hung},
	journal={IEEE Wireless Commun. Lett.}, 
	title={Achieving Secrecy of MISO Fading Wiretap Channels via Jamming and Precoding With Imperfect Channel State Information}, 
	year={2014},
	month={Apr},
	volume={3},
	number={4},
	pages={357-360},
	doi={10.1109/LWC.2014.2317194}}

@ARTICLE{Xi:15,
	author={Zhang, Xi and McKay, Matthew R. and Zhou, Xiangyun and Heath, Robert W.},
	journal={IEEE Trans. Wireless Commun.}, 
	title={Artificial-Noise-Aided Secure Multi-Antenna Transmission With Limited Feedback}, 
	year={2015},
	month={Jan},
	volume={14},
	number={5},
	pages={2742-2754},
	doi={10.1109/TWC.2015.2391261}}

@ARTICLE{Wei:17,
	author={Wang, Wei and Teh, Kah Chan and Li, Kwok Hung},
	journal={IEEE Trans. Inf. Forensics Secur.}, 
	title={Secrecy Throughput Maximization for MISO Multi-Eavesdropper Wiretap Channels}, 
	year={2017},
	month={Oct},
	volume={12},
	number={3},
	pages={505-515},
	doi={10.1109/TIFS.2016.2620279}}

@ARTICLE{Meng:24,
	author={Hua, Meng and Wu, Qingqing and Chen, Wen and Dobre, Octavia A. and Swindlehurst, A. Lee},
	journal={IEEE Trans. Wireless Commun.}, 
	title={Secure Intelligent Reflecting Surface-Aided Integrated Sensing and Communication}, 
	year={2024},
	month={Jun},
	volume={23},
	number={1},
	pages={575-591},
	doi={10.1109/TWC.2023.3280179}}

@ARTICLE{Chen:25,
	author={Zhang, Chen and Qu, Shaocheng and Zhao, Li and Wei, Ziming and Shi, Qianqian and Liu, Youya},
	journal={IEEE Wireless Commun. Lett.}, 
	title={Robust Secure Beamforming Design for Downlink RIS-ISAC Systems Enhanced by RSMA}, 
	year={2025},
	month={Dec},
	volume={14},
	number={5},
	pages={1271-1275},
	doi={10.1109/LWC.2024.3521211}}

@ARTICLE{Yue:25,
	author={Xiu, Yue and Lyu, Wanting and Yeoh, Phee Lep and Ai, Yi and Wei, Ning},
	journal={IEEE Trans. Veh. Technol.}, 
	title={Secure Enhancement for RIS-Aided UAV with ISAC: Robust Design and Resource Allocation}, 
	year={2025},
	month={Oct},
	volume={},
	number={},
	pages={1-16},
	doi={10.1109/TVT.2025.3623453}}

@ARTICLE{Ziwei:25,
	author={Wang, Ziwei and Lei, Xianfu and Mathiopoulos, P. Takis and Wang, Gongpu},
	journal={IEEE Trans. Veh. Technol.}, 
	title={Transmit Power Optimization of Cooperative NOMA-Aided Secure ISAC Systems With SWIPT}, 
	year={2025},
	month={Mar},
	volume={74},
	number={8},
	pages={13207-13212},
	doi={10.1109/TVT.2025.3553908}}

@ARTICLE{SuNanchi:21,
	author={Su, Nanchi and Liu, Fan and Masouros, Christos},
	journal={IEEE Trans. Wireless Commun.}, 
	title={Secure Radar-Communication Systems With Malicious Targets: Integrating Radar, Communications and Jamming Functionalities}, 
	year={2021},
	month={Sep},
	volume={20},
	number={1},
	pages={83-95},
	doi={10.1109/TWC.2020.3023164}}

@ARTICLE{Yuchen:25,
	author={Zhang, Yuchen and Ni, Wanli and Wang, Jianquan and Tang, Wanbin and Jia, Min and Eldar, Yonina C. and Niyato, Dusit},
	journal={IEEE Transactions on Communications}, 
	title={Robust Transceiver Design for Covert Integrated Sensing and Communications With Imperfect CSI}, 
	year={2025},
	month={Apr},
	volume={73},
	number={9},
	pages={8016-8031},
	doi={10.1109/TCOMM.2024.3387869}}

@ARTICLE{Hanbo:24,
	author={Jia, Hanbo and Ma, Lin and Qin, Danyang},
	journal={IEEE Trans. Veh. Technol.}, 
	title={Robust Beamforming Design for Covert Integrated Sensing and Communication in the Presence of Multiple Wardens}, 
	year={2024},
	month={Jul},
	volume={73},
	number={11},
	pages={17135-17150},
	doi={10.1109/TVT.2024.3425960}}

@ARTICLE{Xingyu:24,
	author={Zhao, Xingyu and Deng, Weicao and Li, Min and Zhao, Min-Jian},
	journal={IEEE Wireless Commun. Lett.}, 
	title={Robust Beamforming Design for Integrated Sensing and Covert Communication Systems}, 
	year={2024},
	month={Jul},
	volume={13},
	number={9},
	pages={2566-2570},
	doi={10.1109/LWC.2024.3429149}}

@ARTICLE{Zhutian:22,
	author={Yang, Zhutian and Li, Dongdong and Zhao, Nan and Wu, Zhilu and Li, Yonghui and Niyato, Dusit},
	journal={IEEE Trans. Commun.}, 
	title={Secure Precoding Optimization for NOMA-Aided Integrated Sensing and Communication}, 
	year={2022},
	month={Oct},
	volume={70},
	number={12},
	pages={8370-8382},
	doi={10.1109/TCOMM.2022.3216636}}

@ARTICLE{Hou:24,
	author={Hou, Kaiyue and Zhang, Shuowen},
	journal={IEEE J. Sel. Areas Commun.}, 
	title={Optimal Beamforming for Secure Integrated Sensing and Communication Exploiting Target Location Distribution}, 
	year={2024},
	month={Aug},
	volume={42},
	number={11},
	pages={3125-3139},
	doi={10.1109/JSAC.2024.3431573}}

@ARTICLE{Ziang:24,
	author={Liu, Ziang and Yin, Longfei and Shin, Wonjae and Clerckx, Bruno},
	journal={IEEE Trans. Wireless Commun.}, 
	title={Rate-Splitting Multiple Access for Quantized ISAC LEO Satellite Systems: A Max-Min Fair Energy-Efficient Beam Design}, 
	year={2024},
	month={Jul},
	volume={23},
	number={10},
	pages={15394-15408},
	doi={10.1109/TWC.2024.3429229}}

@Article{Kun-Yu:11,
	author={Wang, Kun-Yu and Chang, Tsung-Hui and Ma, Wing-Kin and So, Anthony Man-Cho and Chi, Chong-Yung},
	journal={Proc. - ICASSP 2011}, 
	title={Probabilistic {SINR} constrained robust transmit beamforming: A Bernstein-type inequality based conservative approach}, 
	year={2011},
	month={May},
	volume={},
	number={},
	pages={3080-3083},
	doi={10.1109/ICASSP.2011.5946309}}

@ARTICLE{Xuehua:25,
	author={Li, Xuehua and Wu, Zhongqing and Cai, Yuanxin and Hu, Shaokang and Liao, Yihuan and Yuan, Weijie},
	journal={IEEE J. Sel. Areas Commun.}, 
	title={Win-win of Communication and Sensing Security for MC-NOMA ISAC Systems}, 
	year={2025},
	month={May},
	volume={43},
	number={9},
	pages={3214 - 3230},
	doi={10.1109/JSAC.2025.3574620}}

@Article{Wenyi:24,
	author={Yang, Wenyi and Liu, Xin and Liu, Zechen},
	journal={Proc. - ICCT 2024}, 
	title={Joint Trajectory and Resource Allocation Optimization for Mobile Vehicles in UAV Assisted Secure ISAC System}, 
	year={2024},
	month={Oct},
	volume={},
	number={},
	pages={321-325},
	doi={10.1109/ICCT62411.2024.10946325}}

@ARTICLE{Yuemin:24,
	author={Liu, Yuemin and Liu, Xin and Liu, Zechen and Yu, Yingfeng and Jia, Min and Na, Zhenyu and Durrani, Tariq S},
	journal={IEEE Trans. Veh. Technol.}, 
	title={Secure Rate Maximization for ISAC-UAV Assisted Communication Amidst Multiple Eavesdroppers}, 
	year={2024},
	month={Oct},
	volume={73},
	number={10},
	pages={15843-15847},
	doi={10.1109/TVT.2024.3412805}}

@Article{Tejaswini:24,
	author={Lakkimsetti, Tejaswini and Rao Ukyam, Uma Maheswara and Kumar Gurrala, Kiran},
	journal={Proc. - INDICON 2024}, 
	title={Power Allocation for Secure {RIS}-Aided {ISAC} {NOMA} Network}, 
	year={2024},
	month={Dec},
	volume={},
	number={},
	pages={1-6},
	doi={10.1109/INDICON63790.2024.10958515}}

@Article{Ruiwei:23,
	author={Yang, Ruiwei and Du, Huiqin},
	journal={Proc. - SPAWC 2023}, 
	title={Joint Precoding and Artificial Noise Design for Secure Transmission in ISAC system}, 
	year={2023},
	month={Nov},
	volume={},
	number={},
	pages={16-20},
	doi={10.1109/SPAWC53906.2023.10304462}}

@Article{Wai-Yiu:23,
	author={Keung, Wai-Yiu and Wai, Hoi-To and Ma, Wing-Kin},
	journal={Proc. - ICASSPW 2023}, 
	title={Secure Integrated Sensing and Communication Downlink Beamforming: A Semidefinite Relaxation Approach With Tightness Guaranteed}, 
	year={2023},
	month={Aug},
	volume={},
	number={},
	pages={1-5},
	doi={10.1109/ICASSPW59220.2023.10193088}}

@Article{Nanchi:23,
	author={Su, Nanchi and Liu, Fan and Masouros, Christos},
	journal={Proc. - EUSIPCO 2023}, 
	title={Secure Integrated Sensing and Communication Systems with the Assistance of Sensing Functionality}, 
	year={2023},
	month={Nov},
	volume={},
	number={},
	pages={690-694},
	doi={10.23919/EUSIPCO58844.2023.10289830}}

@ARTICLE{Silei:25,
	author={He, Silei and Tang, Kun and Zheng, Beixiong and Xiu, Xin and Feng, Wenjie and Che, Wenquan and Xue, Quan},
	journal={IEEE Trans. Commun.}, 
	title={Throughput Maximization Design for {RIS}-Assisted {WPCN}-{NOMA}-Based {ISAC} Systems}, 
	year={2025},
	month={Feb},
	volume={73},
	number={8},
	pages={6943 - 6957},
	doi={10.1109/TCOMM.2025.3543242}}

@book{Tse:05,
	title={Fundamentals of Wireless Communication},
	author={Tse, David and Viswanath, Pramod},
	year={2005},
	publisher={Cambridge University Press}
}

@ARTICLE{Zhichao:18,
	author={Sheng, Zhichao and Tuan, Hoang Duong and Duong, Trung Q. and Poor, H. Vincent},
	journal={IEEE Trans. Signal Process.}, 
	title={Beamforming Optimization for Physical Layer Security in MISO Wireless Networks}, 
	year={2018},
	month={May},
	volume={66},
	number={14},
	pages={3710-3723},
	doi={10.1109/TSP.2018.2835406}}

@article{Zhijun:24,
	title={Joint jammer selection and power optimization in covert communications against a warden with uncertain locations},
	author={Han, Zhijun and Zhou, Yiqing and Zhang, Yu and Zheng, Tong-Xing and Liu, Ling and Shi, Jinglin},
	journal={Digit. Commun. Netw.},
	year={2025},
	month={Aug},
	volume={11},
	number={4},
	pages={1114-1124},
	publisher={Elsevier}
}

@ARTICLE{Shihang:24,
	author={Lu, Shihang and Liu, Fan and Li, Yunxin and Zhang, Kecheng and Huang, Hongjia and Zou, Jiaqi and Li, Xinyu and Dong, Yuxiang and Dong, Fuwang and Zhu, Jia and Xiong, Yifeng and Yuan, Weijie and Cui, Yuanhao and Hanzo, Lajos},
	journal={IEEE Internet Things J.}, 
	title={Integrated Sensing and Communications: Recent Advances and Ten Open Challenges}, 
	year={2024},
	month={Feb},
	volume={11},
	number={11},
	pages={19094-19120},
	doi={10.1109/JIOT.2024.3361173}}

@Article{Bliss:14,
	author={Bliss, Daniel W.},
	journal={Proc.- IEEE Natl. Radar Conf.}, 
	title={Cooperative radar and communications signaling: The estimation and information theory odd couple}, 
	volume={},
	number={},
	pages={50-55},
	year={2014},
	month={May},
	doi={10.1109/RADAR.2014.6875553}}

@ARTICLE{Tan:18,
	author={Tan, Bo and Chen, Qingchao and Chetty, Kevin and Woodbridge, Karl and Li, Wenda and Piechocki, Robert},
	journal={IEEE Commun. Mag.}, 
	title={Exploiting WiFi Channel State Information for Residential Healthcare Informatics}, 
	year={2018},
	month={May},
	volume={56},
	number={5},
	pages={130-137},
	doi={10.1109/MCOM.2018.1700064}}

@ARTICLE{Zongze:21,
	author={Li, Zongze and Xia, Minghua and Wen, Miaowen and Wu, Yik-Chung},
	journal={IEEE J. Sel. Areas Commun.}, 
	title={Massive Access in Secure NOMA Under Imperfect CSI: Security Guaranteed Sum-Rate Maximization With First-Order Algorithm}, 
	year={2021},
	month={Aug},
	volume={39},
	number={4},
	pages={998-1014},
	doi={10.1109/JSAC.2020.3018805}}

@Article{Kaiqing:21,
	title={Decentralized policy gradient descent ascent for safe multi-agent reinforcement learning},
	author={Lu, Songtao and Zhang, Kaiqing and Chen, Tianyi and Ba{\c{s}}ar, Tamer and Horesh, Lior},
	journal={Proc. - AAAI Conf. Artif. Intell.},
	volume={35},
	number={10},
	pages={8767--8775},
	year={2021}
}

@article{Bin:06,
	title={Gradient methods with adaptive step-sizes},
	author={Zhou, Bin and Gao, Li and Dai, Yu-Hong},
	journal={Comput. Optim. Appl.},
	volume={35},
	pages={69--86},
	year={2006},
	month={Mar},
	publisher={Springer}
}

@book{beck:17,
	author = {Beck, Amir},
	title = {First-Order Methods in Optimization},
	publisher = {Society for Industrial and Applied Mathematics},
	year = {2017},
	doi = {10.1137/1.9781611974997},
	address = {Philadelphia, PA}
}

@Article{Tianyi:20,
	title={Near-optimal algorithms for minimax optimization},
	author={Lin, Tianyi and Jin, Chi and Jordan, Michael I},
	journal={Proc. - Conf. Learn. Theory.},
	pages={2738--2779},
	year={2020},
	organization={PMLR}
}

@article{Zi:23,
	title={A unified single-loop alternating gradient projection algorithm for nonconvex--concave and convex--nonconcave minimax problems},
	author={Xu, Zi and Zhang, Huiling and Xu, Yang and Lan, Guanghui},
	journal={Math. Program.},
	volume={201},
	number={1},
	pages={635--706},
	year={2023},
	month={Jan},
	publisher={Springer}
}

@article{Jiawei:20,
	title={A single-loop smoothed gradient descent-ascent algorithm for nonconvex-concave min-max problems},
	author={Zhang, Jiawei and Xiao, Peijun and Sun, Ruoyu and Luo, Zhiquan},
	journal={Proc. - Adv. Neural Inf. Process. Syst. (NeurIPS)},
	volume={33},
	pages={7377--7389},
	year={2020}
}

@Article{Wang:20,
	title={On solving minimax optimization locally: A follow-the-ridge approach.},
	author={Yuanhao, Wang and Guodong, Zhang and Jimmy, Ba},
	journal={Proc. - Int. Conf. Learn. Represent. (ICLR)},
	volume={},
	year={2020},
	month={Apr},
}

@Article{Maher:19,
	title={Solving a class of non-convex min-max games using iterative first order methods},
	author={Nouiehed, Maher and Sanjabi, Maziar and Huang, Tianjian and Lee, Jason D and Razaviyayn, Meisam},
	journal={Proc. Adv. Neural Inf. Process. Syst. (NeurIPS)},
	volume={32},
	year={2019}
}

@ARTICLE{Lei:25,
	author={Li, Lei and Zhang, Jiawei and Chang, Tsung-Hui},
	journal={IEEE J. Sel. Areas Commun.}, 
	title={Beamforming Optimization for Robust Sensing and Communication in Dynamic mmWave MIMO Networks}, 
	year={2025},
	month={Jan},
	volume={43},
	number={4},
	pages={1354-1370},
	doi={10.1109/JSAC.2025.3531545}}

@Article{Rui:24,
	author={Zhao, Rui and Hu, Xiaoyan and Liu, Chaowen and Wang, Wenjie and Liu, Boyang and Wong, Kai-Kit},
	journal={Proc. IEEE Chin. Int. Conf. Commun. (ICCC)}, 
	title={Joint Transmit and Receive Beamforming Design for Secure Communications in ISAC Systems with Eavesdropper and Jammer}, 
	year={2024},
	month={Aug},
	volume={},
	number={},
	pages={885-890},
	doi={10.1109/ICCC62479.2024.10681811}}

@ARTICLE{Dongfang:22,
	author={Xu, Dongfang and Yu, Xianghao and Ng, Derrick Wing Kwan and Schmeink, Anke and Schober, Robert},
	journal={IEEE Trans. Commun.}, 
	title={Robust and Secure Resource Allocation for ISAC Systems: A Novel Optimization Framework for Variable-Length Snapshots}, 
	year={2022},
	month={Nov},
	volume={70},
	number={12},
	pages={8196-8214},
	doi={10.1109/TCOMM.2022.3218629}}

@ARTICLE{Liu:24,
	author={Liu, Zhenrong and Li, Zongze and Gong, Yi and Wu, Yik-Chung},
	journal={IEEE Trans. Wireless Commun.}, 
	title={{RIS}-Aided Cooperative Mobile Edge Computing: Computation Efficiency Maximization via Joint Uplink and Downlink Resource Allocation}, 
	year={2024},
	month={Apr},
	volume={23},
	number={9},
	pages={11535-11550},
	doi={10.1109/TWC.2024.3382759}}

@ARTICLE{Sheng:18,
	author={Sheng, Zhichao and Tuan, Hoang Duong and Nasir, Ali Arshad and Duong, Trung Q. and Poor, H. Vincent},
	journal={IEEE Trans. Wireless Commun.}, 
	title={Power Allocation for Energy Efficiency and Secrecy of Wireless Interference Networks}, 
	year={2018},
	month={Jun},
	volume={17},
	number={6},
	pages={3737-3751},
	doi={10.1109/TWC.2018.2815626}}

@ARTICLE{Wei-Chiang:15,
	author={Li, Wei-Chiang and Chang, Tsung-Hui and Chi, Chong-Yung},
	journal={IEEE Trans. Signal Process.}, 
	title={Multicell Coordinated Beamforming With Rate Outage Constraint—Part II: Efficient Approximation Algorithms}, 
	year={2015},
	month={Mar},
	volume={63},
	number={11},
	pages={2763-2778},
	doi={10.1109/TSP.2015.2414896}}

@ARTICLE{Jiancheng:23,
	author={An, Jiancheng and Li, Hongbin and Ng, Derrick Wing Kwan and Yuen, Chau},
	journal={IEEE Trans. Wireless Commun.}, 
	title={Fundamental Detection Probability vs. Achievable Rate Tradeoff in Integrated Sensing and Communication Systems}, 
	year={2023},
	month={May},
	volume={22},
	number={12},
	pages={9835-9853},
	doi={10.1109/TWC.2023.3273850}}

@ARTICLE{Zixiang:24,
	author={Ren, Zixiang and Xu, Jie and Qiu, Ling and Wing Kwan Ng, Derrick},
	journal={IEEE J. Sel. Areas Commun.}, 
	title={Secure Cell-Free Integrated Sensing and Communication in the Presence of Information and Sensing Eavesdroppers}, 
	year={2024},
	month={Nov},
	volume={42},
	number={11},
	pages={3217-3231},
	doi={10.1109/JSAC.2024.3431582}}

@ARTICLE{Yinhong:24,
	author={Liu, Yinhong and Jin, Ming and Guo, Qinghua and Yao, Junteng},
	journal={IEEE Commun. Lett.}, 
	title={Secure Beamforming for NOMA-ISAC With System Imperfections}, 
	year={2024},
	month={May},
	volume={28},
	number={7},
	pages={1559-1563},
	doi={10.1109/LCOMM.2024.3398777}}

@ARTICLE{Peng:22,
	author={Liu, Peng and Fei, Zesong and Wang, Xinyi and Li, Bin and Huang, Yuzhen and Zhang, Zhi},
	journal={IEEE Wireless Commun.}, 
	title={Outage Constrained Robust Secure Beamforming in Integrated Sensing and Communication Systems}, 
	year={2022},
	month={Aug},
	volume={11},
	number={11},
	pages={2260-2264},
	doi={10.1109/LWC.2022.3198683}}

@ARTICLE{Siyi:25,
	author={Li, Siyi and Dong, Heng and Shan, Chengzhao and Fang, Xiaojie and Wu, Wei and Li, Zhuoming},
	journal={IEEE Trans. Veh. Technol.}, 
	title={Secure Hybrid Beamforming Design for Mmwave Integrated Sensing and Communication Systems}, 
	year={2025},
	month={Feb},
	volume={},
	number={},
	pages={1-16},
	doi={10.1109/TVT.2025.3544397}}

@article{Jiefei:24,
	title={An approximation proximal gradient algorithm for nonconvex-linear minimax problems with nonconvex nonsmooth terms},
	author={He, Jiefei and Zhang, Huiling and Xu, Zi},
	journal={J. Glob. Optim.},
	volume={90},
	number={1},
	pages={73--92},
	year={2024},
	month={Mar},
	publisher={Springer}
}

@ARTICLE{Songtao:20,
	author={Lu, Songtao and Tsaknakis, Ioannis and Hong, Mingyi and Chen, Yongxin},
	journal={IEEE Trans. Signal Process.}, 
	title={Hybrid Block Successive Approximation for One-Sided Non-Convex Min-Max Problems: Algorithms and Applications}, 
	year={2020},
	month={Apr},
	volume={68},
	number={},
	pages={3676-3691},
	doi={10.1109/TSP.2020.2986363}}

@book{Bertrand:23,
	title={The summation of series},
	author={Davis, Harold T},
	year={2014},
	publisher={Courier Dover Publications}
}

@ARTICLE{Yang:20,
	author={Lu, Yang and Xiong, Ke and Fan, Pingyi and Ding, Zhiguo and Zhong, Zhangdui and Letaief, Khaled Ben},
	journal={IEEE Trans. Wireless Commun.}, 
	title={Secrecy Energy Efficiency in Multi-Antenna SWIPT Networks With Dual-Layer PS Receivers}, 
	year={2020},
	month={Apr},
	volume={19},
	number={6},
	pages={4290-4306},
	doi={10.1109/TWC.2020.2982383}}

@ARTICLE{Boxiang:24,
	author={He, Boxiang and Wang, Fanggang and Cheng, Julian},
	journal={IEEE Trans. Wireless Commun.}, 
	title={Joint Secure Transceiver Design for Integrated Sensing and Communication}, 
	year={2024},
	month={Oct},
	volume={23},
	number={10},
	pages={13377-13393},
	doi={10.1109/TWC.2024.3400849}}

@ARTICLE{Zhenyao:23,
	author={He, Zhenyao and Xu, Wei and Shen, Hong and Ng, Derrick Wing Kwan and Eldar, Yonina C. and You, Xiaohu},
	journal={IEEE J. Sel. Areas Commun.}, 
	title={Full-Duplex Communication for ISAC: Joint Beamforming and Power Optimization}, 
	year={2023},
	month={Jun},
	volume={41},
	number={9},
	pages={2920-2936},
	doi={10.1109/JSAC.2023.3287540}}

@ARTICLE{Ziang:23,
	author={Liu, Ziang and Aditya, Sundar and Li, Hongyu and Clerckx, Bruno},
	journal={IEEE J. Sel. Areas Commun.}, 
	title={Joint Transmit and Receive Beamforming Design in Full-Duplex Integrated Sensing and Communications}, 
	year={2023},
	month={Sep},
	volume={41},
	number={9},
	pages={2907-2919},
	doi={10.1109/JSAC.2023.3287542}}

@ARTICLE{Suzhi:19,
	author={Bi, Suzhi and Lyu, Jiangbin and Ding, Zhi and Zhang, Rui},
	journal={IEEE Wireless Commun.}, 
	title={Engineering Radio Maps for Wireless Resource Management}, 
	year={2019},
	month={Feb},
	volume={26},
	number={2},
	pages={133-141},
	doi={10.1109/MWC.2019.1800146}}

@ARTICLE{Yang:25,
	author={Cao, Yang and Duan, Lingjie and Zhang, Rui},
	journal={IEEE Trans. Wireless Commun.}, 
	title={Sensing for Secure Communication in ISAC: Protocol Design and Beamforming Optimization}, 
	year={2025},
	month={Feb},
	volume={24},
	number={2},
	pages={1207-1220},
	doi={10.1109/TWC.2024.3506628}}

\end{document}